\DeclareMathOperator*{\argmin}{arg\,min}
\newcommand{\mat}[1]{\ensuremath{\begin{bmatrix} #1 \end{bmatrix}}}				
\newcommand{\ol}[1]{\overline{#1}}
\newcommand{\ul}[1]{\underline{#1}}
\newcommand{\eps}{\epsilon}
\renewcommand{\Pr}[1]{\textnormal{Pr}\left(#1\right)}
\newcommand{\normsq}[1]{\left\lVert #1 \right\rVert^2}
\newcommand{\norm}[1]{\left\lVert #1 \right\rVert}
\newcommand{\norminf}[1]{\left\lVert #1 \right\rVert_{\infty}}
\renewcommand{\text}[1]{\textnormal{#1}}
\newacronym{MPC}{MPC}{model predictive control}
\newacronym{iss}{ISS}{input-to-state stability}
\newglossaryentry{xk}{type=symbols,
	sort={x},
	dimension={\ensuremath{ \mathbb{R}^{n_\mathrm{x}}}},
	name={\ensuremath{\bm{x}_k}},
	description={State at step $k$}
}
\newglossaryentry{xhk}{type=symbols,
	sort={x},
	dimension={\ensuremath{ \mathbb{R}^{n_\mathrm{x}} }},
	name={\ensuremath{\hat{\bm{x}}_k}},
	description={Measured state at step $k$}
}
\newglossaryentry{xk1}{type=symbols,
	sort={x},
	dimension={\ensuremath{ \mathbb{R}^{n_\mathrm{x}} }},
	name={\ensuremath{\bm{x}_{k+1}}},
	description={State at step $k+1$}
}
\newglossaryentry{xhk1}{type=symbols,
	sort={x},
	dimension={\ensuremath{ \mathbb{R}^{n_\mathrm{x}} }},
	name={\ensuremath{\hat{\bm{x}}_{k+1}}},
	description={Measured state at step $k+1$}
}
\newglossaryentry{uk}{type=symbols,
	sort={u},
	dimension={\ensuremath{ \mathbb{R}^{n_\mathrm{u}} }},
	name={\ensuremath{\bm{u}_k}},
	description={Input at step $k$}
}
\newglossaryentry{yk}{type=symbols,
	sort={y},
	dimension={\ensuremath{ \mathbb{R}^{n_y} }},
	name={\ensuremath{\bm{y}_k}},
	description={Output at step $k$}
}
\newglossaryentry{dk}{type=symbols,
	sort={d},
	dimension={\ensuremath{\mathcal{D}}},
	name={\ensuremath{\bm{d}_k}},
	description={Disturbance at step $k$}
}
\newglossaryentry{Bd}{type=symbols,
	sort={Bd},
	dimension={\ensuremath{ \mathbb{R}^{n\times o} }},
	name={\ensuremath{\bm{E}}},
	description={Measured state at step $k+1$}
}
\newglossaryentry{feasA}{type=symbols,
	sort={feas},
	dimension={\ensuremath{ \mathbb{R}^{N-L}}},
	name={\ensuremath{\mathcal{A}_\text{F}}},
	description={Feasible decision variables}
}
\newglossaryentry{OCP}{type=symbols,
	sort={ocp},
	dimension={OCP has no dimension},
	name={\ensuremath{\mathbb{P}}},
	description={Feasible decision variables}
}
\newglossaryentry{termSet}{type=symbols,
	sort={terminalset},
	dimension={No},
	name={\ensuremath{\mathcal{Z}}_{\text{f}}},
	description={TerminalSet}
}
\newglossaryentry{termSetNoise}{type=symbols,
	sort={terminalset},
	dimension={No},
	name={\ensuremath{\hat{\mathcal{Z}}_{\text{f}}}},
	description={TerminalSet}
}
\newglossaryentry{tightX}{type=symbols,
	sort={tightX},
	dimension={\ensuremath{ \mathbb{R}^{n_\mathrm{x}}}},
	name={\ensuremath{\mathcal{Z}}},
	description={TerminalSet}
}
\newglossaryentry{X}{type=symbols,
	sort={X},
	dimension={\ensuremath{ \mathbb{R}^{n_\mathrm{x}}}},
	name={\ensuremath{\mathcal{X}}},
	description={TerminalSet}
}
\newglossaryentry{U}{type=symbols,
	sort={U},
	dimension={\ensuremath{ \mathbb{R}^{n_\mathrm{u}}}},
	name={\ensuremath{\mathcal{U}}},
	description={InputConstraints}
}
\newglossaryentry{tightXnoise}{type=symbols,
	sort={tightX},
	dimension={No},
	name={\ensuremath{\hat{\mathcal{Z}}}},
	description={TerminalSet}
}
\newglossaryentry{numSamp}{type=symbols,
	sort={numSamp},
	dimension={No},
	name={\ensuremath{N_\text{S}}},
	description={Number of samples}
}
\newglossaryentry{numDiscSamp}{type=symbols,
	sort={numSamp},
	dimension={No},
	name={\ensuremath{N_\text{D}}},
	description={Number of samples}
}
\newglossaryentry{tightU}{type=symbols,
	sort={tightneedU},
	dimension={No},
	name={\ensuremath{\hat{\mathcal{U}}_l}},
	description={Feasible decision variables}
}
\newglossaryentry{firstSet}{type=symbols,
	sort={firststepconstraint},
	dimension={No},
	name={\ensuremath{\hat{\mathcal{Z}}_{I}}}, 
	description={Feasible decision variables}
}
\newglossaryentry{wk}{type=symbols,
	sort={w},
	dimension={\ensuremath{ \mathcal{W} }},
	name={\ensuremath{\bm{w}_k}},
	description={General / extended disturbance at step $k$}
}
\newglossaryentry{w}{type=symbols,
	sort={w1},
	dimension={\ensuremath{ \mathbb{R}^{n_\mathrm{w}} }},
	name={\ensuremath{\bm{w}}},
	description={General / extended disturbance}
}
\newglossaryentry{ulaw}{type=symbols,
	sort={k},
	dimension={\ensuremath{ \mathbb{R}^{n_\mathrm{u}} }},
	name={\ensuremath{\bm{\kappa}}},
	description={General / extended disturbance}
}
\newglossaryentry{noise}{type=symbols,
	sort={es},
	dimension={\ensuremath{\mathcal{M}}},
	name={\ensuremath{\bm{\mu}}},
	description={Measurement noise was epsilon before}
}
\newglossaryentry{epsk}{type=symbols,
	sort={e},
	dimension={\ensuremath{\mathcal{M}}},
	name={\ensuremath{\bm{\mu}_k}},
	description={Measurement noise at step $k$}
}
\newglossaryentry{epsk1}{type=symbols,
	sort={e},
	dimension={\ensuremath{\mathrm{THISISWRONGCHANGEIT}}},
	name={\ensuremath{\bm{\mu}_{k+1}}},
	description={Measurement noise at step $k+1$}
}
\newglossaryentry{epsl}{type=symbols,
	sort={e},
	dimension={\ensuremath{ \mathcal{E}_{\mu,l}}},
	name={\ensuremath{\bm{A}\bm{\eps}_{l|k}}},
	description={Set for the influence of the measurement noise on the predicted step $l$}
}
\newglossaryentry{epsl1}{type=symbols,
	sort={e},
	dimension={\ensuremath{ \mathcal{E}_{\mu,1}}},
	name={\ensuremath{\bm{A}\bm{\eps}_{1|k}}},
	description={Set for the influence of the measurement noise on the predicted step $l=1$}
}
\newglossaryentry{epsL}{type=symbols,
	sort={e},
	dimension={\ensuremath{ \mathcal{E}_{\mu,L}}},
	name={\ensuremath{\bm{A}\bm{\eps}_{L|k}}},
	description={Set for the influence of the measurement noise on the predicted step $l=L$}
}
\newglossaryentry{udata}{type=symbols,
	sort={u},
	dimension={\ensuremath{ \mathbb{R}^{n_\mathrm{u}} }},
	name={\ensuremath{\bm{u}^{\textnormal{d}}}},
	description={Input data}
}
\newglossaryentry{ddata}{type=symbols,
	sort={d},
	dimension={\ensuremath{ \mathbb{R}^{n_\mathrm{d}} }},
	name={\ensuremath{\bm{d}^{\textnormal{d}}}},
	description={Disturbance data}
}
\newglossaryentry{xdata}{type=symbols,
	sort={x},
	dimension={\ensuremath{ \mathbb{R}^{n_\mathrm{x}} }},
	name={\ensuremath{\bm{x}^{\textnormal{d}}}},
	description={State data}
}
\newglossaryentry{roa}{type=symbols,
	sort={X},
	name={\ensuremath{\mathcal{X}_0}},
	description={Region of attraction for the state}
}
\newtheorem{theorem}{Theorem}
\newtheorem{lemma}{Lemma}
\newtheorem{proposition}{Proposition}
\newtheorem{remark}{Remark}
\newtheorem{corollary}{Corollary}
\newtheorem{definition}{Definition}
\newtheorem{assumption}{Assumption}
\begin{document}

\sptitle{Article Category}

\title{Data-driven Tube-Based Stochastic Predictive Control} 
\editor{This paper was recommended by Associate Editor F. A. Author.}

\author{S. KERZ\affilmark{1}}

\author{J. TEUTSCH\affilmark{1}}

\author{T. BR\"UDIGAM\affilmark{1}}

\author{M. LEIBOLD\affilmark{1}}

\author{D. WOLLHERR\affilmark{1}}

\affil{Technical University Munich, Munich, Germany} 

\corresp{CORRESPONDING AUTHOR: S. Kerz (e-mail: \href{mailto:s.kerz@tum.de}{s.kerz@tum.de})}

\markboth{PREPARATION OF PAPERS FOR IEEE OPEN JOURNAL OF CONTROL SYSTEMS}{F. A. AUTHOR {\itshape ET AL}.}

\begin{abstract}
A powerful result from behavioral systems theory known as the fundamental lemma allows for predictive control akin to Model Predictive Control (MPC) for linear time invariant (LTI) systems with unknown dynamics purely from data. While most of data-driven predictive control literature focuses on robustness with respect to measurement noise, only few works consider exploiting probabilistic information of disturbances for performance-oriented control as in stochastic MPC.
In this work, we propose a novel data-driven stochastic predictive control scheme for chance-constrained LTI systems subject to measurement noise and additive stochastic disturbances.
In order to render the otherwise stochastic and intractable optimal control problem deterministic, our approach leverages ideas from tube-based MPC by decomposing the state into a deterministic nominal state driven by inputs and a stochastic error state affected by disturbances.
By tightening constraints probabilistically with respect to the additive disturbance and robustly with respect to the measurement noise, satisfaction of original chance-constraints is guaranteed.
The resulting data-driven receding horizon optimal control problem is lightweight and recursively feasible, and renders the closed loop input-to-state stable with respect to both additive disturbances and measurement noise. 
We demonstrate the effectiveness of the proposed approach in a simulation example.
\end{abstract}

\begin{IEEEkeywords}
Data-driven control, Stochastic optimal control, Uncertain systems, Predictive control for linear systems
\end{IEEEkeywords}

\maketitle

\section{INTRODUCTION}\label{sec:introduction}
As sensor data and computational power are becoming more widely available, data-driven approaches are increasingly relevant in modern control applications. 
Recently, direct data-driven control design based on Willems' lemma~\cite{willems2005note} has emerged as an appealing alternative to the classical approach of first constructing a model from data based on system identification methods. 
Informally, the so-called \emph{fundamental lemma} states that for discrete-time linear time-invariant (LTI) systems, the time-shifted vectors of any input-output trajectory generated by a persistently exciting input signal span the vector space of all (fixed length) input-output trajectories of the system. As a consequence, discrete-time LTI systems can be represented by a single measured trajectory, enabling control and analysis problems to be solved directly from trajectory data~\cite{markovsky2008,de2019formulas,datainformativity,berberichECC2020,Yin2021MaxLike}. A concise and comprehensive recent review is provided in~\cite{behavioraltheory2021}. 
By replacing the model with trajectory data, one effectively obtains a non-parametric representation of the subspace spanning the system behavior. This behavioral subspace can be directly searched by varying the coefficients of the linear combination of the basis (or \emph{library of trajectories}~\cite{behavioraltheory2021}), making the framework naturally well suited for finding optimal future input-output sequences within data-driven (or \emph{data-enabled}) predictive control~\cite{ACC2015,coulson2019data}.
Data-driven predictive controllers have since been modified to apply for different classes of systems~\cite{faulwasser2022journey,huang.KernelDeepC,Berberich.2022} and give theoretical guarantees for various settings.
Several works consider data perturbed by measurement noise and provide robust extensions of data-driven predictive control, akin to robust MPC (RMPC)~\cite{berberich2020data, berberich2020TrackingMPC,berberich2020robust,berberich2021design,bongard2022stabAnal,Huang.2023}.
Few works however consider a setting akin to SMPC, i.e., avoid overly conservative performance by exploiting probabilstic knowledge of stochastic disturbances.
\subsubsection*{Related work}
Systems subject to stochastic disturbances are considered in~\cite{coulson2019regularized,coulson2021distributionally} for a very general setting in which no knowledge about the probability distribution of the disturbance is assumed and the data available are inexact. Strong probabilistic guarantees on out-of-sample performance are given for the resulting open-loop data-driven optimal control problem, whereas probabilistic constraint satisfaction is enforced for the worst-case probability distribution that would explain the data. However, closed-loop properties and recursive feasibility of the optimal control problem are not considered. 
In~\cite{wang2022InnovationEstimation}, the authors present a data-driven predictive control scheme for systems affected by zero-mean white Gaussian process and measurement noise. By extending Willems' fundamental lemma to incorporate innovation data, the innovation form of the underlying state space system is represented from data. If the innovation data are exact, the resulting predictor is optimal, but ensuring stability and robustness of the closed-loop system remains an open research challenge.

In literature, only one line of work~\cite{pan2022StochasticLemma,pan2022ddsmpc,Pan2022DDoutputSMPC} falls into the category of data-driven SMPC, i.e., utilizes distributional knowledge of uncertainties for performance-oriented control based on chance-constraints, and comes with certificates for recursive feasibility and stability.  
In~\cite{pan2022StochasticLemma}, Pan et al. presented an extension of the fundamental lemma for stochastic LTI systems that leverages polynomial chaos expansion (PCE). Due to the linearity of PCE coefficients, stochastic variables described by those coefficients can be propagated through the dynamics in full. Applied to systems subject to stochastic additive disturbances with known probability distribution, this allows for a deterministic reformulation of the otherwise intractable stochastic optimal control problem. The resulting data-driven SMPC scheme is rendered recursively feasible and practically stable with backup initial conditions~\cite{pan2022ddsmpc} and is extended to the input-output case in~\cite{Pan2022DDoutputSMPC}.
Since distributional knowledge informs predictions in full, the approach is very appealing for settings in which the distribution of the disturbance is known exactly, but comes at the cost of relatively heavy online computations. 

\subsubsection*{Contribution}
In this work, we present a novel data-driven SMPC scheme for the performance-oriented control of unknown LTI systems subject to stochastic disturbances based on trajectory data. 
By employing ideas from tube-based MPC, we provide a deterministic reformulation of the stochastic optimal control problem that is lightweight, recursively feasible and leads to a stable closed loop.
Similarly to~\cite{pan2022ddsmpc}, we assume that disturbance realizations can be measured or estimated offline, before the control phase. In contrast to~\cite{pan2022ddsmpc}, we consider a setting in which no further distributional information is available, and an additional uncertainty, in the form of online state measurement noise, needs to be accounted for.

In order to combine the deterministic fundamental lemma with stochastic dynamics, we split the state into a deterministic nominal part affected by inputs and a stochastic error part affected by disturbances. By formulating the optimal control problem with respect to the nominal state, the optimization problem is rendered deterministic. Offline, before the actual control phase, probabilistic error predictions are used to appropriately tighten constraints on the nominal state such that original state constraints are met with pre-specified probability level. 
We represent the decomposed dynamics in the data-driven setting, and show how both the error predictions for the probabilistc constraint tightening and an additional robust constraint tightening with respect to bounded online measurement noise can be computed from data.
To render the closed loop stable and the optimal control problem recursively feasible, we employ ideas from classical SMPC in the design of a data-driven robust first step constraint~\cite{lorenzen2017stochastic}, stochastic tubes~\cite{mesbah2016} with tightened constraint sets~\cite{lorenzen2016constraint}, terminal ingredients~\cite{goulart2006optimization,RawlingsMayneDiehl2017}, and a pre-stabilizing data-driven state feedback, with gains derived from the initially measured trajectory~\cite{de2019formulas}.

The main contribution is a recursively feasible and stable tube-based data-driven predictive control scheme that
leverages chance constraints to efficiently control against process disturbances in the presence of measurement noise.

\subsubsection*{Outline}
The remainder of this work is structured as follows. Section~\ref{sec:Problem} states the problem setting and all standing assumptions. Section~\ref{sec:Prelim} introduces preliminaries on tube-based MPC. Section~\ref{sec:DDSMPC} proposes the data-driven SMPC algorithm, and Section~\ref{sec:properties} presents ingredients for closed loop certificates. Section~\ref{sec:Sim} tests the algorithm in simulation. Section~\ref{sec:Discussion} discusses the case of noisy offline data, and we conclude the paper in Section~\ref{sec:conclusion}.
\subsection*{NOTATION}\label{sec:Notation}
We write boldface uppercase (resp., lowercase) letters to denote matrices (resp., vectors) and $\bm{0}$ for any zero matrix or vector. $\bm{I}_n$ is the $n \times n$ identity matrix and $\mathbb{N}_a^b$ abbreviates the integer sequence $\left\{a,\ldots, b\right\}$. $\bm{A}\succ 0$ ($\bm{A}\succeq 0$) means matrix $\bm{A}$ is positive (semi-)definite. Sequences of vectors are written as $\bm{s}_{[1,\,N]} = \left(\bm{s}_1, \dots, \bm{s}_N\right)$, whereas $\underline{\bm{s}}_{[1,\,N]} = \left[\bm{s}_1^{\top}, \dots, \bm{s}_N^{\top}\right]^{\top}$ denotes the column vector that results from stacking all vectors of the sequence. The weighted 2-norm $\sqrt{\bm{x}^\top\bm{Q}\bm{x}}$ of $\bm{x}$ is abbreviated by $\norm{\bm{x}}_{\bm{Q}}$. For a sequence $\bm{s}_{[1,\,N]}$, let $\norm{\bm{s}_{[1,\,N]}}_\infty=\sup\{|\bm{s}_1|,\ldots,|\bm{s}_N|\}$, and $\bm{s}_{[1,\,N]}\subset\mathcal{S}$ denote that $\bm{s}_i\in\mathcal{S}$ for all $i\in\mathbb{N}_1^N$.
The probability of an event $X$ is denoted by $\Pr{X}$.
For any sequence of vectors $\bm{s}_{[0,\,N]}$, the corresponding Hankel matrix $\bm{H}_{L}\left(\bm{s}_{[0,\,N]}\right)$ is
\begin{equation} \label{def:hankel}
\bm{H}_{L}\left(\bm{s}_{[0,\,N]}\right) = \mat{\bm{s}_0 & \bm{s}_1 & \cdots & \bm{s}_{N-L+1} \\ 
	\bm{s}_1 & \bm{s}_2 & \cdots & \bm{s}_{N-L+2} \\
	\vdots & \vdots & \ddots & \vdots \\
	\bm{s}_{L-1} & \bm{s}_{L} & \cdots & \bm{s}_{N}}.
\end{equation}
For a matrix $\bm{H}$ and appropriate integers $a,b$, $\left[\bm{H}\right]_{a}$ is the $a$-th row vector of $\bm{H}$ and $\left[\bm{H}\right]_{[a,b]}$ represents the submatrix of $\bm{H}$ that is comprised of rows $a,\ldots,b$.
For any sets $\mathcal{A},\mathcal{B}\subset\mathbb{R}^n$, we write the Minkowski set addition as $\mathcal{A} \oplus \mathcal{B}=\{a+b \mid a \in \mathcal{A}, b \in \mathcal{B}\}$, the Pontryagin set difference as $\mathcal{A} \ominus \mathcal{B} = \{a \in \mathcal{A} \mid (\forall  b \in \mathcal{B})\,a + b \in \mathcal{A}  \}$ and set multiplication as $\bm{K}\mathcal{A} = \{\bm{K} a \mid a \in \mathcal{A}\}$.
In the context of predictive control, we write $\bm{x}_{l|k}$ for the predicted state $l$ steps ahead of $\bm{x}_k$.
A continuous function $\gamma: \mathbb{R}_0^+ \rightarrow \mathbb{R}_0^+$ is of class $\mathscr{K}$ if $\gamma$ is strictly increasing and $\gamma(0) = 0$.
    If $\gamma\in\mathscr{K}$ and $\gamma$ is unbounded, then $\gamma\in\mathscr{K}_{\infty}$.
    If $\gamma: \mathbb{R}_0^+ \rightarrow \mathbb{R}_0^+$ is continous, strictly decreasing and ${\lim\limits_{t \rightarrow \infty}\gamma(t) = 0}$, then it is of class $\mathscr{L}$.
    If $\delta\in \mathbb{R}_0^+ \times \mathbb{R}_0^+ \rightarrow \mathbb{R}_0^+$ is such that $\delta(\cdot,t) \in \mathscr{K}$ and $\delta(r,\cdot) \in \mathscr{L}$,  $\delta$ is of class $\mathscr{KL}$.

\section{PROBLEM SETUP}\label{sec:Problem}
We consider a linear, discrete-time time-invariant system
\begin{subequations}\label{eq:system}
	\begin{align}
	\gls*{xk1} &= \bm{A} \gls*{xk} + \bm{B} \gls*{uk} + \gls*{Bd} \gls*{dk}, \label{eq:system1}\\
	\gls*{xhk} &= \gls*{xk} + \gls*{epsk}\label{eq:system2}
	\end{align}
\end{subequations}
with state $\gls*{xk}\in \glsd*{xk}$, input $\gls*{uk}\in \glsd*{uk}$ and controllable pairs $(\bm{A},\bm{B})$, $(\bm{A},\gls*{Bd})$.
System~\eqref{eq:system} is subject to two kinds of uncertainty: 
A disturbance $\gls*{dk}\in \glsd*{ddata}$ in the dynamics, akin to process noise, and a state measurement error $\gls*{epsk} \in \glsd*{xk}$, akin to measurement noise.
The problem is to design a data-driven stochastic predictive control algorithm that stabilizes the origin 
while respecting input and state constraints. 
The predictive control algorithm centers around repeated solutions of a stochastic optimal control problem (OCP)
\begin{subequations} \label{eq:ocp_intro}
	\begin{align}
		\underset{\bm{U}_k}{\text{minimize}}~&\mathbb{E}\left(\sum\limits_{l = 0}^{L-1} 
		J_\text{s}(\bm{x}_{l|k},\bm{u}_{l|k})+J_\text{f}(\bm{x}_{L|k})\right)\label{eq:ocp_intro_cost}\\	
		\text{s.t.}~ &\bm{x}_{0|k} = \bm{x}_k, \\		
		&\bm{x}_{l|k}~\text{evolves according to \eqref{eq:system1}},\label{ocpintro_evolve}\\
		& \bm{u}_{l|k} \in \gls*{U}\qquad\forall l\in\mathbb{N}_0^{L-1},\label{eq:ocp_intro_InputCon}\\
		& \Pr{\bm{x}_{l|k} \in \gls*{X}}\geq p\qquad\forall l\in\mathbb{N}_0^{L-1},\label{eq:ocp_intro_chancecon}
	\end{align}
\end{subequations}
where $\mathbb{E}(\cdot)$ denotes the expected value, and $J_\text{s}$ and $J_\text{f}$ represent stage and terminal costs, respectively. State constraints must be satisfied in probability, as specified by the chance constraint~\eqref{eq:ocp_intro_chancecon} with user-specified risk parameter $p\in (0,1]$.
Both $\gls*{U}$ and $\gls*{X}$ are user-specified convex polytopic sets
\begin{subequations}
\begin{align}
\gls*{U} &= \left\{\bm{u} \in \glsd*{uk} ~\left|~ \bm{G}_u \bm{u} \le \bm{g}_u \right.\right\}, \label{eq:inputcons}\\
\gls*{X} &= \left\{\bm{x} \in \glsd*{xk} \hspace{2.5pt} ~\left|~ \bm{G}_x \bm{x} \le \bm{g}_x \right.\right\}\label{eq:statecons}.
\end{align}
\end{subequations}
Problem~\eqref{eq:ocp_intro} is solved in a receding horizon fashion, where in each time step $k$, the first input $\bm{u}_{0|k}$ of the minimizing solution $\bm{U}_k=\left(\bm{u}_{0|k},\ldots,\bm{u}_{L-1|k}\right)$ is applied to the actual system~\eqref{eq:system}.
Due to the probabilistic nature of both the system evolution~\eqref{ocpintro_evolve} and the chance constraints~\eqref{eq:ocp_intro_chancecon}, problem~\eqref{eq:ocp_intro} is intractable. One aim of this work is to provide a tractable reformulation based on data.
\subsection{Available data}
The system matrices $\bm{A}$, $\bm{B}$, and $\gls*{Bd}$ of system~\eqref{eq:system1} are unknown. In their stead, we use trajectory data to design the predictive control scheme.
\begin{assumption}\label{assum:trajData}
A trajectory $(\gls*{udata}_{[0,N-1]},\,\gls*{ddata}_{[0,N-1]},\,\gls*{xdata}_{[0,N-1]})$ generated by~\eqref{eq:system1} is available, where both $\gls*{udata}_{[0,N-1]}$ and $\gls*{ddata}_{[0,N-1]}$ are persistently exciting of order $L+n_\mathrm{x}+1$. 
\end{assumption}
\begin{definition}[Persistency of excitation] \label{def:persistency} 
	A sequence of vectors {$\bm{s}_{[0,N-1]} \in \mathbb{R}^m$} is persistently exciting of order $L$ if the Hankel matrix $\bm{H}_{L}\left(\bm{s}_{[0,N-1]}\right)$ has full row rank $mL$.
\end{definition}
\begin{remark}
For ease of exposition, and as in related works~\cite{pan2022ddsmpc,wang2022InnovationEstimation}, we consider full state measurements, instead of output measurements. An extension to the input-output setting can follow from considering an extended state of past outputs, as in for example~\cite{berberich2021design,Pan2022DDoutputSMPC}.

If disturbance recordings cannot be accessed, they may be estimated from input-state data as in~\cite{pan2022ddsmpc}. The availability of noise-free offline data is restrictive. We discuss the influence of noisy data in Section~\ref{sec:Discussion} and show its effect in a simulation example in Section~\ref{sec:Sim}.
\end{remark}
In order to guarantee the satisfaction of state constraints~\eqref{eq:ocp_intro_chancecon}, we employ a stochastic constraint tightening with respect to the disturbance $\bm{d}$, and a robust constraint tightening with respect to the measurement noise $\gls*{noise}$. 
To this end, we make the following assumption.
\begin{assumption}\label{assum:DisturbanceBounds}
The following statements hold for disturbance $\bm{d}$ and measurement noise $\gls*{noise}$:\\
    (a) In addition to the data in Assumption~\ref{assum:trajData}, $\gls*{numSamp}$ samples of disturbance sequences $\bm{d}^{(i)}_{[0,\,L-1]}$, $i=0,\ldots \gls*{numSamp}$ are available.\\
    (b) The noise $\gls*{noise}$ is bounded by a known polytopic set
        \begin{equation}
            \glsd*{epsk} = \left\{\gls*{noise} \in \glsd*{xk} ~\left|~ \bm{G}_{\mu} \gls*{noise} \le \bm{g}_{\mu} \right.\right\}.
        \end{equation}
        that contains the origin.\\
    (c) The disturbance $\bm{d}$ is bounded by a known polytopic set
        \begin{equation}\label{eq:distset}
            \glsd*{dk} = \left\{ \bm{d} \in \glsd*{ddata} ~\left|~ \bm{G}_d \bm{d} \le \bm{g}_d \right.\right\}
        \end{equation}
        that contains the origin.
\end{assumption}
Items (a) and (b) of Assumption~\ref{assum:DisturbanceBounds} are necessary by virtue of stochastic and robust constraint handling, respectively. The addition of item (c) allows for recursive feasibility and stability in a robust sense, i.e., both properties can only be guaranteed in probability if either $\bm{d}$ or $\gls*{noise}$ are unbounded. 
\subsection{Desired properties}\label{sec:iss_prelim}
We want to guarantee that the OCP remains solvable while the system is steered by the predictive controller. Given that the OCP is feasible at initial time, \emph{recursive feasibility} guarantees that constraint satisfaction is possible for all time.
\begin{definition}[Recursive Feasibility]\label{def:recFeasPrelim}
    A receding horizon OCP is \emph{recursively feasible} if the existence of an admissible solution $\bm{U}_k$ implies the existence of an admissible solution $\bm{U}_{k+1}$ at the next time step.
\end{definition}
Additionally, we desire input-to-state stability of the closed-loop system
\begin{equation} \label{eq:sys_iss}
	\gls*{xk1} = \bm{f}\left(\gls*{xk}, \gls*{wk}\right),
\end{equation}
with respect to $\gls*{wk} \in  \glsd*{w}$, which is an extended disturbance term comprising all uncertainty (both from the measurement noise and the additive disturbance). 
For the following definitions, let the extended disturbance be bounded by a compact support set $\glsd*{wk}$ that contains the origin. Furthermore, let the origin be an equilibrium of the closed loop system for zero disturbance, $\bm{f}\left(\bm{0}, \bm{0}\right) = \bm{0}$, as is the case for system~\eqref{eq:system1}. Denote by $\bm{\phi}\left(k,\bm{x}_0, \bm{w}_{[0,k]}\right)$ the solution to~\eqref{eq:sys_iss} at time step $k$ for initial state $\bm{x}_0$ and disturbance sequence $\bm{w}_{[0,k]}$.
\begin{definition}[Robust positive invariant set]
A set $\gls*{roa}$ is \emph{robust positive invariant} for system~\eqref{eq:sys_iss} if
$\bm{x}_0\in\gls*{roa} \implies \bm{\phi}\left(k,\bm{x}_0, \bm{w}_{[0,k]}\right)\in\gls*{roa}$
for all $k\in\mathbb{N}$, and for all disturbance realizations $\bm{w}_{[0,k]}\subset\glsd*{wk}$.
\end{definition}
\begin{definition}[Input-to-state stability {~\cite[Def. 19]{goulart2006optimization}}] \label{def:iss} 
Let $\gls*{roa}\subseteq\glsd*{xk}$ be a closed robust positive invariant set for system \eqref{eq:sys_iss} with the origin in its interior. System \eqref{eq:sys_iss} is \emph{input-to-state stable (ISS)} with respect to the disturbance and with region of attraction $\gls*{roa}$, if there exist functions $\beta\in\mathscr{KL}$ and $\gamma\in\mathscr{K}$ such that for all $k \in \mathbb{N}_0$, $\bm{x}_0 \in \gls*{roa}$, and $\gls*{wk}\in\glsd*{wk}$
\begin{equation} \label{eq:iss_cond}
	\norm{\bm{\phi}\left(k,\bm{x}_0, \bm{w}_{[0,k]}\right)} \le \beta\left(\norm{\bm{x}_0},k\right) + \gamma\left(\norm{\bm{w}_{[0,k]}}_\infty\right).
\end{equation}
\end{definition}
By causality, the same definition would result if $\bm{w}_{[0,k]}$ is replaced by $\bm{w}_{[0,k-1]}$~\cite{Roset.2008}.
Definition \ref{def:iss} implies that the origin of the undisturbed system $\gls*{xk1} = \bm{f}\left(\gls*{xk}, \bm{0}\right)$ is asymptotically stable with region of attraction $\gls*{roa}$.
It is equivalent to \emph{robust asymptotic stability} of the origin as defined in~\cite[Definition 4.43]{RawlingsMayneDiehl2017}.

\section{TUBE-BASED MPC}\label{sec:smpc}\label{sec:Prelim}
In this work, we present a tractable deterministic reformulation of the stochastic OCP~\eqref{eq:ocp_intro} that is based on trajectory data. This reformulation uses ideas from stochastic and robust tube-based MPC~\cite{mesbah2016,RawlingsMayneDiehl2017}, which are introduced in the following. 
In tube-based MPC, a key idea is to decompose the state $\bm{x}$ into a deterministic nominal part $\bm{z}$ and a (possibly) stochastic error part $\bm{e}$,
\begin{equation}
    \bm{x}_{k} = \bm{z}_{k} + \bm{e}_{k}\label{eq:state_decomp}.
\end{equation}
Given the underlying system~\eqref{eq:system1}, the resulting dynamics are
\begin{subequations}
  \begin{align}
    \bm{z}_{k+1} &= \bm{A}\bm{z}_{k} + \bm{B} \bm{u}_{k},\quad\bm{x}_0=\bm{z}_0,\label{eq:nominal_dynamics}\\
    \bm{e}_{k+1} &= \bm{A}\bm{e}_{k} + \gls*{Bd} \bm{d}_{k},\quad\,\bm{e}_0=\bm{0}.\label{eq:error_dynamics}
\end{align}   
\end{subequations}
By only considering the deterministic nominal dynamics~\eqref{eq:nominal_dynamics} in the predictive controller, the OCP~\eqref{eq:ocp_intro} is rendered deterministic. The expected value in the cost function~\eqref{eq:ocp_intro_cost} can then be computed explicitly, and all terms depending on the state error can be neglected, since they are constant.

For any finite time step $k$, the error $\bm{e}$ is bounded and the actual state $\bm{x}$ is in a neighborhood of the nominal state $\bm{z}$. For an $L$-step trajectory, these neighborhoods form a tube around the nominal state predictions.
The original state $\bm{x}$ satisfies user-specified constraints $\gls*{X}$ if the constraints on the nominal state $\bm{z}$ are tightened with respect to that tube.
Let us denote the tightened constraint sets for the nominal states by $\gls*{tightX}_l$, i.e., $\bm{z}_{l|k}\in\gls*{tightX}_l$ replaces \eqref{eq:ocp_intro_chancecon} and $\{\gls*{tightX}_0,\ldots,\gls*{tightX}_{L-1}\}$ describes the nominal state tube.

A robust constraint tightening (as in RMPC) computes $\gls*{tightX}_l$ such that for all possible realizations of uncertainty
\begin{equation}\label{eq:RobustTubeImplies}
    \bm{z}_{l|k}\in\gls*{tightX}_l\implies \bm{x}_{l|k}\in\gls*{X}.
\end{equation}
Naturally, $\gls*{tightX}_l$ should be as large as possible, as every other choice introduces unnecessary conservatism. The maximal sets that satisfy~\eqref{eq:RobustTubeImplies} are
$\gls*{tightX}_l=\gls*{X}\ominus \mathcal{E}_l$,
where $\mathcal{E}_l=\{\bm{e}\in\glsd*{xk}\mid \left(\exists\bm{d}_0,\ldots,\bm{d}_{l-1}\in\mathcal{D}\right)~ \bm{e}=\bm{e}_l~\text{from~\eqref{eq:error_dynamics}}\}$ encompasses the resulting state errors $\bm{e}_{l|k}$ for all possible realizations of the disturbance.

In a probabilistic constraint tightening (as in SMPC), the state constraints~\eqref{eq:statecons} are required to hold with probability level $p$ for each future predicted state $\bm{x}_{l|k}$, that is
\begin{equation}\label{eq:StochasticTubeImplies}
    \bm{z}_{l|k}\in\gls*{tightX}_l\implies \Pr{\bm{x}_{l|k}\in\gls*{X}}>p,
\end{equation}
where the conditional dependency on $\bm{x}_{0|k} = \bm{x}_{k}$ is understood and omitted in the following.
With $\gls*{X}$ defined as in \eqref{eq:statecons} and the state decomposed as in~\eqref{eq:state_decomp}, the right-hand-side of~\eqref{eq:StochasticTubeImplies} 
can be split into two separate expressions
\begin{subequations}\label{twoConsSMPC}
\begin{align}
    &\bm{G}_{x} \bm{z}_{l|k} \le \tilde{\bm{\eta}},\label{eq:detCon}\\
    &\Pr{\tilde{\bm{\eta}} \le \bm{g}_{x} - \bm{G}_{x} \bm{e}_{l|k} } \ge p,\label{eq:probCon}
\end{align}
\end{subequations}
with the introduction of a new parameter $\tilde{\bm{\eta}} \in \mathbb{R}^{r_x}$.
The deterministic constraint~\eqref{eq:detCon} is used online, i.e., replaces the chance constraints~\eqref{eq:ocp_intro_chancecon} in the optimal control problem. 
Offline, before the actual control phase, the parameter $\tilde{\bm{\eta}}$ is computed such that~\eqref{eq:probCon} holds. Tightened nominal constraints are defined as
\begin{equation} \label{eq:stateprobcons_tight}
\gls*{tightX}_l = \left\{\bm{z} \in \glsd*{xk}  ~\left|\right.~ \bm{G}_x \bm{z} \le \bm{\eta}_l\right\},
\end{equation}
and to minimize conservatism, $\bm{\eta}_l$ is determined by solving
\begin{subequations} \label{eq:stateprobcons_eta}
	\begin{align}
	\bm{\eta}_l &= \max\limits_{\tilde{\bm{\eta}}} \tilde{\bm{\eta}} \\
	&\text{s.t. } \Pr{\tilde{\bm{\eta}} \le \bm{g}_{x} - \bm{G}_{x} \bm{e}_{l} } \ge p,\label{eq:ChanceOptimConstraint}
	\end{align}
\end{subequations}
with $\max(\cdot)$ applied element-wise.
If the true disturbance distribution is unknown, and a finite number of $\gls*{numSamp}$ disturbance sequences $\bm{d}_{[0,l-1]}$ are available, the chance-constrained optimization problem~\eqref{eq:stateprobcons_eta} may be solved approximately, by reformulating it based on computed state error samples $\bm{e}_{l}$.
To that end, a multitude of methods are available, see the recent survey~\cite{Kucukyavuz.2022}.
For simplicity, we employ a scenario approximation~\cite{Campi.2011}, in which~\eqref{eq:stateprobcons_eta} is reformulated into a large-scale linear program where the chance constraint~\eqref{eq:ChanceOptimConstraint} is replaced by deterministic constraints
\begin{equation}\label{eq:DetEta}
    \tilde{\bm{\eta}} \le \bm{g}_{x} - \bm{G}_{x} \bm{e}_{l}
\end{equation}
and required to hold for all but $\gls*{numDiscSamp}$, $\gls*{numDiscSamp}<\gls*{numSamp}$, samples, which are discarded. A detailed description of the sampling-based solution to~\eqref{eq:stateprobcons_eta} is provided in the supplementary material.
\begin{remark}
If the distribution of the disturbance is known, the chance-constrained optimization problem~\eqref{eq:stateprobcons_eta} may be solved analytically. 
The analytic solution requires propagating the probability distribution of the disturbance through the error dynamics~\eqref{eq:error_dynamics}, which is nontrivial in general as it requires the convolution of distributions. In the data-driven setting, the stochastic fundamental lemma~\cite{pan2022StochasticLemma} may be employed.
\end{remark}
In Section~\ref{sec:DDSMPC}-\ref{sec:tightcons}, we show how state error samples $\bm{e}^{(i)}_{l}$ are computed from disturbance samples $\bm{d}^{(i)}_{[1,\,l-1]}$ in the data-driven setting, and present a data-driven formulation of robust tubes.
\section{DATA-DRIVEN STOCHASTIC MPC}\label{sec:DDSMPC}
Trajectory data generated by a persistently exciting input signal allow for a data-driven representation of discrete-time LTI systems, based on a powerful result of behavioral system theory~\cite{willems2005note}, which we state in input-state-space~\cite{willemsStateSpace}.
\begin{lemma}[Fundamental Lemma]\label{lem:fundamental}
Consider system~\eqref{eq:system1} with $\gls*{dk}=\bm{0}$. If the input sequence $\gls*{udata}_{[0,\,N-1]}$ is persistently exciting of order $L+n_\mathrm{x}+1$, then any $(L+1)$-long input-state sequence $\left(\bm{u}_{[k,\,k+L]},\bm{x}_{[k,\,k+L]}\right)$ is a valid trajectory of the system if and only if there exists $\bm{\alpha} \in \mathbb{R}^{N-L}$ such that
	\begin{equation} \label{eq:fundlemm_eq}
		\mat{\underline{\bm{u}}_{[k,\,k+L]}\\ \underline{\bm{x}}_{[k,\,k+L]}} = \mat{\bm{H}_{L+1}\left(\bm{u}^{\text{d}}_{[0,\,N-1]}\right) \\ \bm{H}_{L+1}\left(\bm{x}^{\text{d}}_{[0,\,N-1]}\right)} \bm{\alpha}.
	\end{equation}
\end{lemma}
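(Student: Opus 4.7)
The plan is to prove the two implications in the if-and-only-if statement separately, noting that the sufficiency direction is immediate from linearity while the necessity direction requires a dimension-counting argument backed by the persistent excitation hypothesis.

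For sufficiency (``$\Leftarrow$''), I would observe that each column of the stacked Hankel matrix $\bmat{\bm{H}_{L+1}(\bm{u}^d_{[0,N-1]}) \\ \bm{H}_{L+1}(\bm{x}^d_{[0,N-1]})}$ is of the form $(\underline{\bm{u}}^d_{[j,j+L]}, \underline{\bm{x}}^d_{[j,j+L]})$ for some shift $j$, which by Assumption~\ref{assum:trajData} (with $\bm{d}=\bm{0}$) is a valid $(L+1)$-long input-state trajectory of~\eqref{eq:system1}. Because~\eqref{eq:system1} is linear and time-invariant, the set of valid trajectories is a linear subspace of $\mathbb{R}^{(L+1)(n_\mathrm{u}+n_\mathrm{x})}$, so any linear combination $\bm{\alpha}$ of the columns yields a valid trajectory, which is exactly \eqref{eq:fundlemm_eq}.

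For necessity (``$\Rightarrow$''), the strategy is dimension counting. Iterating~\eqref{eq:system1} gives $\bm{x}_{k+j} = \bm{A}^j\bm{x}_k + \sum_{i=0}^{j-1}\bm{A}^{j-1-i}\bm{B}\bm{u}_{k+i}$, so every $(L+1)$-long trajectory is uniquely parametrized by the pair $(\bm{x}_k, \bm{u}_{[k,k+L]})$; the trajectory subspace therefore has dimension exactly $n_\mathrm{x} + (L+1)n_\mathrm{u}$. Combined with the sufficiency direction, which shows that the image of the stacked Hankel matrix is contained in this subspace, it remains to prove that the stacked Hankel matrix attains the full row rank $n_\mathrm{x} + (L+1)n_\mathrm{u}$. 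Once this is established, its image equals the whole trajectory subspace and every valid $(L+1)$-trajectory admits the representation~\eqref{eq:fundlemm_eq}.

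The rank claim is the technical heart of the argument and the main obstacle. I would argue by contradiction: suppose a nonzero row vector $\bmat{\bm{v}_u^\top & \bm{v}_x^\top}$ annihilates the stacked Hankel matrix. Expressing each state column via the state-space recursion and grouping terms, the annihilation condition can be rewritten as a single linear relation on a block of shifted inputs $\bm{u}^d_{[j,j+L+n_\mathrm{x}]}$ whose coefficient matrix factors through the controllability matrix $\bmat{\bm{B} & \bm{A}\bm{B} & \cdots & \bm{A}^{n_\mathrm{x}-1}\bm{B}}$. Controllability of $(\bm{A},\bm{B})$ ensures this coefficient matrix is nonzero whenever $(\bm{v}_u,\bm{v}_x)\neq 0$, so the relation would contradict persistent excitation of $\bm{u}^d$ of order $L+n_\mathrm{x}+1$ via Definition~\ref{def:persistency}. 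The delicate bookkeeping in combining the controllability factorization with the Hankel structure to produce the contradiction is where the real work lies; all other steps are either linear-algebraic or follow by the recursion and Assumption~\ref{assum:trajData}.
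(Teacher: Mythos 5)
The paper does not actually prove Lemma~\ref{lem:fundamental}: it is imported as a known result from~\cite{willems2005note} (in the input--state form of~\cite{willemsStateSpace}), so your attempt can only be measured against the standard literature proof. Your overall architecture matches that proof: sufficiency by linearity of the trajectory set, necessity by showing that the column space of the data matrix exhausts the trajectory subspace of dimension $n_\mathrm{x}+(L+1)n_\mathrm{u}$, with persistent excitation and controllability entering through a rank argument. The sufficiency half and the dimension count are correct as written.

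The necessity half, however, contains a genuine flaw in how the rank argument is set up. For $L\ge 1$ the full stacked matrix $\mat{\bm{H}_{L+1}(\bm{u}^{\text{d}}_{[0,N-1]}) \\ \bm{H}_{L+1}(\bm{x}^{\text{d}}_{[0,N-1]})}$ has $(L+1)(n_\mathrm{u}+n_\mathrm{x})$ rows but, by your own dimension count, rank at most $n_\mathrm{x}+(L+1)n_\mathrm{u}$; it therefore \emph{always} has a nontrivial left kernel, generated by the dynamics relations themselves. Concretely, with $n_\mathrm{x}=n_\mathrm{u}=1$, $L=1$, $\bm{A}=a$, $\bm{B}=b$, the row vector $(b,\,0,\,a,\,-1)$ annihilates every column $(u_j,u_{j+1},x_j,ax_j+bu_j)^\top$ for \emph{any} data, so no contradiction with persistent excitation can be extracted from it. Hence your claim that ``controllability ensures this coefficient matrix is nonzero whenever $(\bm{v}_u,\bm{v}_x)\neq 0$'' is false, and the contradiction argument as stated would prove the (false) assertion that the stacked matrix has trivial left kernel. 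The repair is standard: run the annihilator argument on the reduced matrix $\mat{\bm{H}_{L+1}(\bm{u}^{\text{d}}_{[0,N-1]}) \\ \left[\bm{H}_{L+1}(\bm{x}^{\text{d}}_{[0,N-1]})\right]_{[1,n_\mathrm{x}]}}$ (all inputs of each window plus only its initial state), which genuinely has full row rank $(L+1)n_\mathrm{u}+n_\mathrm{x}$ under persistent excitation of order $L+n_\mathrm{x}+1$ and controllability of $(\bm{A},\bm{B})$; since every column of the full stack is the image of the corresponding column of the reduced matrix under the injective trajectory parametrization $(\bm{x}_k,\bm{u}_{[k,k+L]})\mapsto(\bm{u}_{[k,k+L]},\bm{x}_{[k,k+L]})$, the column space of the full stack then equals the whole trajectory subspace. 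That full-row-rank statement is precisely the key lemma of Willems et al., and establishing it is where the real work you deferred actually lives.
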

Lemma \ref{lem:fundamental} describes a non-parametric system representation of a discrete-time LTI system, where varying $\bm{\alpha}$ on the right-hand-side of~\eqref{eq:fundlemm_eq} returns different $L$-step input-state trajectories on the left-hand-side. In the OCP of a data-driven predictive control scheme~\cite{ACC2015,coulson2019data,berberich2020data}, equation~\eqref{eq:fundlemm_eq} (or the equivalent input-output formulation) replaces the model and $\bm{\alpha}$ acts as the new decision variable.
\subsection{Data-driven representation of stabilized nominal and error dynamics} \label{sec:preprocess}
In order to represent system~\eqref{eq:system1} with past data, we extend Lemma~\ref{lem:fundamental} to the case of systems with additive disturbance. 
In the following, data Hankel matrices will be abbreviated as
$\bm{H}_u:=\bm{H}_{L+1}(\gls*{udata}_{[0,\,N-1]})$, $\bm{H}_d:=\bm{H}_{L+1}(\gls*{ddata}_{[0,\,N-1]})$, and $\bm{H}_x:=\bm{H}_{L+1}(\gls*{xdata}_{[0,\,N-1]})$.
\begin{lemma}[Extended Fundamental Lemma] \label{lem:extfundamental} 
	Consider data as in Assumption~\ref{assum:trajData}.
	Any $(L+1)$-long input-disturbance-state sequence $\left({\bm{u}}_{[k,\,k+L]},{\bm{d}}_{[k,\,k+L]},{\bm{x}}_{[k,\,k+L]}\right)$ is a valid trajectory of system~\eqref{eq:system1} if and only if there exists $\bm{\alpha} \in \mathbb{R}^{N-L}$ such that
	\begin{equation} \label{eq:extfundamental}
		\mat{\underline{\bm{u}}_{[k,\,k+L]}\\ \underline{\bm{d}}_{[k,\,k+L]}\\ \underline{\bm{x}}_{[k,\,k+L]}} = \mat{\bm{H}_u\\ \bm{H}_d \\ \bm{H}_x} \bm{\alpha}.
	\end{equation}
\end{lemma}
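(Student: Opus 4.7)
The plan is to reduce Lemma~\ref{lem:extfundamental} to Lemma~\ref{lem:fundamental} by treating the pair $(\bm{u},\bm{d})$ as an augmented input. Concretely, let $\tilde{\bm{u}}_k=\bigl[\bm{u}_k^\top,\bm{d}_k^\top\bigr]^\top$ and $\tilde{\bm{B}}=\bigl[\bm{B},\;\bm{E}\bigr]$. Then~\eqref{eq:system1} rewrites as the disturbance-free LTI system
\begin{equation*}
    \bm{x}_{k+1}=\bm{A}\bm{x}_k+\tilde{\bm{B}}\tilde{\bm{u}}_k,
\end{equation*}
to which Lemma~\ref{lem:fundamental} would apply provided (i) $(\bm{A},\tilde{\bm{B}})$ is controllable and (ii) the augmented data sequence $\tilde{\bm{u}}^{\textnormal{d}}_{[0,\,N-1]}$ is persistently exciting of order $L+n_\mathrm{x}+1$.

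First I would verify (i): since $(\bm{A},\bm{B})$ is controllable by assumption, the range of the controllability matrix of $(\bm{A},\tilde{\bm{B}})$ contains that of $(\bm{A},\bm{B})$ and is therefore all of $\mathbb{R}^{n_\mathrm{x}}$, so $(\bm{A},\tilde{\bm{B}})$ is controllable as well. Next I would establish (ii) by observing that the Hankel matrix $\bm{H}_{L+n_\mathrm{x}+1}(\tilde{\bm{u}}^{\textnormal{d}}_{[0,\,N-1]})$ is, up to a row permutation, the stacking $\bigl[\bm{H}_{L+n_\mathrm{x}+1}(\bm{u}^{\textnormal{d}}_{[0,\,N-1]})^\top,\;\bm{H}_{L+n_\mathrm{x}+1}(\bm{d}^{\textnormal{d}}_{[0,\,N-1]})^\top\bigr]^\top$, which by Assumption~\ref{assum:trajData} has the required full row rank $(n_\mathrm{u}+n_\mathrm{d})(L+n_\mathrm{x}+1)$.

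With (i) and (ii) in hand, Lemma~\ref{lem:fundamental} applied to the augmented system yields that any $(L+1)$-long trajectory $(\tilde{\bm{u}}_{[k,\,k+L]},\bm{x}_{[k,\,k+L]})$ is valid if and only if there exists $\bm{\alpha}\in\mathbb{R}^{N-L}$ with
\begin{equation*}
    \mat{\underline{\tilde{\bm{u}}}_{[k,\,k+L]}\\ \underline{\bm{x}}_{[k,\,k+L]}}=\mat{\bm{H}_{L+1}(\tilde{\bm{u}}^{\textnormal{d}}_{[0,\,N-1]})\\ \bm{H}_x}\bm{\alpha}.
\end{equation*}
Unstacking $\tilde{\bm{u}}$ back into $(\bm{u},\bm{d})$ and applying the same row permutation to the Hankel block on the right-hand side converts this identity into exactly~\eqref{eq:extfundamental}, and the equivalence between trajectories of~\eqref{eq:system1} and trajectories of the augmented disturbance-free system is definitional, completing the proof.

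The only delicate point I anticipate is step (ii): persistency of the stacked signal is strictly stronger than persistency of $\bm{u}^{\textnormal{d}}$ and $\bm{d}^{\textnormal{d}}$ in isolation, so the wording of Assumption~\ref{assum:trajData} must be read as persistency of the joint sequence (as is standard in the multi-input fundamental-lemma literature). Once that reading is fixed, the row-permutation argument above is immediate and the remainder of the proof is a mechanical translation through Lemma~\ref{lem:fundamental}.
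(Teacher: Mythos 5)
Your proposal takes essentially the same route as the paper, whose entire proof is ``considering an extended input $[\bm{u}^{\textnormal{d}}\;\bm{d}^{\textnormal{d}}]^{\top}$ and reordering the rows of the Hankel matrices accordingly''; you simply flesh out the controllability check and the row-permutation bookkeeping. Your closing caveat is well taken and worth keeping: individual persistency of excitation of $\bm{u}^{\textnormal{d}}$ and $\bm{d}^{\textnormal{d}}$ does not imply persistency of the stacked signal (stacking two full-row-rank Hankel matrices need not preserve full row rank, e.g.\ if the two sequences coincide), so Assumption~\ref{assum:trajData} must indeed be read as joint persistency --- a subtlety the paper's one-line proof glosses over.
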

\begin{proof}
Lemma \ref{lem:extfundamental} follows directly from Lemma \ref{lem:fundamental} by considering an extended input $\mat{\gls*{udata} & \gls*{ddata}}^{\top}$
and reordering the rows of the Hankel matrices accordingly.
\end{proof}
In order to counteract an inflation of the error state due to the dynamics in $\bm{A}$, we pre-stabilize the system by introducing a stabilizing state feedback controller. Accordingly, the input $\bm{u}_k$ is decomposed into a state feedback component and a new artificial input $\bm{v}_k$, $\bm{u}_k = \bm{K}\bm{x}_k + \bm{v}_k$.
\begin{remark}\label{rem:ComputeK}
Stabilizing and LQR-optimal state feedback gains $\bm{K}\in\mathbb{R}^{n_\mathrm{u}\times \glsd*{xk}}$ for nominal LTI systems can be computed from data based on~\cite[Theorem 3]{de2019formulas} and ~\cite[Theorem 4]{de2019formulas}, respectively. To that end, we retrieve a nominal (undisturbed) input-state trajectory with Lemma~\ref{lem:extfundamental} by setting $\underline{\bm{d}}_{[k,\,k+L]}=\bm{0}$, choosing arbitrary inputs $\underline{\bm{u}}_{[k,\,k+L]}$, fixing an arbitrary initial state $\bm{x}_k$, and solving~\eqref{eq:extfundamental} for the resulting state sequence $\bm{x}_{[k+1,\,k+L]}$.
\end{remark}
In order to include the state feedback in the data-driven system representation~\eqref{eq:extfundamental}, i.e., represent the closed-loop behavior with open-loop data, we pretend the data generating input sequence was already given by $\bm{u}_k = \bm{K}\bm{x}_k + \bm{v}_k$. Then, we rearrange~\eqref{eq:extfundamental} appropriately, to change the input variable of interest from $\bm{u}_k$ to $\bm{v}_k=\bm{u}_k-\bm{K}\bm{x}_k$.
\begin{lemma}[Pre-stabilized fundammental lemma]\label{lem:prestabRepresent}
Consider open-loop data of system~\eqref{eq:system1} as in Assumption~\ref{assum:trajData} and let $\bm{K}\in\mathbb{R}^{n_\mathrm{u}\times n_\mathrm{x}}$ be chosen such that $\gls*{udata}_{[0,N-1]}-\bm{K}\gls*{xdata}_{[0,N-1]}$ is persistently exciting of order $L+n_\mathrm{x}+1$.
Any $(L+1)$-long sequence $(\underline{\bm{v}}_{[k,\,k+L]},\underline{\bm{d}}_{[k,\,k+L]},\underline{\bm{x}}_{[k,\,k+L]})$ is a vaild trajectory of $\gls*{xk1} = (\bm{A}+\bm{K}\bm{B}) \gls*{xk} + \bm{B} \bm{v}_k + \gls*{Bd} \gls*{dk}$ if and only if there exists $\bm{\alpha} \in \mathbb{R}^{N-L}$ such that
\begin{equation} \label{eq:extfundamentalK}
	\mat{\underline{\bm{v}}_{[k,\,k+L]}\\ \underline{\bm{d}}_{[k,\,k+L]}\\ \underline{\bm{x}}_{[k,\,k+L]}} = \mat{\bm{H}_u- \tilde{\bm{K}} \bm{H}_x\\ \bm{H}_d \\ \bm{H}_x} \bm{\alpha},
\end{equation}
where $\tilde{\bm{K}}\in\mathbb{R}^{n_\mathrm{u}(L+1)\times n_\mathrm{x}(L+1)}$ is a block-diagonal expansion of $\bm{K}$.
\end{lemma}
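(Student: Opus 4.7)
The plan is to reduce the statement to the Extended Fundamental Lemma (Lemma~\ref{lem:extfundamental}) applied to the pre-stabilized closed-loop system. Under the substitution $\bm{u}_k = \bm{K}\bm{x}_k + \bm{v}_k$, system~\eqref{eq:system1} becomes $\bm{x}_{k+1}=(\bm{A}+\bm{B}\bm{K})\bm{x}_k+\bm{B}\bm{v}_k+\bm{E}\bm{d}_k$, which is itself an LTI system driven by the exogenous signals $\bm{v}$ and $\bm{d}$. The available open-loop data can be reinterpreted as a closed-loop trajectory by defining $\bm{v}^{\text{d}}_k := \bm{u}^{\text{d}}_k - \bm{K}\bm{x}^{\text{d}}_k$, so that $(\bm{v}^{\text{d}}_{[0,N-1]}, \bm{d}^{\text{d}}_{[0,N-1]}, \bm{x}^{\text{d}}_{[0,N-1]})$ is a valid trajectory of the closed-loop system.

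Next I would verify the persistency hypothesis required by Lemma~\ref{lem:extfundamental} for this reinterpretation. The assumption of the lemma ensures that $\bm{v}^{\text{d}}=\bm{u}^{\text{d}}-\bm{K}\bm{x}^{\text{d}}$ is persistently exciting of order $L+n_\mathrm{x}+1$, and Assumption~\ref{assum:trajData} provides the corresponding property for $\bm{d}^{\text{d}}$, so that the stacked input $(\bm{v}^{\text{d}},\bm{d}^{\text{d}})$ to the closed-loop system furnishes the excitation required by Lemma~\ref{lem:extfundamental}. Invoking it on the closed-loop dynamics then yields, for every $(L+1)$-long sequence $(\underline{\bm{v}}_{[k,k+L]},\underline{\bm{d}}_{[k,k+L]},\underline{\bm{x}}_{[k,k+L]})$, the desired equivalence between validity as a closed-loop trajectory and the existence of $\bm{\alpha}\in\mathbb{R}^{N-L}$ parametrizing it through the Hankel matrices $\bm{H}_{L+1}(\bm{v}^{\text{d}})$, $\bm{H}_d$ and $\bm{H}_x$.

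The final step is purely algebraic: rewrite $\bm{H}_{L+1}(\bm{v}^{\text{d}})$ in terms of the already available $\bm{H}_u$ and $\bm{H}_x$. Because each column of a Hankel matrix stacks a consecutive window $(\bm{s}_j,\ldots,\bm{s}_{j+L})$ of its generating sequence, and because $\tilde{\bm{K}}$ is block-diagonal with $L+1$ copies of $\bm{K}$, the pointwise identity $\bm{v}^{\text{d}}_k=\bm{u}^{\text{d}}_k-\bm{K}\bm{x}^{\text{d}}_k$ lifts column-wise to the matrix identity $\bm{H}_{L+1}(\bm{v}^{\text{d}})=\bm{H}_u-\tilde{\bm{K}}\bm{H}_x$. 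Substituting this into the parametrization above yields exactly~\eqref{eq:extfundamentalK}.

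The step I expect to be the principal obstacle is justifying \emph{joint} persistency of $(\bm{v}^{\text{d}},\bm{d}^{\text{d}})$, since persistency of each component separately does not automatically imply persistency of the stacked signal. Either Assumption~\ref{assum:trajData} must be read as asserting joint persistency of $(\bm{u}^{\text{d}},\bm{d}^{\text{d}})$, which is then preserved by the invertible transformation $\bm{u}^{\text{d}}\mapsto\bm{u}^{\text{d}}-\bm{K}\bm{x}^{\text{d}}$ acting column-wise on the Hankel matrix, or an explicit full-row-rank argument for the stacked Hankel matrix of $(\bm{v}^{\text{d}},\bm{d}^{\text{d}})$ has to be supplied. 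Once this point is settled, the remainder is a routine change of variables on top of Lemma~\ref{lem:extfundamental}.
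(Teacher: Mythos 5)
Your proposal is correct and takes essentially the same route as the paper, whose entire proof is the one-line observation that Lemma~\ref{lem:extfundamental} applies after replacing $\bm{H}_u$ with $\bm{H}_v=\bm{H}_u-\tilde{\bm{K}}\bm{H}_x$ to reflect the change of input from $\bm{u}_k$ to $\bm{v}_k=\bm{u}_k-\bm{K}\bm{x}_k$; you simply fill in the column-wise Hankel identity and the reinterpretation of the open-loop data as closed-loop data, which the paper leaves implicit. The joint-persistency issue you flag is a genuine subtlety, but it is inherited from the paper's own Assumption~\ref{assum:trajData} and its use in Lemma~\ref{lem:extfundamental} (component-wise persistency of $\bm{u}^{\text{d}}$ and $\bm{d}^{\text{d}}$ does not by itself give full row rank of the stacked Hankel matrix), so it is not a defect introduced by your argument.
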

\begin{proof}
The proof follows from Lemma~\ref{eq:extfundamental}, by replacing $\bm{H}_u$ with $\bm{H}_v = \bm{H}_u- \tilde{\bm{K}} \bm{H}_x$, reflecting the change of input from $\bm{u}_k$ to $\bm{v}_k$.
\end{proof}
Since the dynamics of the underlying system are stochastic in nature, we split the system state into nominal state $\bm{z}$ and error state $\bm{e}$, as motivated in Section~\ref{sec:smpc}.
In the data-driven setting, the state decomposition~\eqref{eq:state_decomp} translates to nominal state trajectories $\bm{z}_{[k,\,k+L]}$ only influenced by inputs $\bm{u}_{[0,\,L]}$, and state error trajectories $\bm{e}_{[k,\,k+L]}$, only influenced by disturbances $\bm{d}_{[0,\,L]}$.
With either the disturbance sequence or the input sequence set to zero in~\eqref{eq:extfundamentalK}, we obtain the data-driven representation of the pre-stabilized nominal~\eqref{eq:nominal_dynamics} or error~\eqref{eq:error_dynamics} dynamics as
\begin{subequations} \label{eq:alphaz+e}
	\begin{align}	
	\mat{\underline{\bm{v}}_{[k,\,k+L]}\\\bm{0} \\ \underline{\bm{z}}_{[k,\,k+L]}} 
	&= \mat{\bm{H}_u - \tilde{\bm{K}} \bm{H}_x \\ \bm{H}_d \\ \bm{H}_x}\bm{\alpha}_z, \label{eq:alphaz} \\
	\mat{\bm{0}\\ \underline{\bm{d}}_{[k,\,k+L]} \\ \underline{\bm{e}}_{[k,\,k+L]}} 
	&= \mat{\bm{H}_u - \tilde{\bm{K}} \bm{H}_x\\ \bm{H}_d \\ \bm{H}_x}\bm{\alpha}_e, \label{eq:alphae}
	\end{align}
\end{subequations}
where $\bm{\alpha}_e\in\mathbb{R}^{N-L}$, $\bm{\alpha}_z\in\mathbb{R}^{N-L}$ 
and the true state sequence is given by
$
\underline{\bm{x}}_{[k,\,k+L]}=\underline{\bm{z}}_{[k,\,k+L]}+\underline{\bm{e}}_{[k,\,k+L]}=\bm{H}_x(\bm{\alpha}_z+\bm{\alpha}_e).
$
Equation~\eqref{eq:alphae} allows for the computation of state errors $ \underline{\bm{e}}_{[k,\,k+L]}$ from a sequence of disturbances $\underline{\bm{d}}_{[k,\,k+L]}$, which will be used to compute tightened constraints for the nominal state $\bm{z}$ in Section~\ref{sec:DDSMPC}-\ref{sec:tightcons}. In practice, this computation takes place offline, before the actual control phase.
Equation~\eqref{eq:alphaz} allows for prediction of nominal state sequences and replaces~\eqref{ocpintro_evolve} in the OCP (formulated in Section~\ref{sec:DDSMPC}-\ref{sec:controller}), to find optimal input sequences $\bm{v}_{[k,\,k+L]}$ during the control phase. 
\subsection{Data-driven Tightening of State Constraints} \label{sec:tightcons_noise}\label{sec:tightcons}
In order to guarantee chance constraint satisfaction of the states $\bm{x}_{l|k}$ in the prediction horizon, we subject the nominal states $\bm{z}_{l|k}$ to tightened constraints. We first tighten constraints probabilistically with respect to the additive disturbance, before further tightening constraints robustly with respect to the measurement noise.

The probabilistcally tightened constraint sets $\gls*{tightX}_l$ are defined by~\eqref{eq:stateprobcons_tight} based on parameters $\bm{\eta}_l$. To find $\bm{\eta}_l$, 
the chance constrained optimization problem~\eqref{eq:stateprobcons_eta} is solved using sampled state error sequences ${\bm{e}_{[0,\,L]}}$. 
Given the initial state error ${\bm{e}_0 = \bm{0}}$ and a sampled disturbance sequence $\bm{d}_{[0,\,L]}$, the state error sequence $\bm{e}_{[0,\,L]}$ follows from~\eqref{eq:alphae} as
\begin{equation}\label{eq:stateErrorPred}
    \bm{e}_{[0,\,L]} = \bm{H}_x \mat{\bm{H}_u - \tilde{\bm{K}}\bm{H}_x \\ \bm{H}_d \\ \left[\bm{H}_{x}\right]_{[1,\,\glsd*{xk}]}}^{\dagger} \mat{\bm{0}\\ \underline{\bm{d}}_{[0,\,L]} \\ \bm{0}}.
\end{equation}
With the constraints $\gls*{X}$ tightened to $\gls*{tightX}_l$ for the nominal state,~\eqref{eq:StochasticTubeImplies} holds (with confidence $\beta$) and the pre-specified chance constraints are satisfied for system~\eqref{eq:system1}.

Next, we account for inexact state measurements $\hat{\bm{x}}_k=\bm{x}_k+\gls*{noise}_k$~\eqref{eq:system2} during the control phase. Let $\hat{\bm{z}}_{[0|k,\,L|k]}$ denote the nominal state predictions initialized at the measured state $\hat{\bm{z}}_{[0|k]}=\gls*{xhk}$. In the following, we further tighten the tube sets $\gls*{tightX}_l$ robustly with respect to all possible realizations of the measurement noise $\gls*{noise}_k\in\glsd*{epsk}$. That is, we construct robust tube sets $\gls*{tightXnoise}_l$ as in~\eqref{eq:RobustTubeImplies} such that 
\begin{equation}\label{eq:robTubeNoiseimplies}
    \hat{\bm{z}}_{l|k}\in\gls*{tightXnoise}_l\implies \bm{z}_{l|k}\in\gls*{tightX}_l.
\end{equation}
In order to obtain a maximal tube, we set $\gls*{tightXnoise}_l := \gls*{tightX}_l \ominus \glsd*{epsl}$, where $\glsd*{epsl}$ encompasses all possible deviations from  predicted nominal states $\hat{\bm{z}}_{l|k}$ to noise-free nominal states $\bm{z}_{l|k}$,
\begin{align} \label{eq:nompred_tube}
	\bm{z}_{l|k} &\in \{\hat{\bm{z}}_{l|k}\} \oplus -\glsd*{epsl}\quad\forall\gls*{noise}_{[0|k,\,l|k]}\in\glsd*{epsk}.
\end{align}
The smallest possible sets $\glsd*{epsl}$ which satisfy~\eqref{eq:nompred_tube} are given by $\glsd*{epsl}=\{\hat{\bm{z}}_{l|k} - \bm{z}_{l|k} ~\left|~ \gls*{noise}_k\in\glsd*{epsk}\right.\}$, and are bounded, since the measurement noise $\gls*{noise}_k\in\glsd*{epsk}$ is bounded.  
As the deviation of the two nominal trajectories evolves with $\bm{A}$ via
\begin{equation}\label{eq:nompred_model}
    \hat{\bm{z}}_{l|k} - \bm{z}_{l|k} = \bm{A}^l \gls*{noise}_k, \quad\forall l \in \mathbb{N}_0^L,
\end{equation}
the difference of the two input-disturbance-state trajectories $\left(\ul{\bm{u}}_{[k,k+L]}, \bm{0},\ul{\hat{\bm{z}}}_{[0|k,L|k]}\right)$ and $\left(\ul{\bm{u}}_{[k,k+L]}, \bm{0},\ul{\bm{z}}_{[0|k,L|k]}\right)$ is again a valid input-disturbance-state trajectory of the same underlying system. Consequently, the data-driven representation in Lemma~\ref{lem:extfundamental} applies, yielding
\begin{equation} \label{eq:alpha_eps}
    \mat{\bm{0}\\ \bm{0}\\ \ul{\hat{\bm{z}}}_{[0|k, L|k]} - \ul{\bm{z}}_{[0|k, L|k]}} = \mat{\bm{H}_u\\ \bm{H}_d \\ \bm{H}_x} \bm{\alpha}_{\mu},
\end{equation}
with $\bm{\alpha}_{\mu} \in \mathbb{R}^{N-L}$, and where the noise $\gls*{epsk} = \hat{\bm{z}}_{0|k} - \bm{z}_{0|k}$ appears as the first $n_{\mathrm{x}}$ entries of $\ul{\hat{\bm{z}}}_{[0|k, L|k]} - \ul{\bm{z}}_{[0|k, L|k]}$, i.e.,
\begin{equation}
    \gls*{noise}_k=\left[\bm{H}_x\right]_{[1,n_{\mathrm{x}}]}\bm{\alpha}_{\mu}.
\end{equation}
We now have all the ingredients to construct the sets $\glsd*{epsl}$ from data. Recall, that $\gls*{epsk}\in\glsd*{epsk}=\left\{\gls*{noise} \in \glsd*{xk} ~\left|~ \bm{G}_{\mu} \gls*{noise} \le \bm{g}_{\mu} \right.\right\}$ by Assumption~\ref{assum:DisturbanceBounds}.
\begin{lemma}\label{lem:noise_tight}
Consider system~\eqref{eq:system} and data persistently exciting of order $L+n_\mathrm{x}+1$ as in Assumption~\ref{assum:trajData}.
    Let $\mathbb{A}_{\mu}=\{ \bm{\alpha} \in \mathbb{R}^{N-L}
	    \mid
	    ~\bm{G}_{\mu}\left[\bm{H}_x\right]_{[1,n_{\mathrm{x}}]} \bm{\alpha} \le \bm{g}_{\mu}\}$.
    For each $l \in \mathbb{N}_0^L$, if
    \begin{equation}\label{eq:MeasNoiseTube}
	    \glsd*{epsl} = \left[\bm{H}_x\right]_{[ln_\mathrm{x}+1,ln_\mathrm{x}+n_\mathrm{x}]}\left(\mathbb{A}_{\mu}\cap \text{ker}\mat{\bm{H}_u\\ \bm{H}_d}\,\right),
    \end{equation}
    then $\bm{z}_{l|k} \in \{\hat{\bm{z}}_{l|k}\} \oplus -\glsd*{epsl}$ and the sets $\glsd*{epsl}$ define the smallest possible tube such that \eqref{eq:robTubeNoiseimplies} holds.
\end{lemma}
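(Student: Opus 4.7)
\textbf{Proof plan for Lemma~\ref{lem:noise_tight}.}

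The plan is to characterize the exact set of possible deviations $\hat{\bm{z}}_{l|k}-\bm{z}_{l|k}$ as $\gls*{noise}_k$ ranges over $\glsd*{epsk}$, by parametrizing this set via coefficient vectors $\bm{\alpha}_\mu$ that satisfy \eqref{eq:alpha_eps}. Both nominal trajectories $\hat{\bm{z}}_{[0|k,L|k]}$ and $\bm{z}_{[0|k,L|k]}$ are driven by the same input sequence and by zero disturbance; they differ only in their initial condition, which differs by the measurement noise $\gls*{noise}_k$. Hence their difference is itself a valid trajectory of system~\eqref{eq:system1} with zero input, zero disturbance, and initial state $\gls*{noise}_k$. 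Applying Lemma~\ref{lem:extfundamental} to this difference trajectory yields the existence of $\bm{\alpha}_\mu\in\mathbb{R}^{N-L}$ satisfying \eqref{eq:alpha_eps}, and conversely any $\bm{\alpha}_\mu$ satisfying \eqref{eq:alpha_eps} produces such a valid difference trajectory.

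Next I would translate the constraints on the underlying trajectory into constraints on $\bm{\alpha}_\mu$. The zero input and zero disturbance requirements directly correspond to $\mat{\bm{H}_u\\\bm{H}_d}\bm{\alpha}_\mu=\bm{0}$, i.e., $\bm{\alpha}_\mu\in\text{ker}\mat{\bm{H}_u\\\bm{H}_d}$. The requirement $\gls*{noise}_k\in\glsd*{epsk}$, together with the identification $\gls*{noise}_k=[\bm{H}_x]_{[1,n_\mathrm{x}]}\bm{\alpha}_\mu$, becomes $\bm{G}_\mu[\bm{H}_x]_{[1,n_\mathrm{x}]}\bm{\alpha}_\mu\le\bm{g}_\mu$, i.e., $\bm{\alpha}_\mu\in\mathbb{A}_\mu$. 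Thus the admissible coefficients are exactly $\mathbb{A}_\mu\cap\text{ker}\mat{\bm{H}_u\\\bm{H}_d}$.

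Since the $l$-th block of $\ul{\hat{\bm{z}}}_{[0|k,L|k]}-\ul{\bm{z}}_{[0|k,L|k]}$ equals $[\bm{H}_x]_{[ln_\mathrm{x}+1,ln_\mathrm{x}+n_\mathrm{x}]}\bm{\alpha}_\mu$, the set of all possible values of $\hat{\bm{z}}_{l|k}-\bm{z}_{l|k}$ is precisely the image of this intersection under $[\bm{H}_x]_{[ln_\mathrm{x}+1,ln_\mathrm{x}+n_\mathrm{x}]}$, which matches the definition of $\glsd*{epsl}$ in \eqref{eq:MeasNoiseTube}. Rearranging $\bm{z}_{l|k}=\hat{\bm{z}}_{l|k}-(\hat{\bm{z}}_{l|k}-\bm{z}_{l|k})$ gives $\bm{z}_{l|k}\in\{\hat{\bm{z}}_{l|k}\}\oplus-\glsd*{epsl}$ for every admissible noise realization, establishing \eqref{eq:nompred_tube} and therefore \eqref{eq:robTubeNoiseimplies}.

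For minimality, I would argue by contradiction: if some $\mathcal{F}\subsetneq\glsd*{epsl}$ were also to satisfy \eqref{eq:nompred_tube}, then there would exist an admissible $\gls*{noise}_k\in\glsd*{epsk}$ for which $\hat{\bm{z}}_{l|k}-\bm{z}_{l|k}\in\glsd*{epsl}\setminus\mathcal{F}$, violating the requirement that \eqref{eq:nompred_tube} holds for \emph{all} noise realizations. The main obstacle is the bijective correspondence in the first step: specifically, ensuring that every admissible noise $\gls*{noise}_k\in\glsd*{epsk}$ is actually realized by some $\bm{\alpha}_\mu$ in the intersection (this is the ``if'' direction of Lemma~\ref{lem:extfundamental}, requiring persistency of excitation of sufficient order, which is supplied by Assumption~\ref{assum:trajData}), and that every $\bm{\alpha}_\mu$ in the intersection corresponds to a genuine noise realization — both follow directly from Lemma~\ref{lem:extfundamental} applied to the difference trajectory, so the argument is essentially a clean bookkeeping of the equivalence.
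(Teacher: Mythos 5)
Your proposal is correct and follows essentially the same route as the paper's proof: both apply the extended fundamental lemma to the difference of the two nominal trajectories (zero input, zero disturbance, initial state $\bm{\mu}_k$), translate the admissibility conditions into the constraint set $\mathbb{A}_{\mu}\cap\text{ker}\mat{\bm{H}_u\\\bm{H}_d}$, identify $\glsd*{epsl}$ as the exact image of that set under the relevant block of $\bm{H}_x$, and obtain minimality by noting that omitting any element leaves some realizable deviation uncovered. The only cosmetic difference is that the paper additionally records the model-based identity $\hat{\bm{z}}_{l|k}-\bm{z}_{l|k}=\bm{A}^l\bm{\mu}_k$ and writes $\glsd*{epsl}=\bm{A}^l\glsd*{epsk}$, which you leave implicit.
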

\begin{proof}
    We have to prove that $\hat{\bm{z}}_{l|k} - \bm{z}_{l|k}\in \glsd*{epsl}$ for all $l \in \mathbb{N}_1^L$ and all $\gls*{noise}_k\in\glsd*{epsk}$.
    Denote with $\bm{H}_{x,l}=\left[\bm{H}_x\right]_{[ln_\mathrm{x}+1,ln_\mathrm{x}+n_\mathrm{x}]}$ the rows of $\bm{H}_x$ in~\eqref{eq:alpha_eps} corresponding to $\hat{\bm{z}}_{l|k} - \bm{z}_{l|k}$. First, assume $\gls*{noise}_k$ fixed and note that by Lemma~\ref{lem:extfundamental} and \eqref{eq:alpha_eps}, $\hat{\bm{z}}_{l|k} - \bm{z}_{l|k}=\bm{H}_{x,l}\bm{\alpha}_{\mu}=\bm{A}^l \gls*{noise}_k$ if $\bm{H}_{x,0} \bm{\alpha} = \gls*{noise}_k$ and $\mat{\bm{H}_u \\\bm{H}_d} \bm{\alpha} = \bm{0}$.
    Thus, as $\gls*{noise}_k\in\glsd*{epsk}$ per assumption, $\bm{A}^l\glsd*{epsk} = {\bm{H}_{x,l}\,\left\{ \bm{\alpha} \in \mathbb{R}^{N-L}
	\,\left|\,
	\mat{\bm{H}_u \\\bm{H}_d} \bm{\alpha} = \bm{0},
	\,\bm{H}_{x,0} \bm{\alpha}\in\glsd*{epsk}\right.\right\}}=\glsd*{epsl}$.
	If any element of $\glsd*{epsl}$ is omitted, $\glsd*{epsl}\subsetneq \bm{A}^l\glsd*{epsk}$ and there exists a sequence of measurement noise realizations such that \eqref{eq:robTubeNoiseimplies} is violated. As a consequence, the tube sets $\glsd*{epsl}$ are minimal.
\end{proof}
The nominal state within the OCP is thus constrained as
\begin{equation} \label{eq:tightcons_noise}
	\hat{\bm{z}}_{l|k} \in \gls*{tightXnoise}_l := \gls*{tightX}_l \ominus \glsd*{epsl} ~~\forall l \in \mathbb{N}_1^L.
\end{equation}
\begin{remark}\label{rem:ConvHull}
Note that with $l \in \mathbb{N}_1^L$ we omit constraints on the initial state $\hat{\bm{z}}_{0|k}=\gls*{xhk}$ as it is not affected by future inputs, but fixed. 
For recursive feasibility and stability proofs or for the construction of a feasible candidate solution, it is often desired that the tightened nominal state constraints $\gls*{tightXnoise}$ contract, such that ${\gls*{tightXnoise}_{l+1} \subseteq \gls*{tightXnoise}_{l} ~\forall l \in \mathbb{N}_1^{L-1}}$. This can be guaranteed by substituting $\glsd*{epsl}$ in~\eqref{eq:tightcons_noise} with the convex hull of $\cup_{i=1}^{l} \mathcal{E}_{\mu,i}$, albeit the tube may become more conservative.
\end{remark}
\subsection{Tightened Input Constraints}
Both the additive disturbance and the measurement noise influence the measured state, which introduces uncertainty into actually applied inputs via the state feedback component.
As $\bm{z}_{0|k}=\gls*{xhk}=\bm{A}\bm{x}_{k-1}+\bm{B}\bm{u}_{k-1}+\gls*{Bd}\bm{d}_{k-1}+\gls*{epsk}$, the additive disturbance $\bm{d}_{k}$ does not influence the actually applied input $\bm{u}^*_{0|k}$, but only succeeding inputs $\bm{u}^*_{1|k},\ldots,\bm{u}^*_{L|k}$. Thus, a probabilistic constraint tightening may be employed to reduce conservatism.
Analogously to~\eqref{eq:stateprobcons_tight} and~\eqref{eq:stateprobcons_eta}, we probabilistically tighten input constraints for all $l\in \mathbb{N}_0^{L}$ as
\begin{equation} \label{eq:inputprobcons_tight}
    \mathcal{U}_l = \left\{\bm{u} \in \glsd*{uk}  ~\left|\right.~ \bm{G}_u \bm{u} \le \bm{\sigma}_l\right\},
\end{equation}
with each $\bm{\sigma}_l$ chosen by chance-constrained optimization
\begin{subequations} \label{eq:inputprobcons_mu}
	\begin{align}
	\bm{\sigma}_l &= \max\limits_{\tilde{\bm{\sigma}}} \tilde{\bm{\sigma}} \\
	&\text{s.t. } \Pr{\tilde{\bm{\sigma}} \le \bm{g}_{u} - \bm{G}_{u} \bm{K} \bm{e}_{l} } \ge p.
	\end{align}
\end{subequations}
The measurement noise $\gls*{epsk}$ has a direct influence however, so that we employ a robust input constraint tightening with respect to all possible realizations of $\gls*{epsk}$,
\begin{equation}\label{eq:tightU}
    \gls*{tightU} = \mathcal{U}_{l} \ominus \bm{K}\glsd*{epsl} ~~\forall l \in \mathbb{N}_0^L.
\end{equation}
\begin{remark} The input constraint sets $\mathcal{U}_l$ are such that ${\mathcal{U}_{l+1} \subseteq \mathcal{U}_{l} ~\forall l \in \mathbb{N}_0^{L-2}}$. Note that $\bm{e}_{0|k}=\bm{0}$ for the initial state and the actual applied input always satisfies the specified constraints $\bm{u}_k\in\gls*{U}$.
\end{remark}
\subsection{Data-driven SMPC} \label{sec:controller}
Given the evolution of the nominal state~\eqref{eq:alphaz}, tightened state and input constraint sets~\eqref{eq:tightcons_noise},\eqref{eq:tightU}, and the explicit expectation of the cost function, we obtain a tractable deterministic data-driven reformulation of the stochastic OCP~\eqref{eq:ocp_intro}
\begin{subequations} \label{eq:ocp}
	\begin{align}
		\underset{\hat{\bm{Z}}_k, \bm{V}_k, \bm{\alpha}_k}{\text{minimize}}~&\sum\limits_{l = 0}^{L-1} 
		J_{\text{s}}\left(\hat{\bm{z}}_{l|k}, \bm{v}_{l|k}+\bm{K}\hat{\bm{z}}_{l|k}\right)+
		J_{\text{f}}\left(\hat{\bm{z}}_{L|k}\right)\label{eq:ocp_cost}\\	
		\text{s.t.}~ &\hat{\bm{z}}_{0|k} = \gls*{xhk},\label{eq:ocp_init} \\	
		& \mat{\ul{\bm{v}}_{[0|k,\,L|k]}\\ \bm{0} \\
			\ul{\hat{\bm{z}}}_{[0|k,\,L|k]}} = \mat{\bm{H}_u - \tilde{\bm{K}}\bm{H}_x \\ \bm{H}_d \\ \bm{H}_{x}} \bm{\alpha}_k, \label{eq:ocp_hankel}\\
		& \hat{\bm{z}}_{l|k} \in \gls*{tightXnoise}_l  ~\hspace*{1.65cm}\forall l \in \mathbb{N}_1^L, \label{eq:ocp_stateCon}\\
		& \bm{v}_{l|k} + \bm{K} \hat{\bm{z}}_{l|k} \in \gls*{tightU} ~~~\forall l \in \mathbb{N}_0^{L-1},\label{eq:ocp_inputCon} \\
		& \hat{\bm{z}}_{L|k} \in \gls*{termSetNoise},\label{eq:ocp_terminalCon}\\
	    & \hat{\bm{z}}_{1|k} \in \gls*{firstSet},\label{eq:ocp_firstCon}
	\end{align}
\end{subequations}
with prediction horizon $L$, predicted nominal state sequence ${\hat{\bm{Z}}_k := \hat{\bm{z}}_{[0|k,\,L|k]}}$, input sequence $\bm{V}_k := \bm{v}_{[0|k,\,L-1|k]}$, stage cost
\begin{equation} \label{eq:stagecost}
	J_{\text{s}}\left(\hat{\bm{z}}_{l|k}, \bm{u}_{l|k}\right) = \normsq{\hat{\bm{z}}_{l|k}}_{\bm{Q}} + \normsq{\bm{u}_{l|k}}_{\bm{R}},
\end{equation}
with positive definite weighting matrices $\bm{Q} \in \mathbb{R}^{\glsd*{xk} \times \glsd*{xk}}$, $\bm{R} \in \mathbb{R}^{n_\mathrm{u} \times n_\mathrm{u}}$, and a terminal cost function
\begin{equation} \label{eq:terminalcost}
	J_{\text{f}}\left(\hat{\bm{z}}_{L|k}\right) = \normsq{\hat{\bm{z}}_{L|k}}_{\bm{P}},
\end{equation}
with positive definite $\bm{P} \in \mathbb{R}^{n_\mathrm{x} \times n_\mathrm{x}}$.
Equations \eqref{eq:ocp_terminalCon}, \eqref{eq:ocp_firstCon} are terminal and first-step constraints, which play an important role for recursive feasibility and stability of the proposed receding horizon control scheme. They are included here for the sake of completeness and will be introduced in more detail in Sec.~\ref{sec:properties}.

The OCP~\eqref{eq:ocp} has polyhedral constraints and quadratic costs, and can thus be written as a convex quadratic program parameterized by the measured state $\gls*{xhk}$ and with decision variable $\bm{\alpha}_k$.
As a consequence, and since $\bm{R}$ is positive definite, it admits a unique solution $\bm{\alpha}^*_k$ (and thereby $\bm{V}^*_k$) that depends entirely on~$\gls*{xhk}$ (see for example~\cite[Chapter 7]{RawlingsMayneDiehl2017} for details).
In the following, we denote the OCP~\eqref{eq:ocp} as $\gls*{OCP}(\gls*{xhk})$, and the resulting implicit control law by
\begin{equation}\label{eq:controllaw}
\gls*{ulaw}\left(\gls*{xhk}\right):=\bm{u}^{*}_k = \bm{K} \gls*{xhk}+\bm{v}^{*}_{0|k}.
\end{equation}
Since the OCP is a convex program, small changes in the initial state $\gls*{xhk}$ lead to small changes in the optimal decision variable $\bm{\alpha}^*_k$ and therefore the associated optimal input sequence $\bm{V}^*_k$.
\begin{proposition}\label{prop:lipschitz_ulaw}
Let $\gls*{roa}$ be the set of initial states $\gls*{xhk}\in\gls*{roa}$ for which $\gls*{OCP}(\gls*{xhk})$ is solvable, i.e., $\gls*{roa}$ is the set of feasible initial states.
For all $\gls*{xhk}\in\gls*{roa}$, the implicit control law $\gls*{ulaw}\left(\gls*{xhk}\right)$ is Lipschitz continuous.
\end{proposition}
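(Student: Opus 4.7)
The plan is to view $\mathbb{P}(\gls*{xhk})$ as a multi-parametric quadratic program (mp-QP) with parameter $\gls*{xhk}$, and to invoke the classical mp-QP machinery to conclude that the map $\gls*{xhk}\mapsto \bm{v}^{*}_{0|k}$ is continuous and piecewise affine on a polyhedral partition of $\gls*{roa}$, hence Lipschitz continuous.

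First I would collect the nominal state sequence and artificial inputs into a single decision vector $\bm{\xi}=(\ul{\hat{\bm{z}}}_{[0|k,\,L|k]},\ul{\bm{v}}_{[0|k,\,L-1|k]})$. The Hankel equality \eqref{eq:ocp_hankel} then constrains $\bm{\xi}$ to an affine subspace that does not depend on $\gls*{xhk}$, the constraints \eqref{eq:ocp_stateCon}--\eqref{eq:ocp_firstCon} are likewise polytopic in $\bm{\xi}$ and $\gls*{xhk}$-free, and the parameter enters the problem only through the linear equation \eqref{eq:ocp_init}. Expanding the stage cost $\normsq{\hat{\bm{z}}_{l|k}}_{\bm{Q}}+\normsq{\bm{v}_{l|k}+\bm{K}\hat{\bm{z}}_{l|k}}_{\bm{R}}$ shows that the Hessian in $\bm{\xi}$-coordinates equals $\mathrm{blkdiag}(\bm{Q}+\bm{K}^\top\bm{R}\bm{K},\bm{R})$-type blocks and is positive definite whenever $\bm{Q},\bm{R},\bm{P}\succ 0$. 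Consequently, $\mathbb{P}(\gls*{xhk})$ is a strictly convex mp-QP in $\bm{\xi}$ whose parameter $\gls*{xhk}$ appears only linearly in the constraints. The feasible parameter set $\gls*{roa}$ is the projection of a polyhedron onto $\gls*{xhk}$, hence itself a polyhedron. By the standard mp-QP theorem (see, e.g., \cite[Ch.~7]{RawlingsMayneDiehl2017}), the unique optimizer $\bm{\xi}^{*}(\gls*{xhk})$ is a continuous, piecewise affine function of $\gls*{xhk}$ on a finite polyhedral partition of $\gls*{roa}$; any such function is Lipschitz continuous. As $\bm{v}^{*}_{0|k}$ is a linear selection from $\bm{\xi}^{*}(\gls*{xhk})$, it inherits the Lipschitz property, and so does $\gls*{ulaw}(\gls*{xhk})=\bm{K}\gls*{xhk}+\bm{v}^{*}_{0|k}$.

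The main technical point is to justify the reformulation in $\bm{\xi}$ despite the fact that the nominal decision variable in \eqref{eq:ocp} is $\bm{\alpha}_k$. Viewed in $\bm{\alpha}_k$-coordinates, the cost has only a positive semidefinite Hessian (its null space contains the kernel of the stacked Hankel matrix), so $\bm{\alpha}^{*}_k$ need not be unique; however, on the affine subspace parameterized by \eqref{eq:ocp_hankel} the induced cost in $\bm{\xi}$ is strictly convex, and therefore yields a unique $\bm{\xi}^{*}$ to which the mp-QP result applies. Once this equivalence is established, Lipschitz continuity of $\gls*{ulaw}$ on $\gls*{roa}$ follows directly.
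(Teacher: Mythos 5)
Your proof is correct and follows essentially the same route as the paper, which simply invokes the standard result that the optimizer of a strictly convex multi-parametric QP with polytopic constraints and the parameter entering linearly is piecewise affine and Lipschitz continuous (citing \cite[Prop.~17]{goulart2006optimization}) and then adds the linear term $\bm{K}\gls*{xhk}$. Your explicit passage from the (generally non-unique) decision variable $\bm{\alpha}_k$ to the unique trajectory variable $\bm{\xi}$ is a worthwhile refinement the paper glosses over: since the stacked Hankel matrix typically has a nontrivial kernel, $\bm{\alpha}^*_k$ need not be unique, and it is uniqueness of $\bm{V}^*_k$ --- which your reformulation establishes --- that the Lipschitz argument actually requires.
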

\begin{proof}
Since the OCP itself is uncertainty free, has quadratic costs and polytopic constraints, Lipschitz continuity of the implicit OCP control law is a well known property in MPC and follows for example from~\cite[Proposition 17]{goulart2006optimization}. Lipschitz continuity of $\gls*{ulaw}\left(\gls*{xhk}\right)$ follows immediately from Lipschitz continuity of the linear term $\bm{K} \gls*{xhk}$.
\end{proof}
The complete tube-based data-driven stochastic predictive control scheme is summarized in Algorithm~\ref{alg:ddspc}. 
\begin{algorithm}[H]
\caption{Data-driven stochastic predictive control}
\begin{algorithmic}[1]\label{alg:ddspc}
\renewcommand{\algorithmicrequire}{\textbf{Offline:}}
\renewcommand{\algorithmicensure}{\textbf{Online:}}
\REQUIRE Given data as in Assumptions~\ref{assum:trajData}, \ref{assum:DisturbanceBounds} (a).
\STATE Compute LQR feedback gain $\bm{K}$ as in Remark~\ref{rem:ComputeK}.
\STATE For each $l\in\mathbb{N}_1^L$, compute state errors $\bm{e}_l^{(i)}$~\eqref{eq:stateErrorPred} from disturbance data $\bm{d}_{[0,l-1]}^{(i)}$ for all $i=1,\ldots,N_S$.
\STATE Retrieve constraint sets for nominal state $\gls*{tightX}_l$~\eqref{eq:stateprobcons_tight} and input $\mathcal{U}_l$~\eqref{eq:inputprobcons_tight} by solving chance-constrained optimization problems~\eqref{eq:stateprobcons_eta},~\eqref{eq:inputprobcons_mu}.
 \STATE Robustly tighten $\gls*{tightX}_l$, $\mathcal{U}_l$ with respect to the bounded measurement noise, obtaining $\gls*{tightXnoise}_l$~\eqref{eq:tightcons_noise} and $\gls*{tightU}$ \eqref{eq:tightU}
  \STATE Compute $\bm{P}$~\eqref{eq:computeP} and define terminal costs~\eqref{eq:terminalcost}.
 \STATE Compute and tighten the terminal set~\eqref{eq:termset}, \eqref{eq:tightcons_terminal_noise}.
 \STATE Compute first step constraint set~\eqref{eq:firststepcons}.
 \STATE Based on all the above, construct the OCP $\gls*{OCP}(\cdot)$~\eqref{eq:ocp}.
 \ENSURE For all time steps $k$:
 \STATE Obtain noisy state measurement $\gls*{xhk}$.
 \STATE Solve the OCP $\gls*{OCP}(\gls*{xhk})$~\eqref{eq:ocp}.
 \STATE Apply input $\bm{u}_k=\gls*{ulaw}\left(\gls*{xhk}\right)$~\eqref{eq:controllaw} to the system.
 \STATE Set $k\leftarrow k+1$ and go back to Step 9.
\end{algorithmic}
\end{algorithm}
\section{RECURSIVELY FEASIBLE AND CLOSED-LOOP STABLE DESIGN}\label{sec:properties}
In this section, we design terminal costs, terminal constraints and first-step constraints such that the OCP remains feasible, and the resulting closed-loop system is stable.
Stability, discussed in Section~\ref{sec:properties}-\ref{sec:iss}, follows similarly to classical arguments in MPC, by choosing the terminal constraint set as a robust positive
invariant set under the local control law and the terminal cost
function as a local control Lyapunov function in that terminal set.
Recursive feasibility, discussed in Section~\ref{sec:properties}-\ref{sec:recfeas}, follows from an additional first step constraint that guarantees robust positive invariance of the feasible set.

In the following, we will elaborate on the terminal constraint set, which plays a role for both feasibility and stability. 
Inspired by~\cite[Prop.~2]{lorenzen2016constraint} we first define a feasible set $\gls*{X}_{\text{f}}$ under the control law $\gls*{uk} = \bm{K} \gls*{xhk}$. 
To that end, consider the set $\tilde{\gls*{X}}_{\text{f}}$ of all states $\hat{\bm{x}}$ for which $\bm{K}\hat{\bm{x}}$ satisfies tightened input constraints, and the nominal successor state $\hat{\bm{x}}^{+}=(\bm{A}+\bm{B}\bm{K})\hat{\bm{x}}$ is inside the tube~\eqref{eq:tightcons_noise}.
\begin{equation} \label{eq:termset}
	\tilde{\gls*{X}}_{\text{f}} = \left\{ \hat{\bm{x}} \in \glsd*{xk} \,\left|\, \begin{array}{l} 
	\left(\exists\bm{\alpha}_{\text{f}} \in \mathbb{R}^{N-L}\right) \\
	\mat{\bm{0} \\ \bm{0} \\ \hat{\bm{x}} \\ \hat{\bm{x}}^{+} } = \mat{\left[\bm{H}_u - \tilde{\bm{K}} \bm{H}_x \right]_{[1,n_{\mathrm{u}}]}\\ \left[\bm{H}_d\right]_{[1,n_{\mathrm{d}}]} \\ \left[\bm{H}_{x}\right]_{[1,n_{\mathrm{x}}]} \\ \left[\bm{H}_{x}\right]_{[n_{\mathrm{x}}+1,2n_{\mathrm{x}}]}} \bm{\alpha}_{\text{f}}, \\ \hat{\bm{x}}^{+} \in \gls*{tightXnoise}_1,\, \bm{K}\hat{\bm{x}} \in \hat{\mathcal{U}}_0
	\end{array} \right.
	\right\}.
\end{equation}
\begin{proposition}[RPI feasible set under local control]
	Let $\gls*{X}_{\text{f}}\subseteq\tilde{\gls*{X}}_{\text{f}}$
	be a robust positive invariant polytope for system~\eqref{eq:system} controlled via $\gls*{uk} = \bm{K} \gls*{xhk}$, i.e., ${\gls*{xk1}=\bm{A}\gls*{xk}+\bm{B}\bm{K}(\gls*{xk}+\gls*{epsk})+\gls*{dk}}$.
	Then, for any initial state in $\gls*{X}_{\text{f}}$, the original state constraints~\eqref{eq:ocp_intro_chancecon} and input constraints~\eqref{eq:ocp_intro_InputCon} are satisfied for all ${k>0}$.
\end{proposition}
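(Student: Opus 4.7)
The plan is to leverage the two ingredients of the hypothesis in sequence: first, robust positive invariance gives that the (measured) state remains in $\gls*{X}_{\text{f}}$ at every step, so in particular stays in $\tilde{\gls*{X}}_{\text{f}}$; second, the defining properties of $\tilde{\gls*{X}}_{\text{f}}$ in~\eqref{eq:termset} supply exactly the pointwise input membership and one-step nominal tube membership needed to chain back through the two-layer (noise then disturbance) tightening to the original constraints.

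First, I will apply robust positive invariance to conclude that for every $k\geq 0$, $\hat{\bm{x}}_k\in\gls*{X}_{\text{f}}\subseteq\tilde{\gls*{X}}_{\text{f}}$, regardless of the realizations $\bm{\mu}_{[0,k]}\subset\glsd*{epsk}$ and $\bm{d}_{[0,k-1]}\subset\glsd*{dk}$. Substituting $\hat{\bm{x}}_k$ into the definition of $\tilde{\gls*{X}}_{\text{f}}$ in~\eqref{eq:termset} yields two pointwise facts: the applied input satisfies $\bm{K}\hat{\bm{x}}_k\in\hat{\mathcal{U}}_0$, and the one-step nominal successor generated from the measurement satisfies $(\bm{A}+\bm{B}\bm{K})\hat{\bm{x}}_k\in\gls*{tightXnoise}_1$.

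The input constraint in~\eqref{eq:ocp_intro_InputCon} then follows directly from the chain $\hat{\mathcal{U}}_0=\mathcal{U}_0\ominus\bm{K}\glsd*{epsl[0]}\subseteq\mathcal{U}_0\subseteq\gls*{U}$, since $\mathcal{U}_0$ is the probabilistically tightened set from~\eqref{eq:inputprobcons_tight}, which is by construction contained in the user-specified set $\gls*{U}$. Because $\bm{u}_k=\bm{K}\hat{\bm{x}}_k$, this step is essentially immediate.

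The state chance constraint requires more care and is where the main obstacle lies. The plan is to decompose the true successor as
\begin{equation*}
\bm{x}_{k+1}=(\bm{A}+\bm{B}\bm{K})\hat{\bm{x}}_k-\bm{A}\bm{\mu}_k+\bm{E}\bm{d}_k,
\end{equation*}
using $\hat{\bm{x}}_k=\bm{x}_k+\bm{\mu}_k$, and then peel off the two layers of the tightening separately. Setting $\hat{\bm{z}}:=(\bm{A}+\bm{B}\bm{K})\hat{\bm{x}}_k\in\gls*{tightXnoise}_1=\gls*{tightX}_1\ominus\mathcal{E}_{\mu,1}$, the Pontryagin difference together with $\bm{A}\bm{\mu}_k\in\bm{A}\glsd*{epsk}=\mathcal{E}_{\mu,1}$ places the true-state nominal prediction $\bm{z}:=\hat{\bm{z}}-\bm{A}\bm{\mu}_k$ in $\gls*{tightX}_1$. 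The remaining contribution $\bm{E}\bm{d}_k$ is distributed as the one-step error $\bm{e}_1$ of~\eqref{eq:error_dynamics} starting from $\bm{e}_0=\bm{0}$, so by the sampling-based construction of $\bm{\eta}_1$ in~\eqref{eq:stateprobcons_eta}, $\Pr{\bm{z}+\bm{e}_1\in\gls*{X}}\geq p$, which gives $\Pr{\bm{x}_{k+1}\in\gls*{X}}\geq p$. The delicate point is matching the sign of the perturbation $-\bm{A}\bm{\mu}_k$ to an element of $\mathcal{E}_{\mu,1}$ when invoking the Pontryagin inclusion, which relies on the noise set $\glsd*{epsk}$ being balanced about the origin (or on adopting the signed-hull variant of the construction in Lemma~\ref{lem:noise_tight}); aside from that, the argument is a direct concatenation of the two tightening steps introduced in Section~\ref{sec:smpc}.
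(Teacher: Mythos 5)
Your proof is correct and follows essentially the same route as the paper's: robust positive invariance keeps the measured state in $\tilde{\mathcal{X}}_{\text{f}}$ at every step, and the two memberships encoded in~\eqref{eq:termset} are then unwound through the robust and probabilistic tightening layers to recover the original input and chance constraints. Your explicit decomposition $\bm{x}_{k+1}=(\bm{A}+\bm{B}\bm{K})\hat{\bm{x}}_k-\bm{A}\bm{\mu}_k+\bm{E}\bm{d}_k$ merely spells out what the paper compresses into ``by definition of the tube set,'' and the sign caveat you raise about $-\bm{A}\bm{\mu}_k$ versus $\mathcal{E}_{\mu,1}$ points to a minor imprecision in the paper's set conventions rather than a gap in your argument.
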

\begin{proof}
	By definition of $\tilde{\gls*{X}}_{\text{f}}$, a measured state $\hat{\bm{x}} \in \gls*{X}_{\text{f}}$ implies $\hat{\bm{x}}^{+} \in \gls*{tightXnoise}_1$, which in turn implies $\bm{x}^{+} \in \gls*{tightX}_1$ for the actual (not measured) state by definition of the tube set~\eqref{eq:tightcons_noise}. 
	By construction of $\gls*{tightX}_1$ in \eqref{eq:stateprobcons_tight}, $\bm{x}^{+} \in \gls*{tightX}_1$ implies satisfaction of the state constraints $\Pr{\bm{x}^{+} \in \gls*{X}} \ge p$. By robust positive invariance of $\gls*{X}_{\text{f}}$, $\hat{\bm{x}}^{+}\in\gls*{X}_{\text{f}}$ and the arguments holds for all future time steps by induction.
	Similarly, $\hat{\bm{x}} \in \gls*{X}_{\text{f}}$ implies $\bm{K}\hat{\bm{x}} \in \hat{\mathcal{U}}_0$ so that input constraints $\bm{u}=\bm{K}\hat{\bm{x}}\in\gls*{U}$ are satisfied since $\hat{\mathcal{U}}_0\subseteq\gls*{U}$. 
	Again, robust positive invariance of $\gls*{X}_{\text{f}}$ guarantees that $\hat{\bm{x}}^{+}\in\gls*{X}_{\text{f}}$ and the constraints hold for all time.
\end{proof}
We use $\gls*{X}_{\text{f}}$ to specify the terminal constraint set $\gls*{termSetNoise}$ of the OCP such that
\begin{equation}
    \hat{\bm{z}}_{L|k}\in\gls*{termSetNoise}\implies \Pr{\bm{x}_{L|k}\in\gls*{X}_{\text{f}}} \ge p
\end{equation}
for all realizations of the disturbance and measurement noise.
$\gls*{termSetNoise}$ is computed by first probabilistically tightening $\gls*{X}_{\text{f}}$  with respect to the additive disturbance as in~\eqref{eq:stateprobcons_tight}, and then robustly tightening with respect to the measurement noise.
Let $\gls*{X}_{\text{f}} = \left\{\bm{x} \in \glsd*{xk} ~\left|~ \bm{G}_{x,\text{f}} \bm{x} \le \bm{g}_{x,\text{f}} \right.\right\}$, then
\begin{equation} \label{eq:stateprobcons_tight_terminal}
    \gls*{termSet} = \left\{\bm{z} \in \glsd*{xk}  ~\left|\right.~ \bm{G}_{x,\text{f}} \bm{z} \le \bm{\eta}_{\text{f}}\right\},
\end{equation}
where $\bm{\eta}_{\text{f}}$ solves the chance constrained optimization
\begin{subequations} \label{eq:stateprobcons_eta_terminal}
	\begin{align}
 	\bm{\eta}_{\text{f}} &= \max\limits_{\tilde{\bm{\eta}}} \tilde{\bm{\eta}} \\
	\text{s.t. } & \Pr{\tilde{\bm{\eta}} \le \bm{g}_{x,\text{f}} - \bm{G}_{x,\text{f}} \bm{e}_{L} } \ge p,
	\end{align}
\end{subequations}
and the robustly tightened terminal set is then defined as
\begin{equation} \label{eq:tightcons_terminal_noise}
	\gls*{termSetNoise} := \gls*{termSet} \ominus \glsd*{epsL},
\end{equation}
where $\glsd*{epsL}$ encompasses all uncertainty induced by the measurement noise and is given by \eqref{eq:MeasNoiseTube}.
\subsection{Recursive Feasibility} \label{sec:recfeas}
We want to guarantee that if the OCP $\gls*{OCP}(\gls*{xhk})$ is feasible at time step $k$ (there exists an admissible input and corresponding state sequence), then the OCP $\gls*{OCP}(\gls*{xhk1})$ is feasible at the next time step $k+1$.
In nominal MPC, recursive feasibility (Def.~\ref{def:recFeasPrelim}) can be guaranteed with a control invariant terminal constraint set. 
The usual argument, as for example presented in~\cite{grune2017nonlinear}, is based on a set $\mathcal{C}_l$ that denotes all states from which the terminal set can be reached in $l$ steps without violation of input or state constraints. That is, $\mathcal{C}_l$ is the set of feasible initial states of the OCP with horizon $l$, and $\mathcal{C}_0$ is equal to the terminal set. 
Since the terminal set is control invariant, $\bm{x}\in\mathcal{C}_l$ implies $\bm{x}\in\mathcal{C}_{l+1}$, and therefore $\mathcal{C}_{l}\subseteq \mathcal{C}_{l+1}$.
If $\bm{x}\in\mathcal{C}_l$ and we apply the first input of any admissible input sequence, the successor state $\bm{x}^+$ is in $\mathcal{C}_{l-1}\subseteq \mathcal{C}_l$. Thus $\mathcal{C}_l$ is positive invariant under the MPC law (yielding admissible inputs), and for all $\bm{x}_0\in\mathcal{C}_l$ the OCP remains feasible for all time.

In the case of disturbed (as opposed to nominal) systems, it is not guaranteed that $\bm{x}\in\mathcal{C}_l\implies \bm{x}^+\in\mathcal{C}_{l-1}$ , since $\bm{x}^+$ is not equal to the next predicted state. 
For probabilstically tightened constraint sets, this implication is nontrivial to restore, since 
the probability of constraint violation $l$ steps ahead of time step $k$ differs from the probability of constraint violation $l-1$ steps ahead of time step $k+1$~\cite{kouvaritakis2010explicit}.
Furthermore, in our setting, the inital state of the OCP is not $\bm{x}$, but the measured state $\hat{\bm{x}}$, so that $\gls*{xhk}\in\mathcal{C}_l\implies \gls*{xhk1}\in\mathcal{C}_{l-1}$ depends on both $\gls*{epsk}$ and $\gls*{epsk1}$.
A remedy presented in~\cite{lorenzen2016constraint} in the model-based setting for the case without measurement noise, is to directly ensure robust positive control invariance of $\mathcal{C}_L$, or equivalently robust positive invariance of $\mathcal{C}_L$ for the closed loop system under MPC law, by introducing an additional constraint on the first step of the prediction horizon. 
In the following, we construct such a first step constraint from data and include robustness with respect to measurement noise.
Since input and state sequences within the OCP~\eqref{eq:ocp} are uniquely determined by the decision vector $\bm{\alpha}_k$ in~\eqref{eq:ocp_hankel}, we frame recursive feasibility as the problem of ensuring the existence of a feasible $\bm{\alpha}_{k+1}$ at the next time step.
\begin{proposition}[Recursive feasibility]\label{prop:RecFeas}
Let the set $\gls*{feasA}(\gls*{xhk})$ denote all feasible decision variables $\bm{\alpha}_k$ for $\gls*{OCP}(\gls*{xhk})$, i.e., $\gls*{feasA}(\gls*{xhk}) = \{\bm{\alpha}_k\in\mathbb{R}^{N-L} \mid \text{
     \eqref{eq:ocp_init}-\eqref{eq:ocp_firstCon} are satisfied}\}$.
	The OCP~\eqref{eq:ocp} is recursively feasible according to Definition~\ref{def:recFeasPrelim} for system~\eqref{eq:system} under the proposed control law~\eqref{eq:controllaw} if
	\begin{equation}\label{eq:feasA}
		\gls*{feasA}(\gls*{xhk}) \neq \emptyset ~\implies~ \gls*{feasA}(\gls*{xhk1}) \neq \emptyset
	\end{equation}
	for every realizations of the disturbance ${\gls*{dk} \in \glsd*{dk}}$ and measurement noises $\gls*{epsk}, \gls*{epsk1} \in \glsd*{epsk}$.
\end{proposition}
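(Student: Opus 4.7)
The plan is to observe that this proposition is a reformulation of Definition~\ref{def:recFeasPrelim}: the standard definition speaks about admissible input sequences $\bm{U}_k$, whereas the OCP~\eqref{eq:ocp} is parameterized by the decision variable $\bm{\alpha}_k$. The proof therefore reduces to establishing the equivalence
\begin{equation*}
    \gls*{feasA}(\gls*{xhk})\neq\emptyset \;\Longleftrightarrow\; \text{there exists an admissible } \bm{U}_k \text{ for } \gls*{OCP}(\gls*{xhk}),
\end{equation*}
after which the hypothesis~\eqref{eq:feasA} yields exactly the implication required by Definition~\ref{def:recFeasPrelim}.

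For the ``$\Rightarrow$'' direction, I would take any $\bm{\alpha}_k\in\gls*{feasA}(\gls*{xhk})$ and read off from~\eqref{eq:ocp_hankel} the associated $\bm{V}_k$ and $\hat{\bm{Z}}_k$; because $\bm{\alpha}_k$ is feasible, these sequences satisfy~\eqref{eq:ocp_init}--\eqref{eq:ocp_firstCon} by definition, and the admissible input sequence is then recovered as $\bm{u}_{l|k}=\bm{v}_{l|k}+\bm{K}\hat{\bm{z}}_{l|k}$.

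For the ``$\Leftarrow$'' direction I would invoke Lemma~\ref{lem:prestabRepresent}: any admissible $\bm{U}_k$ with its induced nominal trajectory, paired with the zero predicted disturbance sequence $\ul{\bm{d}}_{[0|k,L|k]}=\bm{0}$ mandated by the nominal representation~\eqref{eq:alphaz}, is a valid trajectory of the pre-stabilized system and hence admits at least one $\bm{\alpha}_k\in\mathbb{R}^{N-L}$ satisfying the Hankel identity~\eqref{eq:ocp_hankel}; the remaining polytopic constraints on $\hat{\bm{z}}_{l|k}$ and $\bm{v}_{l|k}+\bm{K}\hat{\bm{z}}_{l|k}$ transfer unchanged, so such an $\bm{\alpha}_k$ lies in $\gls*{feasA}(\gls*{xhk})$. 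Combining the two directions, the implication~\eqref{eq:feasA} is identical to the statement of Definition~\ref{def:recFeasPrelim} applied to $\gls*{OCP}(\gls*{xhk})$, proving the proposition.

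The main obstacle is essentially absent at this level: the statement is a sufficient condition for recursive feasibility, and its proof is a bookkeeping argument made possible by the equivalence between the parametric decision variable $\bm{\alpha}_k$ and the predicted trajectories that Lemma~\ref{lem:prestabRepresent} guarantees. The genuinely difficult work — explicitly exhibiting a feasible $\bm{\alpha}_{k+1}$ for every realization of $\bm{d}_k$ and $\bm{\mu}_{k+1}$ — is deferred to the subsequent construction in Section~\ref{sec:properties}, where the first-step constraint $\hat{\bm{z}}_{1|k}\in\gls*{firstSet}$ ensures that the shifted candidate remains inside the tightened nominal tube, and the robust positive invariance of $\gls*{termSetNoise}$ closes the induction at the terminal step.
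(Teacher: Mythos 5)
Your proposal is correct and follows essentially the same reasoning as the paper, which offers no separate proof of this proposition but justifies it with exactly the observation you formalize — that input and state sequences in the OCP are in one-to-one correspondence with the decision vector $\bm{\alpha}_k$ via the Hankel relation~\eqref{eq:ocp_hankel}, so non-emptiness of $\gls*{feasA}(\cdot)$ is equivalent to the existence of an admissible input sequence in the sense of Definition~\ref{def:recFeasPrelim}. Your explicit two-direction bookkeeping (reading off $\bm{V}_k,\hat{\bm{Z}}_k$ from a feasible $\bm{\alpha}_k$, and conversely invoking Lemma~\ref{lem:prestabRepresent} with zero predicted disturbance to recover some $\bm{\alpha}_k$ from an admissible trajectory) correctly supplies the details the paper leaves implicit, and you rightly note that the substantive construction of a feasible $\bm{\alpha}_{k+1}$ is deferred to the first-step constraint and Theorem~\ref{prop:recfeas}.
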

 Both $\gls*{epsk}$ and $\gls*{epsk1}$ influence $\gls*{feasA}(\gls*{xhk1})$, since $\gls*{epsk}$ influences the control input $\gls*{ulaw}(\gls*{xhk})=\gls*{ulaw}(\gls*{xk}+\gls*{epsk})$, which in turn influences the successor state $\gls*{xk1}$, and thereby the initial state $\gls*{xhk1}=\gls*{xk1}+\gls*{epsk1}$ of the next OCP. 
In the following, we compute the set of feasible initial states
\begin{equation} \label{eq:reachset}
	\mathcal{C}_L = \left\{ \hat{\bm{z}}_{0|k} \in \glsd*{xk} \mid
	(\exists\,\bm{\alpha}_z \in \mathbb{R}^{N-L})\, \text{\eqref{eq:ocp_init}-\eqref{eq:ocp_terminalCon}}
	\right\}.
\end{equation}
In the model-based case, $\mathcal{C}_L$ can be computed via backward recursion~\cite{gutman1987algorithm} based on the nominal dynamics of the system and set algebra. In the data-driven case, equivalently, $\mathcal{C}_L$ may be computed by representing the system matrices $(\bm{A},\bm{B},\gls*{Bd})$ with data, based on a suitable extension of~\cite[Theorem 1]{de2019formulas} to include disturbance data.
In order to avoid this implicit system identification step, we present a projection-based technique that directly uses the data-driven system representation~\eqref{eq:alphaz}.
To that end, we first merge all state constraint sets $\gls*{tightXnoise}_l$, $l \in \mathbb{N}_0^L$, and the terminal constraint $\gls*{termSetNoise}$ into a single constraint set $\gls*{tightX}_{[1,\,L]} = \{ \ul{\bm{z}}_{[1|k,\, L|k]} \in \mathbb{R}^{n_\mathrm{x} L}  \mid \tilde{\bm{G}}_x \ul{\bm{z}}_{[1|k,\, L|k]} \le \tilde{\bm{\eta}}_x \}$ for the stacked vector of the predicted state sequence. The construction of $\tilde{\bm{G}}_x$ and $\tilde{\bm{\eta}}_x$ is straightforward, since all constraint sets are polytopic. 
Likewise, 
we define the constraint set for the vector of stacked inputs $\ul{\bm{u}}_{[0|k, L|k]}$ as $\hat{\mathcal{U}}_{[0,\,L]} = \{ \ul{\bm{u}}_{[0|k,\, L|k]} \in \mathbb{R}^{n_\mathrm{u}(L+1)}  \mid \tilde{\bm{G}}_u \ul{\bm{u}}_{[0|k,\, L|k]} \le \tilde{\bm{\sigma}}_u \}$.
\begin{multline} \label{eq:alphaset_z}
\mathcal{A}_z = \{ \bm{\alpha} \in \mathbb{R}^{N-L} \mid \tilde{\bm{G}}_u \bm{H}_u \bm{\alpha} \le \tilde{\bm{\sigma}}_u,\,
	\bm{H}_d \bm{\alpha} = \bm{0},\\ \tilde{\bm{G}}_x\left[\bm{H}_{x}\right]_{[n_\mathrm{x}+1,n_\mathrm{x}(L+1)]} \bm{\alpha} \le \tilde{\bm{\eta}}_x
\}
\end{multline}
and the $L$-step feasible initial state set $\mathcal{C}_L$~\eqref{eq:reachset} is computed by projecting $\mathcal{A}_z$ onto the coordinates of the initial state,
\begin{equation}
    \mathcal{C}_L = \left[\bm{H}_{z}\right]_{[1,n_\mathrm{x}]} \mathcal{A}_z.
\end{equation}
As mentioned above, $\mathcal{C}_L$ is not (necessarily) robust positive control invariant with respect to all uncertainties (disturbances and measurement noise).
In particular, since recursive feasibility depends on the measured state, see~\eqref{eq:feasA}, we require robust positive control invariance of (a subset of) $\mathcal{C}_L$ with respect to the evolution of the measured state
\begin{align}\label{eq:xmeasure_dynamics}
	\hat{\bm{x}}_{k+1} &= \bm{A} \left(\hat{\bm{x}}_k - \gls*{epsk}\right)+ \bm{B} \bm{u}_k + \gls*{Bd} \bm{d}_k + \gls*{noise}_{k+1}.
\end{align}
That is, the control invariant set needs to be robust with respect to the 
extended disturbance $\gls*{wk}:= \gls*{Bd} \gls*{dk} - \bm{A} \gls*{epsk} + \gls*{epsk1}$, $\gls*{wk}\in\glsd*{wk} = \gls*{Bd} \glsd*{dk} \oplus (-\bm{A}\glsd*{epsk}) \oplus \glsd*{epsk}$.
The disturbance support set $\glsd*{wk}$ can be computed from data as
\begin{equation} \label{eq:ext_dist_set}
	\glsd*{wk} = \mathcal{E}_{d,1} \oplus (-\glsd*{epsl1}) \oplus \glsd*{epsk},
\end{equation}
where $\glsd*{epsl1}$ is as in~\eqref{eq:MeasNoiseTube} and $\mathcal{E}_{d,1}$, equal to $\gls*{Bd} \glsd*{dk}$, is obtained by 
defining the appropriate set of decision variables
\begin{multline} \label{eq:alphaset_e}
	\mathcal{A}_e = \{ \bm{\alpha} \in \mathbb{R}^{N-L} \mid 
	\left[\bm{H}_u\right]_{[1,n_\mathrm{u}]} \bm{\alpha} = \bm{0}, \\
	\bm{G}_d\left[\bm{H}_d\right]_{[1,n_\mathrm{o}]} \bm{\alpha} \le \bm{g}_d,\, \left[\bm{H}_{x}\right]_{[1,n_\mathrm{x}]} \bm{\alpha} = \bm{0}\},
\end{multline}
with zero input and initial state, and then projecting onto the first step state errors $\bm{e}_{1|k}$ via
\begin{equation} \label{eq:stateerror_set}
    \mathcal{E}_{d,1} = \left[\bm{H}_x\right]_{[n_\mathrm{x}+1,2n_\mathrm{x}]}\mathcal{A}_{e}.
\end{equation}
Based on $\glsd*{wk}$~\eqref{eq:ext_dist_set}, we can construct a robust control invariant subset of $\mathcal{C}_L$ as
\begin{equation}\label{eq:robustCIreachableset}
\mathcal{C}^{\infty}_L = \cap_{i=0}^{\infty} \mathcal{C}^{i}_L,
\end{equation}
where $\mathcal{C}^{0}_L = \mathcal{C}_L$ and
\begin{equation} \label{eq:robustreachableset}
	\mathcal{C}^{i+1}_L = \left\{ \hat{\bm{x}} \in \mathcal{C}^{i}_L \,\left|\, \begin{array}{l} 
	\exists ~ \bm{\alpha} \in \mathbb{R}^{N-L}, ~\bm{u} \in \hat{\mathcal{U}}_0: \\
	\mat{\bm{u} \\ \bm{0} \\ \hat{\bm{x}} \\ \hat{\bm{x}}^{+} } = \mat{\left[\bm{H}_u \right]_{[1,n_\mathrm{u}]}\\ \left[\bm{H}_d\right]_{[1,n_\mathrm{o}]} \\ \left[\bm{H}_{x}\right]_{[1,n_\mathrm{x}]} \\ \left[\bm{H}_{x}\right]_{[n_\mathrm{x}+1,2n_\mathrm{x}]}} \bm{\alpha} \vspace*{1mm}\\  \hat{\bm{x}} \in \mathcal{C}^i_L, ~ \hat{\bm{x}}^{+} \in \mathcal{C}^{i}_L \ominus \glsd*{wk}
	\end{array} \right.
	\right\}.	
\end{equation}
In practice, $\mathcal{C}^{\infty}_L$~\eqref{eq:robustCIreachableset} is computed recursively via $\mathcal{C}^{i}_L$~\eqref{eq:robustreachableset}, until $\mathcal{C}^{i}_L = \mathcal{C}^{i+1}_L$ for some $i \in \mathbb{N}$, which implies that $\mathcal{C}^{\infty}_L = \mathcal{C}^{i}_L$ is robust positive control invariant. The idea of~\eqref{eq:robustreachableset} is to tighten the set $\mathcal{C}_L$ until all contained states admit an input for which the successor state is in $\mathcal{C}_L\ominus \glsd*{wk}$, i.e., the successor state is far enough away from the boundary of the set to render it robustly positive invariant with respect to all uncertainty in $\glsd*{wk}$.

The constraint on the first predicted state $\hat{\bm{z}}_{1|k}$,
\begin{equation} \label{eq:firststepcons}
    \hat{\bm{z}}_{1|k} \in  \gls*{firstSet}=\mathcal{C}_L^{\infty} \ominus \glsd*{wk},
\end{equation}
guarantees that for all possible realizations of the disturbance $\bm{d}_k$ and measurement noises $\gls*{noise}_k, \gls*{noise}_{k+1}$, the next measured state $\hat{\bm{x}}_{k+1}$ is inside the robust positive invariant subset $\mathcal{C}_L^{\infty}$ of the $L$-step reachable initial state set $\mathcal{C}_L$.
\begin{theorem}[Recursive feasibility]\label{prop:recfeas}
Let the first step constraint set of the receding horizon OCP~\eqref{eq:ocp} be given by~\eqref{eq:firststepcons}.
The receding horizon OCP is recursively feasible, and if
$\bm{x}_0\in \mathcal{C}_L^{\infty}\ominus \glsd*{epsk}$, the receding horizon OCP is feasible for all time.
\end{theorem}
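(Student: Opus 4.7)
The plan is to prove recursive feasibility by establishing two properties: (I) whenever $\mathcal{A}_\text{F}(\hat{\bm{x}}_k) \neq \emptyset$, the next measurement satisfies $\hat{\bm{x}}_{k+1} \in \mathcal{C}_L^\infty$; and (II) every $\hat{\bm{x}} \in \mathcal{C}_L^\infty$ satisfies $\mathcal{A}_\text{F}(\hat{\bm{x}}) \neq \emptyset$. Chaining these will establish the implication~\eqref{eq:feasA} and hence recursive feasibility via Proposition~\ref{prop:RecFeas}. For the initial time, $\bm{x}_0 \in \mathcal{C}_L^\infty \ominus \mathcal{M}$ and Assumption~\ref{assum:DisturbanceBounds}(b) will give $\hat{\bm{x}}_0 = \bm{x}_0 + \bm{\mu}_0 \in (\mathcal{C}_L^\infty \ominus \mathcal{M}) \oplus \mathcal{M} \subseteq \mathcal{C}_L^\infty$ for any admissible noise realization, so (II) yields $\mathcal{A}_\text{F}(\hat{\bm{x}}_0) \neq \emptyset$; combining (I) and (II) then propagates feasibility to all $k$ by induction.

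For (I), I would let $\bm{\alpha}_k^*$ be an OCP optimizer producing $\bm{v}^*_{0|k}$ and nominal successor $\hat{\bm{z}}^*_{1|k}$. The first-step constraint~\eqref{eq:ocp_firstCon} together with~\eqref{eq:firststepcons} forces $\hat{\bm{z}}^*_{1|k} \in \mathcal{C}_L^\infty \ominus \mathcal{W}$. The applied input is $\bm{u}_k = \bm{v}^*_{0|k} + \bm{K}\hat{\bm{x}}_k$, and the nominal identity gives $\hat{\bm{z}}^*_{1|k} = \bm{A}\hat{\bm{x}}_k + \bm{B}\bm{u}_k$, so the measured-state evolution~\eqref{eq:xmeasure_dynamics} reduces to
\begin{equation*}
\hat{\bm{x}}_{k+1} = \hat{\bm{z}}^*_{1|k} + \bm{E}\bm{d}_k - \bm{A}\bm{\mu}_k + \bm{\mu}_{k+1} = \hat{\bm{z}}^*_{1|k} + \bm{w}_k,
\end{equation*}
with $\bm{w}_k \in \mathcal{W}$ by Assumption~\ref{assum:DisturbanceBounds} and the set construction~\eqref{eq:ext_dist_set}. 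The Minkowski inclusion $(\mathcal{C}_L^\infty \ominus \mathcal{W}) \oplus \mathcal{W} \subseteq \mathcal{C}_L^\infty$ then places $\hat{\bm{x}}_{k+1}$ in $\mathcal{C}_L^\infty$.

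For (II), I would pick $\hat{\bm{x}} \in \mathcal{C}_L^\infty$ and use the fixed-point property of~\eqref{eq:robustCIreachableset}-\eqref{eq:robustreachableset} to extract an admissible $\bm{u}^\dagger \in \hat{\mathcal{U}}_0$ with $\bm{A}\hat{\bm{x}} + \bm{B}\bm{u}^\dagger \in \mathcal{C}_L^\infty \ominus \mathcal{W}$, which satisfies~\eqref{eq:ocp_firstCon}; the containment $\mathcal{C}_L^\infty \subseteq \mathcal{C}_L$ simultaneously supplies a feasible $L$-step trajectory from $\hat{\bm{x}}$ for~\eqref{eq:ocp_init}-\eqref{eq:ocp_terminalCon}. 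I would then merge the two by taking $\bm{u}_{0|k} = \bm{u}^\dagger$, setting $\hat{\bm{z}}_{1|k} = \bm{A}\hat{\bm{x}} + \bm{B}\bm{u}^\dagger$, and continuing with pre-stabilizing feedback $\bm{v}_{l|k} = \bm{0}$ for $l \geq 1$; the remaining tightened state and terminal constraints would be verified using the contraction $\hat{\mathcal{Z}}_{l+1} \subseteq \hat{\mathcal{Z}}_l$ of Remark~\ref{rem:ConvHull} together with robust positive invariance of $\hat{\mathcal{Z}}_{\text{f}}$ under $\bm{K}$ as designed in Section~\ref{sec:properties}, and Lemma~\ref{lem:extfundamental} would certify the existence of the corresponding $\bm{\alpha} \in \mathcal{A}_\text{F}(\hat{\bm{x}})$. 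The main obstacle is exactly this merger: the set recursion only guarantees one-step controllability, which need not coincide with the first step of any particular feasible $L$-step trajectory, so the candidate must be constructed explicitly, and the contraction of the tube sequence together with the invariance of the terminal set under the local feedback are the levers that make the construction succeed.
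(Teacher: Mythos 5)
Your overall architecture --- (I) the first-step constraint together with the extended-disturbance set $\mathcal{W}$ forces $\hat{\bm{x}}_{k+1}=\hat{\bm{z}}^*_{1|k}+\bm{w}_k\in\mathcal{C}_L^{\infty}$, and (II) membership in $\mathcal{C}_L^{\infty}$ implies $\mathcal{A}_{\text{F}}\neq\emptyset$, chained through Proposition~\ref{prop:RecFeas} --- is exactly the paper's argument, and your step (I) and your treatment of the initial time coincide with the paper's proof.

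The gap is in your explicit candidate for step (II). The fixed point of \eqref{eq:robustreachableset} hands you an input $\bm{u}^\dagger\in\hat{\mathcal{U}}_0$ whose nominal successor lies in $\mathcal{C}_L^{\infty}\ominus\mathcal{W}$, but continuing from that successor with $\bm{v}_{l|k}=\bm{0}$ is not admissible in general: $\mathcal{C}_L^{\infty}\ominus\mathcal{W}$ is typically much larger than the terminal set $\hat{\mathcal{Z}}_{\text{f}}$, so robust positive invariance of $\hat{\mathcal{Z}}_{\text{f}}$ under $\bm{K}$ says nothing about the pure-feedback trajectory emanating from an arbitrary point of $\mathcal{C}_L^{\infty}\ominus\mathcal{W}$, and the contraction $\hat{\mathcal{Z}}_{l+1}\subseteq\hat{\mathcal{Z}}_l$ of Remark~\ref{rem:ConvHull} is a statement about the constraint sets along the horizon, not about the closed-loop state, so it cannot certify $\hat{\bm{z}}_{l|k}\in\hat{\mathcal{Z}}_l$ for $l\ge 2$ or that $\hat{\mathcal{Z}}_{\text{f}}$ is reached in the remaining $L-1$ steps. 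The merger you correctly flag as the main obstacle is therefore not closed by the levers you invoke. The paper does not attempt this construction at all: it identifies $\mathcal{C}_L$ with the set of measured states for which $\mathcal{A}_{\text{F}}$ is nonempty (i.e., the feasible set of the full OCP), so that $\hat{\bm{x}}_{k+1}\in\mathcal{C}_L^{\infty}\subseteq\mathcal{C}_L$ yields feasibility at $k+1$ ``by construction''; whether that identification is fully faithful to the definition \eqref{eq:reachset}, which omits the first-step constraint, is a looseness of the paper, but it is the intended argument. If you want to keep your more explicit route, the standard repair is to show that the shifted optimal solution from time $k$, appended with terminal feedback, remains admissible for \emph{all} constraints including \eqref{eq:ocp_firstCon}, or to fold the first-step constraint into the set being rendered invariant, rather than to splice an invariance-certifying input onto an unrelated feasible trajectory.
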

\begin{proof}
If $\bm{x}_0\in \mathcal{C}_L^{\infty}\ominus \glsd*{epsk}$, then $\hat{\bm{x}}_0\in\mathcal{C}_L^{\infty}\subseteq \mathcal{C}_L$ and the OCP $\gls*{OCP}(\hat{\bm{x}}_0)$ is feasible by construction of $\mathcal{C}_L=\left\{ \gls*{xhk} \in \glsd*{xk} \mid \gls*{feasA}(\gls*{xhk}) \neq \emptyset\right\}$.
For recursive feasibility, assume that the OCP is feasible at time step $k$ with optimizer $\bm{\alpha}^*_k$ such that $\bm{\alpha}^*_k \in \gls*{feasA}(\gls*{xhk})$.
By construction of the first-step constraint ${\hat{\bm{z}}_{1|k}^*=\left[\bm{H}_x\right]_{[n_\mathrm{x}+1,2n_\mathrm{x}]}\bm{\alpha}^*_k \in \mathcal{C}_L^{\infty} \ominus \glsd*{wk}}$, the next measured state satisfies ${\gls*{xhk1}=\hat{\bm{z}}_{1|L} + \gls*{Bd} \gls*{dk} - \bm{A}\gls*{epsk} + \gls*{epsk1} \in \mathcal{C}_L^{\infty}}$ for all possible realizations of $\gls*{epsk}$, $\gls*{epsk1}$, $\gls*{dk}$. Since $\mathcal{C}_L^{\infty}\subseteq \mathcal{C}_L $, ${\gls*{feasA}(\gls*{xhk1}) \neq \emptyset}$ and the OCP is recursively feasible by Proposition~\ref{prop:RecFeas}. 
\end{proof}
\subsection{Input-to-State Stability} \label{sec:iss}
In order to establish ISS according to Definition~\eqref{eq:iss_cond}, we use a special ISS-Lyapunov function as proposed in~\cite{jiang2001input,goulart2006optimization}.
\begin{definition}[ISS-Lyapunov function]
    A function $V: \gls*{roa} \rightarrow \mathbb{R}_0^+$ is an \emph{ISS-Lyapunov function} for system $\gls*{xk1} = \bm{f}\left(\gls*{xk}, \gls*{wk}\right)$~\eqref{eq:sys_iss} if there exist functions $\alpha_1,\alpha_2,\alpha_3\in\mathscr{K}_{\infty}$ and $\gamma\in\mathscr{K}$ such that for all $\bm{x}\in\gls*{roa}$, $\bm{w}\in\glsd*{wk}$
    \begin{subequations}\label{eq:iss_lyapunov}
    \begin{align}
			&\alpha_1\left(\norm{\bm{x}}\right) \le V\left(\bm{x}\right) \le \alpha_2\left(\norm{\bm{x}}\right),\label{eq:iss_lyapunov1} \\
			&V\left(\bm{f}\left(\bm{x}, \bm{w}\right)\right) - V\left(\bm{x}\right) \le - \alpha_3\left(\norm{\bm{x}}\right) + \gamma\left(\norm{\bm{w}}\right).\label{eq:iss_lyapunov2}
	\end{align}
	\end{subequations}
\end{definition}
\begin{lemma}[ISS via ISS-Lyap. funct. {\cite[Lemma~3.5]{jiang2001input}}]\label{lem:iss_lyap1}
	If system~\eqref{eq:sys_iss} admits a continuous ISS-Lyapunov function $V: \gls*{roa} \rightarrow \mathbb{R}_0^+$, then it is ISS with region of attraction $\gls*{roa}$.
\end{lemma}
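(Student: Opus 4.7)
The plan is to prove that the existence of a continuous ISS-Lyapunov function $V$ yields a bound of the form $\norm{\bm{\phi}(k,\bm{x}_0,\bm{w}_{[0,k]})} \le \beta(\norm{\bm{x}_0},k) + \gamma(\norm{\bm{w}_{[0,k]}}_\infty)$ by combining the Lyapunov decrease with a comparison-principle argument. The idea is standard in ISS theory: turn the Lyapunov decrease condition~\eqref{eq:iss_lyapunov2} into a pointwise decrease of $V$ whenever the state is ``large enough'' relative to the disturbance, then use $\alpha_1 \le V \le \alpha_2$ from~\eqref{eq:iss_lyapunov1} to translate bounds on $V$ into bounds on $\norm{\bm{x}}$.

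The first step I would carry out is to isolate a region where $V$ strictly decreases. Using~\eqref{eq:iss_lyapunov2}, whenever $\alpha_3(\norm{\bm{x}}) \ge 2\gamma(\norm{\bm{w}})$, one obtains
\begin{equation*}
V(\bm{f}(\bm{x},\bm{w})) - V(\bm{x}) \le -\tfrac{1}{2}\alpha_3(\norm{\bm{x}}) \le -\tfrac{1}{2}\alpha_3(\alpha_2^{-1}(V(\bm{x}))),
\end{equation*}
so $V$ decreases nonlinearly in itself on the set $\{\bm{x} \in \gls*{roa} \mid \norm{\bm{x}} > \rho(\norm{\bm{w}}_\infty)\}$ with $\rho := \alpha_3^{-1}\circ(2\gamma)\in\mathscr{K}$. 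Setting $c(s) := \alpha_2(\rho(s))$, the sublevel set $\{V \le c(\norm{\bm{w}}_\infty)\}$ is shown to be forward invariant (any state that would exit sees a Lyapunov decrease), which gives the ``asymptotic gain'' contribution $\gamma(\norm{\bm{w}}_\infty)$ in the ISS bound.

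Next, I would handle the transient behaviour outside this sublevel set. As long as $V(\bm{x}_k) > c(\norm{\bm{w}_{[0,k]}}_\infty)$, we have a decrease of $V$ bounded by a function of $V$ itself, so an application of the discrete-time scalar comparison lemma (the discrete analogue of Sontag's construction) yields a class-$\mathscr{KL}$ function $\tilde\beta$ with $V(\bm{x}_k) \le \tilde\beta(V(\bm{x}_0),k)$ for all $k$ such that $\bm{x}_k$ has not yet entered the invariant sublevel set. Combining the two regimes and using $\alpha_1^{-1}$ to pass from $V$ back to $\norm{\bm{x}}$, I obtain
\begin{equation*}
\norm{\bm{x}_k} \le \alpha_1^{-1}\bigl(\tilde\beta(\alpha_2(\norm{\bm{x}_0}),k)\bigr) + \alpha_1^{-1}\bigl(c(\norm{\bm{w}_{[0,k]}}_\infty)\bigr),
\end{equation*}
which, after identifying $\beta(r,k) := \alpha_1^{-1}(\tilde\beta(\alpha_2(r),k)) \in \mathscr{KL}$ and $\gamma(s) := \alpha_1^{-1}(c(s)) \in \mathscr{K}$, is precisely~\eqref{eq:iss_cond}.

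The main obstacle is constructing the class-$\mathscr{KL}$ function $\tilde\beta$ from the nonlinear Lyapunov decrease, since the decrease $-\tfrac{1}{2}\alpha_3(\alpha_2^{-1}(V))$ is not linear in $V$. This is precisely where the discrete-time comparison arguments of~\cite{jiang2001input} are needed: one constructs $\tilde\beta$ via the solution of the scalar recursion $v_{k+1} = v_k - \tfrac{1}{2}\alpha_3(\alpha_2^{-1}(v_k))$, bounding trajectories of this recursion by a $\mathscr{KL}$ function through a continuity-and-monotonicity argument. Since this statement is quoted directly from~\cite[Lemma~3.5]{jiang2001input}, I would simply invoke their construction rather than redo it, using continuity of $V$ to ensure that the comparison estimates hold on all of $\gls*{roa}$.
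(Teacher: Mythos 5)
The paper offers no proof of this lemma: it is imported verbatim from the cited reference, so the only in-paper ``proof'' is the citation itself. Your sketch reproduces the standard argument behind that result and is structurally sound --- split according to whether $\alpha_3(\norm{\bm{x}})$ dominates $2\gamma(\norm{\bm{w}})$, obtain a self-contained decrease of $V$ outside the ball of radius $\rho(\norm{\bm{w}}_\infty)$ with $\rho=\alpha_3^{-1}\circ(2\gamma)$, run a discrete-time comparison lemma there to produce the $\mathscr{KL}$ transient, and read off the gain from an invariant sublevel set. Deferring the construction of $\tilde\beta$ to the reference is legitimate here, since the lemma is quoted rather than proved in the paper.

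One step as written would not survive scrutiny: the claim that $\{V\le c(\norm{\bm{w}}_\infty)\}$ with $c=\alpha_2\circ\rho$ is forward invariant ``because any state that would exit sees a Lyapunov decrease''. That is the continuous-time intuition. In discrete time the decrease estimate is only available at states with $\norm{\bm{x}}>\rho(\norm{\bm{w}}_\infty)$; from a state inside that ball, condition~\eqref{eq:iss_lyapunov2} only yields $V(\bm{f}(\bm{x},\bm{w}))\le V(\bm{x})+\gamma(\norm{\bm{w}})$, so the trajectory can overshoot your sublevel set by up to $\gamma(\norm{\bm{w}}_\infty)$ in a single step. The standard fix is to enlarge the candidate invariant set to $\{V\le \alpha_2(\rho(s))+\gamma(s)\}$, i.e., take $c(s)=\alpha_2(\rho(s))+\gamma(s)$; this changes only the final ISS gain $\alpha_1^{-1}\circ c$, not the structure of the argument. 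You should also state explicitly that the trajectory remains in $\gls*{roa}$, the domain of $V$, at every step --- this is needed for each application of \eqref{eq:iss_lyapunov1} and \eqref{eq:iss_lyapunov2} along the solution and is supplied by the robust positive invariance of $\gls*{roa}$ already required in Definition~\ref{def:iss}.
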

Lemma~\ref{lem:iss_lyap1} states that ISS follows from the existence of a continuous ISS-Lyapunov function. Condition~\eqref{eq:iss_lyapunov2} depends on the disturbance $\bm{w}$.
The following lemma shows how this condition can be replaced with the undisturbed case $\bm{w}=\bm{0}$.
\begin{lemma}[Lipschitz ISS-Lyap. funct. {\cite[Lemma~22]{goulart2006optimization}}] \label{lem:iss_lyap_undisturbed}
	Let the set $\gls*{roa}$ contain the origin in its interior, be robust positive invariant for system~\eqref{eq:sys_iss}, and let $V: \gls*{roa} \rightarrow \mathbb{R}_0^+$ be an ISS-Lyapunov function for the undisturbed system~$\gls*{xk1} = \bm{f}\left(\gls*{xk}, \bm{0}\right)$.
	$V$ is an ISS-Lyapunov function for the original disturbed system~\eqref{eq:sys_iss}
	if 
	\begin{enumerate}
	    \item $V: \gls*{roa} \rightarrow \mathbb{R}_0^+$ is Lipschitz continuous on $\gls*{roa}$,
	    \item $\bm{f}: \gls*{roa} \times \glsd*{wk} \rightarrow \glsd*{xk}$ is Lipschitz continuous on $\gls*{roa} \times \glsd*{wk}$,\label{eq:f_lipschitz}
	    \item $\bm{f}\left(\bm{x}, \bm{w}\right) := \bm{f}_x\left(\bm{x}\right) + \bm{w}$, where $\bm{f}_x: \gls*{roa} \rightarrow \glsd*{xk}$ is continuous on $\gls*{roa}$.\label{eq:fx_lipschitz}
	\end{enumerate}
\end{lemma}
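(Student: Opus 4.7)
The plan is to confirm that $V$ satisfies both ISS-Lyapunov conditions \eqref{eq:iss_lyapunov1} and \eqref{eq:iss_lyapunov2} for the disturbed system \eqref{eq:sys_iss}, leveraging that it already does so for the undisturbed system and exploiting the additive entry of the disturbance. Condition \eqref{eq:iss_lyapunov1} is inherited immediately: because $V$ and the set $\gls*{roa}$ are the same in both settings, the class-$\mathscr{K}_\infty$ bounds $\alpha_1(\norm{\bm{x}})\le V(\bm{x})\le\alpha_2(\norm{\bm{x}})$ carry over verbatim.

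The substantive work is the decrease inequality \eqref{eq:iss_lyapunov2}. I would introduce the undisturbed successor $\bm{f}_x(\bm{x})=\bm{f}(\bm{x},\bm{0})$ as an intermediate point and write
\begin{equation*}
V(\bm{f}(\bm{x},\bm{w})) - V(\bm{x}) = \bigl[V(\bm{f}(\bm{x},\bm{w})) - V(\bm{f}_x(\bm{x}))\bigr] + \bigl[V(\bm{f}_x(\bm{x})) - V(\bm{x})\bigr].
\end{equation*}
The second bracket is at most $-\alpha_3(\norm{\bm{x}})$ directly by the ISS-Lyapunov hypothesis on the undisturbed system. For the first bracket, condition~3 gives $\bm{f}(\bm{x},\bm{w}) - \bm{f}_x(\bm{x}) = \bm{w}$, and condition~1 then yields $|V(\bm{f}(\bm{x},\bm{w})) - V(\bm{f}_x(\bm{x}))| \le L_V \norm{\bm{w}}$, where $L_V$ denotes the Lipschitz constant of $V$. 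Setting $\gamma(s):=L_V s$, which is in $\mathscr{K}$, produces exactly the inequality \eqref{eq:iss_lyapunov2}.

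Robust positive invariance of $\gls*{roa}$ under \eqref{eq:sys_iss} is used implicitly but crucially: it guarantees that $\bm{f}(\bm{x},\bm{w})\in\gls*{roa}$ for every $\bm{w}\in\glsd*{wk}$, so $V(\bm{f}(\bm{x},\bm{w}))$ is well-defined and the decomposition above makes sense. Condition~2 and the continuity part of condition~3 do not enter the decomposition itself; they appear in the broader ISS framework, ensuring the hypotheses of Lemma~\ref{lem:iss_lyap1} can be applied to conclude ISS once the Lyapunov inequalities are established.

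The main conceptual point---and the only spot where the hypotheses combine nontrivially---is that purely additive entry of $\bm{w}$ in $\bm{f}$ converts Lipschitz continuity of $V$ straight into a class-$\mathscr{K}$ gain on the disturbance, without requiring any sharper regularity of the dynamics. No genuine obstacle is expected: the argument is short and telescopes on the intermediate state $\bm{f}_x(\bm{x})$.
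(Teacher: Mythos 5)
Your proposal is correct. Note that the paper does not actually prove this statement---it is imported verbatim as \cite[Lemma~22]{goulart2006optimization}---and your telescoping argument through the intermediate point $\bm{f}_x(\bm{x})$, using robust positive invariance to keep both successors in $\gls*{roa}$ and the Lipschitz constant of $V$ to produce the linear gain $\gamma(s)=L_V s$, is precisely the standard proof of that cited result. Your side remark that condition~2 is not needed for the decrease inequality itself but only for the surrounding ISS machinery is also accurate.
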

The following corollary is a direct result of the Lipschitz continuity of linear functions, and will be used to proof ISS of system~\eqref{eq:system1} under the proposed control law $\gls*{ulaw}\left(\gls*{xhk}\right)=\gls*{ulaw}\left(\gls*{xk}+\gls*{epsk}\right)$~\eqref{eq:controllaw},
\begin{equation}\label{eq:system_aut}
	\gls*{xk1} = 
	\bm{A} \gls*{xk} + \bm{B}\gls*{ulaw}\left(\gls*{xk}+\gls*{epsk}\right) + \gls*{Bd}\gls*{dk}.
\end{equation}
\begin{corollary}\label{cor:iss_lti}
    Consider a closed loop LTI system $\bm{f}\left(\gls*{xk}, \gls*{wk}\right) = \bm{A}\gls*{xk} + \bm{B}\gls*{ulaw}\left(\gls*{xk}\right)+ \gls*{wk}$ with additive disturbance $\gls*{wk}$ from a compact set $\glsd*{wk}$, and both $\gls*{roa}$ and $V: \gls*{roa} \rightarrow \mathbb{R}_0^+$ as in the hypothesis of Lemma~\ref{lem:iss_lyap_undisturbed}. 
    The closed loop $\bm{f}\left(\gls*{xk}, \gls*{wk}\right)$ is ISS with respect to disturbances $\gls*{wk}\in\glsd*{wk}$ if $\gls*{ulaw}(\gls*{xk})$ is Lipschitz continuous on $\gls*{X}_0$.
\end{corollary}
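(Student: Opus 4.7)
The plan is to verify each of the three hypotheses of Lemma~\ref{lem:iss_lyap_undisturbed} for the closed loop $\bm{f}(\bm{x}_k,\bm{w}_k)=\bm{A}\bm{x}_k+\bm{B}\bm{\kappa}(\bm{x}_k)+\bm{w}_k$, and then invoke that lemma (together with Lemma~\ref{lem:iss_lyap1}) to conclude ISS. Condition~1 (Lipschitz continuity of $V$ on $\mathcal{X}_0$) and the assumption that $V$ is an ISS-Lyapunov function for the undisturbed system are directly inherited from the hypothesis ``as in the hypothesis of Lemma~\ref{lem:iss_lyap_undisturbed}'', so these require no work.

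For Condition~3, I would simply identify $\bm{f}_x(\bm{x}_k):=\bm{A}\bm{x}_k+\bm{B}\bm{\kappa}(\bm{x}_k)$, so that by construction $\bm{f}(\bm{x},\bm{w})=\bm{f}_x(\bm{x})+\bm{w}$. Continuity of $\bm{f}_x$ on $\mathcal{X}_0$ follows from the linearity of $\bm{x}\mapsto \bm{A}\bm{x}$ and the assumed Lipschitz continuity (hence continuity) of $\bm{\kappa}$.

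For Condition~2, the Lipschitz continuity of $\bm{f}$ on $\mathcal{X}_0\times\mathcal{W}$ follows by the triangle inequality and the Lipschitz constant $L_\kappa$ of $\bm{\kappa}$: for any two points $(\bm{x}_1,\bm{w}_1),(\bm{x}_2,\bm{w}_2)\in\mathcal{X}_0\times\mathcal{W}$,
\begin{equation*}
\norm{\bm{f}(\bm{x}_1,\bm{w}_1)-\bm{f}(\bm{x}_2,\bm{w}_2)}\le (\norm{\bm{A}}+\norm{\bm{B}}L_\kappa)\norm{\bm{x}_1-\bm{x}_2}+\norm{\bm{w}_1-\bm{w}_2},
\end{equation*}
so a joint Lipschitz constant (e.g.\ $\max\{\norm{\bm{A}}+\norm{\bm{B}}L_\kappa,1\}$ against the product-space norm) exists.

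With all three conditions verified, Lemma~\ref{lem:iss_lyap_undisturbed} upgrades $V$ from an ISS-Lyapunov function for the undisturbed closed loop to an ISS-Lyapunov function for the disturbed closed loop; Lemma~\ref{lem:iss_lyap1} then yields ISS with region of attraction $\mathcal{X}_0$. The proof is essentially a hypothesis-checking exercise, so there is no genuine obstacle; the only subtlety is making sure Lipschitz continuity of $\bm{\kappa}$ is used at exactly the step where it matters (Condition~2), since without it the sum $\bm{A}\bm{x}+\bm{B}\bm{\kappa}(\bm{x})$ would only be guaranteed continuous, not Lipschitz, and Lemma~\ref{lem:iss_lyap_undisturbed} would not apply.
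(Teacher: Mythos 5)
Your proof is correct and follows exactly the route the paper intends: the paper states this corollary without proof as "a direct result of the Lipschitz continuity of linear functions," and your argument simply makes explicit the verification of the three conditions of Lemma~\ref{lem:iss_lyap_undisturbed} (with the Lipschitz constant of $\bm{\kappa}$ entering precisely in Condition~2). No discrepancy with the paper's approach.
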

By Proposition~\ref{prop:lipschitz_ulaw}, $\gls*{ulaw}\left(\gls*{xhk}\right)$ is Lipschitz continuous in its variable $\gls*{xhk}$. However, the measurement noise $\gls*{epsk}$ inside $\gls*{xhk}=\gls*{xk}+\gls*{epsk}$~\eqref{eq:system2} renders $\gls*{ulaw}(\gls*{xk}+\gls*{epsk})$ discontinuous in $\gls*{xk}$. As a consequence, direct stability proofs for the evolution of the state $\gls*{xk}$ as in~\eqref{eq:system_aut} are challenging.
Therefore, as in~\cite{Roset.2008}, we first analyze stability for the equivalent evolution in terms of the state measurements~\eqref{eq:xmeasure_dynamics}
\begin{equation}\label{eq:system_aut_noisy}
	\gls*{xhk1} =  \bm{\hat{f}}(\gls*{xhk},\gls*{wk}) = \bm{A} \gls*{xhk} + \bm{B} \gls*{ulaw}\left(\gls*{xhk}\right) + \gls*{wk}.
\end{equation}
For the artificial noisy system~\eqref{eq:system_aut_noisy}, the control law $\gls*{ulaw}\left(\gls*{xhk}\right)$ is Lipschitz continuous in the state $\gls*{xhk}$ of the system and by Corollary~\ref{cor:iss_lti}, system~\eqref{eq:system_aut_noisy} is ISS with respect to the extended disturbance $\bm{w}$ if there is a Lipschitz continuous ISS-Lyapunov function for the uncertainty-free system
\begin{equation}\label{eq:system_aut_undisturbed}
\gls*{xhk1}=\bm{\hat{f}}(\gls*{xhk},\bm{0})=\bm{A} \gls*{xhk} + \bm{B} \gls*{ulaw}\left(\gls*{xhk}\right).
\end{equation}.
System~\eqref{eq:system_aut_undisturbed} coincides with the nominal dynamics~\eqref{eq:nominal_dynamics} assumed in the OCP. As a consequence, if the predictive control scheme was applied to system~\eqref{eq:system_aut_undisturbed}, the measured state at the next time step would be equal to the first predicted state (of the optimal state sequence),
\begin{equation}\label{eq:truePred}
    \hat{\bm{z}}_{1|k}^*=\gls*{xhk1}=\hat{\bm{z}}_{0|k+1}.
\end{equation}
This equality allows us to express (cost) functions of the successor state $\gls*{xhk1}$ in terms of the optimal solution of the OCP $\gls*{OCP}(\gls*{xhk})$, and thereby upper bound the descent of the cost function from time step $k$ to $k+1$.
In the following, we show that the optimal cost $J_L^*$ of the OCP,
\begin{equation}\label{eq:optimalcost}
	J_L^* \left(\hat{\bm{x}}_{k}\right) = \sum\limits_{l = 0}^{L-1} J_{\text{s}}\left(\hat{\bm{z}}_{l|k}^*, \bm{u}_{l|k}^*\right) + J_{\text{f}}\left(\hat{\bm{z}}_{L|k}^*\right),
\end{equation}
is an ISS-Lyapunov function for system~\eqref{eq:system_aut_undisturbed} with proposed control law $\gls*{ulaw}\left(\gls*{xhk}\right)=\bm{K}\gls*{xhk}+\bm{v}^{*}_{0|k}$~\eqref{eq:controllaw}, if the terminal cost function~\eqref{eq:terminalcost} is a Lyapunov function for system~\eqref{eq:system_aut_undisturbed} under pure state feedback $\gls*{ulaw}\left(\gls*{xhk}\right)=\bm{K}\gls*{xhk}$.
 
\begin{proposition}[Terminal cost function] \label{prop:termCost}
Let $\bm{P}$ be the optimal cost-to-go matrix~\eqref{eq:computeP} of the LQR-problem associated with the state feedback gain $\bm{K}$. The terminal cost function~\eqref{eq:terminalcost} is a Lyapunov function in the terminal set $\gls*{termSet}$~\eqref{eq:tightcons_terminal_noise} for the closed-loop system $\bm{x}_{k+1} = \left(\bm{A} + \bm{B} \bm{K}\right)\bm{x}_k$, such that
\begin{equation} \label{eq:terminaldecrescency}
    J_{\text{f}}\left(\bm{x}_{k+1}\right) - J_{\text{f}}\left(\gls*{xk}\right) = -J_{\text{s}}\left(\gls*{xk}, \bm{K}\gls*{xk}\right).
\end{equation}
\end{proposition}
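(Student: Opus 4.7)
My plan is to prove~\eqref{eq:terminaldecrescency} by direct algebraic manipulation, exploiting the fact that the matrix $\bm{P}$, as the LQR cost-to-go associated with gain $\bm{K}$, satisfies the discrete-time closed-loop Lyapunov identity
\begin{equation*}
    (\bm{A}+\bm{B}\bm{K})^\top \bm{P} (\bm{A}+\bm{B}\bm{K}) - \bm{P} = -\left(\bm{Q} + \bm{K}^\top \bm{R} \bm{K}\right).
\end{equation*}
This identity is a well-known consequence of the discrete algebraic Riccati equation that $\bm{P}$ solves: substituting the LQR gain $\bm{K} = -(\bm{R}+\bm{B}^\top\bm{P}\bm{B})^{-1}\bm{B}^\top\bm{P}\bm{A}$ into the DARE and simplifying produces exactly the right-hand side above. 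The first step of the proof is therefore simply to record this identity.

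With this in hand, the claim follows by a two-line computation. Substituting the closed-loop dynamics $\bm{x}_{k+1}=(\bm{A}+\bm{B}\bm{K})\bm{x}_k$ into the terminal cost yields
\begin{equation*}
    J_{\text{f}}(\bm{x}_{k+1}) = \bm{x}_k^\top (\bm{A}+\bm{B}\bm{K})^\top \bm{P} (\bm{A}+\bm{B}\bm{K}) \bm{x}_k,
\end{equation*}
and subtracting $J_{\text{f}}(\bm{x}_k) = \bm{x}_k^\top \bm{P}\bm{x}_k$ together with the Lyapunov identity gives
\begin{equation*}
    J_{\text{f}}(\bm{x}_{k+1}) - J_{\text{f}}(\bm{x}_k) = -\bm{x}_k^\top \left(\bm{Q} + \bm{K}^\top \bm{R}\bm{K}\right)\bm{x}_k = -\normsq{\bm{x}_k}_{\bm{Q}} - \normsq{\bm{K}\bm{x}_k}_{\bm{R}},
\end{equation*}
which by~\eqref{eq:stagecost} equals $-J_{\text{s}}(\bm{x}_k,\bm{K}\bm{x}_k)$, as required. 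The positive definiteness of $\bm{P}$ assures $J_{\text{f}}(\bm{x})>0$ for $\bm{x}\neq\bm{0}$ and $J_{\text{f}}(\bm{0})=0$, while positive definiteness of $\bm{Q}$ and $\bm{R}$ ensures strict decrease along closed-loop trajectories, so $J_{\text{f}}$ is a Lyapunov function on $\gls*{termSet}$.

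The proof is essentially a textbook calculation and I do not anticipate a substantive obstacle. The only conceptual point worth emphasizing is that the calculation is indifferent to how $\bm{P}$ is actually obtained: the paper's data-driven construction~\eqref{eq:computeP} recovers $\bm{P}$ from trajectory data, but the algebraic Lyapunov identity holds for any $\bm{P}$ that solves the DARE with LQR gain $\bm{K}$, including the $\bm{P}$ returned by that data-driven procedure. Consequently, no additional data-dependent argument is required beyond Remark~\ref{rem:ComputeK}, which guarantees that $\bm{K}$ (and the associated $\bm{P}$) can be identified purely from the available trajectory.
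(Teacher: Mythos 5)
Your proof is correct and follows essentially the same route as the paper's own argument in the supplementary material: both derive the closed-loop Lyapunov identity $(\bm{A}+\bm{B}\bm{K})^\top\bm{P}(\bm{A}+\bm{B}\bm{K})-\bm{P}=-(\bm{Q}+\bm{K}^\top\bm{R}\bm{K})$ from the DARE and the LQR gain formula, and then obtain~\eqref{eq:terminaldecrescency} by substituting the closed-loop dynamics into the quadratic terminal cost. Your closing observation that the identity is indifferent to how $\bm{P}$ is obtained matches the paper's use of the data-driven construction in Proposition~\ref{prop:compute_P}.
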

The proof of Proposition~\eqref{prop:termCost} is standard in MPC and included in the supplementary material.

De Persis and Tesi derive the data-based computation of $\bm{K}$ in \cite[Theorem 4]{de2019formulas} by framing the LQR problem as a special 2-norm optimization problem. As a consequence, they do not touch on the solution of the associated algebraic Riccati equation, i.e., the matrix $\bm{P}$ defining the optimal cost-to-go $\bm{x}^T\bm{P}\bm{x}$.
\begin{proposition} \label{prop:compute_P}
    From the optimal LQR state-feedback gain $\bm{K}$, the matrix $\bm{P}$ can be computed as the solution of
    \begin{align}\label{eq:computeP}
    	& \underset{\bm{P}}{\text{minimize}}\,\text{trace}(\bm{P}) \\
    	\text{s.t. } & \tilde{\bm{W}}^{\top} \bm{X}^{\top}_{+} \bm{P}  \bm{X}_{+}\tilde{\bm{W}} - \bm{P} + \bm{Q}_{\bm{P}} \prec 0,\nonumber
    \end{align}
    where $\bm{Q}_{\bm{P}} = \bm{Q} + \bm{K}^{\top} \bm{R} \bm{K}$ and $\tilde{\bm{W}}$ is defined such that $\bm{U}\tilde{\bm{W}} =\bm{K}$ and $\bm{X}\tilde{\bm{W}} =\bm{I}_n$.
\end{proposition}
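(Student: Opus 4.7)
The plan is to reduce the LMI constraint in~\eqref{eq:computeP} to a standard discrete-time Lyapunov equation for the closed-loop LQR system, and then argue that the trace minimization pins the solution to the unique positive-definite Lyapunov matrix. The backbone is the data-driven closed-loop representation $\bm{A}+\bm{B}\bm{K} = \bm{X}_+\tilde{\bm{W}}$ combined with the textbook characterization of the LQR cost-to-go.

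First, I would establish the closed-loop identity. Since the nominal trajectory data used in Remark~\ref{rem:ComputeK} satisfy $\bm{X}_+ = \bm{A}\bm{X} + \bm{B}\bm{U}$, post-multiplying by $\tilde{\bm{W}}$ and using $\bm{X}\tilde{\bm{W}}=\bm{I}_n$ and $\bm{U}\tilde{\bm{W}}=\bm{K}$ gives
\[
\bm{X}_+\tilde{\bm{W}} = \bm{A}\bm{X}\tilde{\bm{W}} + \bm{B}\bm{U}\tilde{\bm{W}} = \bm{A}+\bm{B}\bm{K},
\]
which mirrors the closed-loop representation used in~\cite[Theorem~3]{de2019formulas}.

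Second, I would invoke the classical LQR result that the cost-to-go matrix $\bm{P}^{\star}$ associated with $\bm{K}$ is the unique positive-definite solution of the discrete Lyapunov equation
\[
(\bm{A}+\bm{B}\bm{K})^{\top}\bm{P}^{\star}(\bm{A}+\bm{B}\bm{K}) - \bm{P}^{\star} + \bm{Q}_{\bm{P}} = 0,
\]
i.e., exactly the matrix whose existence underpins the terminal cost in Proposition~\ref{prop:termCost}. Substituting the closed-loop identity rewrites this equation as $\tilde{\bm{W}}^{\top}\bm{X}_+^{\top}\bm{P}^{\star}\bm{X}_+\tilde{\bm{W}} - \bm{P}^{\star} + \bm{Q}_{\bm{P}} = 0$, so $\bm{P}^{\star}$ lies on the boundary of the LMI constraint in~\eqref{eq:computeP}.

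The final step is a monotonicity/comparison argument. For any $\bm{P}$ feasible in~\eqref{eq:computeP}, subtracting the Lyapunov equation for $\bm{P}^{\star}$ from the strict LMI and using the closed-loop identity yields $(\bm{A}+\bm{B}\bm{K})^{\top}(\bm{P}-\bm{P}^{\star})(\bm{A}+\bm{B}\bm{K}) - (\bm{P}-\bm{P}^{\star}) \prec 0$. Schur stability of $\bm{A}+\bm{B}\bm{K}$ (inherited from LQR optimality) then forces $\bm{P}-\bm{P}^{\star}\succ 0$ via the standard Lyapunov series representation, hence $\text{trace}(\bm{P}) > \text{trace}(\bm{P}^{\star})$, so the infimum of the SDP equals $\text{trace}(\bm{P}^{\star})$. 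The main subtlety I anticipate lies in the strictness of the inequality: with $\prec$ the infimum is not attained, an artifact of the usual LMI formulation of Lyapunov/Riccati problems that is resolved either by relaxing to $\preceq$ or by absorbing an arbitrarily small slack, with either convention identifying $\bm{P}^{\star}$ unambiguously.
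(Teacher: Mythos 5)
Your proof is correct and follows essentially the same route as the paper's: reduce the data-driven LMI to the model-based Lyapunov inequality via the identity $\bm{A}+\bm{B}\bm{K}=\bm{X}_{+}\tilde{\bm{W}}$ (the paper cites \cite[Theorem~2]{de2019formulas} for this substitution, which you instead derive directly from $\bm{X}_{+}=\bm{A}\bm{X}+\bm{B}\bm{U}$), and then identify the trace-minimizer with the unique positive-definite solution of the closed-loop Lyapunov equation. Your write-up is in fact more complete than the paper's, which merely asserts that the Lyapunov solution is the minimizer: the comparison argument showing $\bm{P}\succ\bm{P}^{\star}$ for every feasible $\bm{P}$, and the observation that the strict inequality $\prec$ leaves the infimum unattained (so the program should be read with $\preceq$ or an arbitrarily small slack), are both correct and address details the paper glosses over.
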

The proof of Proposition~\ref{prop:compute_P} is in the supplementary material.

We proof the next result with the help of an explicit candidate solution.
Given a feasible solution $\hat{\bm{Z}}_k^*, \bm{V}_k^*, \bm{\alpha}_k^*$ to the OCP~\eqref{eq:ocp} at time step $k$, a candidate solution for the time instant $k+1$ can be constructed by shifting the input trajectory $\bm{V}_k$ and appending $\bm{K}\hat{\bm{z}}_{L|k}^*$.
    Denote the optimal input sequence associated with $\bm{V}_k^*$ as
    $\bm{U}_k^* = (\bm{u}_{0|k}^*, \bm{u}_{1|k}^*, \dots, \bm{u}_{L-1|k}^*)$,
    ${\bm{u}_{i|k}^* = \bm{v}_{i|k}^* + \bm{K}\hat{\bm{z}}_{i|k}^*}$.
    A feasible, likely suboptimal, candidate solution for the next time step is
	$\tilde{\bm{U}}_{k+1} = (\bm{u}_{1|k}^*, \bm{u}_{2|k}^*, \dots, \bm{u}_{L-1|k}^*, \bm{K}\hat{\bm{z}}_{L|k}^*)$.
    Since predictions are exact~\eqref{eq:truePred} for the uncertainty-free system~\eqref{eq:system_aut_undisturbed}, we have $\gls*{xhk1}=\hat{\bm{z}}_{1|k}^*$ and the candidate state sequence is 
	$\tilde{\bm{Z}}_{k+1} = ( \hat{\bm{z}}_{1|k}^*, \hat{\bm{z}}_{2|k}^*, \dots, \hat{\bm{z}}_{L|k}^*, \tilde{\bm{z}}_{L|k+1} )$
    with ${\tilde{\bm{z}}_{L|k+1} = \left(\bm{A} + \bm{B}\bm{K}\right)\hat{\bm{z}}_{L|k}^*}=\left[\bm{H}_x\right]_{[Ln_\mathrm{x}+1,Ln_\mathrm{x}+n_\mathrm{x}]}\tilde{\bm{\alpha}}_{k+1}$
    and $\tilde{\bm{\alpha}}_{k+1} = \mat{\bm{H}_u\\ \left[\bm{H}_x\right]_{[1,n_\mathrm{x}]}}^{\dagger} \mat{\ul{\tilde{\bm{U}}}_{k+1}\\\hat{\bm{z}}_{1|k}^*}$.

    Based on Proposition~\ref{prop:termCost}, the difference between the terminal cost of the solution of the OCP at time step $k$ and the terminal cost of the explicit candidate solution at the next time step $k+1$ is upper bounded via 
    \begin{equation}\label{eq:upperbound}
        J_{\text{f}}\left(\tilde{\bm{z}}_{L|k+1}\right) - J_{\text{f}}\left(\hat{\bm{z}}_{L|k}^*\right)\le -J_{\text{s}}\left(\hat{\bm{z}}_{L|k}^*, \bm{K}\hat{\bm{z}}_{L|k}^*\right).
    \end{equation}
\begin{lemma}[ISS of the artificial system] \label{lemma:iss_noisy}
	The optimal cost $J_L^*$~\eqref{eq:optimalcost} is an ISS-Lyapunov function for system~\eqref{eq:system_aut_undisturbed}. Furthermore, $J_L^*$ is Lipschitz continuous on $\mathcal{C}_L^{\infty}$. As a consequence, system~\eqref{eq:system_aut_noisy} is ISS with region of attraction $\mathcal{C}_L^{\infty}$.
\end{lemma}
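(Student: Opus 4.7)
The plan is to prove that $J_L^*$ is an ISS-Lyapunov function for the uncertainty-free artificial system~\eqref{eq:system_aut_undisturbed} and additionally Lipschitz continuous on $\mathcal{C}_L^{\infty}$; then ISS of the disturbed artificial system~\eqref{eq:system_aut_noisy} follows directly from Corollary~\ref{cor:iss_lti}, which in turn invokes Lemma~\ref{lem:iss_lyap_undisturbed}. Throughout, robust positive invariance of the region of attraction $\mathcal{C}_L^{\infty}$ is provided by Theorem~\ref{prop:recfeas}, so we only need to verify the Lyapunov inequalities~\eqref{eq:iss_lyapunov1}--\eqref{eq:iss_lyapunov2} for $\bm{w}=\bm{0}$ together with Lipschitz continuity.

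First I would establish the two pointwise bounds. The lower bound is immediate from $\bm{Q}\succ\bm{0}$: $J_L^*(\gls*{xhk})\ge\normsq{\gls*{xhk}}_{\bm{Q}}\ge\lambda_{\min}(\bm{Q})\norm{\gls*{xhk}}^2$, so $\alpha_1(r)=\lambda_{\min}(\bm{Q})r^2\in\mathscr{K}_{\infty}$. For the upper bound, I would use that $\mathcal{C}_L^{\infty}$ is a compact polytope containing the origin in its interior, that the optimal $J_L^*$ is continuous with $J_L^*(\bm{0})=0$, and that locally about the origin the pure state feedback trajectory $(\bm{x}, \bm{K}\bm{x},(\bm{A}+\bm{B}\bm{K})\bm{x},\ldots)$ is admissible and yields a quadratic upper bound $c\norm{\gls*{xhk}}^2$; extending to all of $\mathcal{C}_L^{\infty}$ by compactness produces $\alpha_2\in\mathscr{K}_{\infty}$. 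For the descent condition, I use the shifted candidate sequence $(\tilde{\bm{Z}}_{k+1},\tilde{\bm{U}}_{k+1})$ constructed above the statement: in the undisturbed case $\gls*{xhk1}=\hat{\bm{z}}_{1|k}^*$ by~\eqref{eq:truePred}, so the shifted tube, input, and terminal constraints are inherited from $k$, the appended $\bm{K}\hat{\bm{z}}_{L|k}^*$ keeps $\tilde{\bm{z}}_{L|k+1}$ inside $\gls*{termSetNoise}$ by invariance, and the first-step constraint is inherited because the nominal one-step successor of a point in $\gls*{firstSet}$ remains in $\mathcal{C}_L^{\infty}$. Optimality of $J_L^*(\gls*{xhk1})$ together with the telescoping of the stage-cost sum and~\eqref{eq:upperbound} then gives
\begin{equation*}
    J_L^*(\gls*{xhk1})-J_L^*(\gls*{xhk})\le -J_{\text{s}}(\gls*{xhk},\gls*{ulaw}(\gls*{xhk}))\le -\lambda_{\min}(\bm{Q})\norm{\gls*{xhk}}^2,
\end{equation*}
establishing~\eqref{eq:iss_lyapunov2} for the undisturbed system with $\alpha_3(r)=\lambda_{\min}(\bm{Q})r^2$ and $\gamma\equiv 0$.

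Lipschitz continuity of $J_L^*$ on $\mathcal{C}_L^{\infty}$ follows from parametric quadratic programming theory: the OCP~\eqref{eq:ocp} is a strictly convex QP parametrized by $\gls*{xhk}$ with polyhedral constraints, its value function is therefore piecewise quadratic and continuous, and on the compact polytope $\mathcal{C}_L^{\infty}$ it is globally Lipschitz, cf.~\cite[Prop.~17]{goulart2006optimization}. Combined with Lipschitz continuity of the linear closed-loop map $\bm{x}\mapsto\bm{A}\bm{x}+\bm{B}\gls*{ulaw}(\bm{x})$ on $\mathcal{C}_L^{\infty}$ (Proposition~\ref{prop:lipschitz_ulaw}) and robust positive invariance of $\mathcal{C}_L^{\infty}$ under $\glsd*{wk}$ (Theorem~\ref{prop:recfeas}), the hypotheses of Lemma~\ref{lem:iss_lyap_undisturbed} and Corollary~\ref{cor:iss_lti} are satisfied, and ISS of~\eqref{eq:system_aut_noisy} with region of attraction $\mathcal{C}_L^{\infty}$ follows.

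The main obstacle I anticipate is verifying feasibility of the shifted candidate against the first-step constraint~\eqref{eq:ocp_firstCon}: the shifted predicted state $\hat{\bm{z}}_{2|k}^*$ is not a priori inside $\gls*{firstSet}=\mathcal{C}_L^{\infty}\ominus\glsd*{wk}$. This has to be handled by exploiting that, in the undisturbed analysis, the extended disturbance $\bm{w}$ is absent, so the nominal one-step successor of $\hat{\bm{z}}_{1|k}^*\in\gls*{firstSet}\subseteq\mathcal{C}_L^{\infty}$ stays in $\mathcal{C}_L^{\infty}$, and the robust positive control invariance proved for Theorem~\ref{prop:recfeas} provides the necessary admissible continuation; a careful bookkeeping of the candidate's $\bm{\alpha}_{k+1}$ via the Hankel representation~\eqref{eq:alphaz} is where the technical effort concentrates.
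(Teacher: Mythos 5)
Your proposal follows essentially the same route as the paper's proof: the shifted candidate solution with appended terminal feedback, the telescoping descent bound $J_L^*(\hat{\bm{x}}_{k+1})-J_L^*(\hat{\bm{x}}_k)\le -J_{\text{s}}(\hat{\bm{x}}_k,\bm{\kappa}(\hat{\bm{x}}_k))$ via~\eqref{eq:upperbound}, Lipschitz continuity of the value function from~\cite[Prop.~17]{goulart2006optimization}, and the conclusion through Lemma~\ref{lem:iss_lyap_undisturbed}. The only differences are cosmetic---you construct the comparison functions $\alpha_1,\alpha_3$ explicitly as $\lambda_{\min}(\bm{Q})r^2$ where the paper invokes \cite[Lemma 4.3]{Khalil.2002}, and you flag the candidate's feasibility against the first-step constraint more carefully than the paper does---so the argument is correct and matches the paper's.
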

\begin{proof}
    $J_L^*$ is positive definite, defined on a domain that contains the origin with $J_L^*(\bm{0})=0$ and continuous by Proposition~\ref{prop:lipschitz_ulaw}. By~\cite[Lemma 4.3]{Khalil.2002}, there exist functions ${\alpha_1,\alpha_2 \in \mathscr{K}_{\infty}}$ such that condition~\eqref{eq:iss_lyapunov1} holds for $V=J_L^*$.
    Condition~\eqref{eq:iss_lyapunov2} is $J_L^*\left(\bm{\hat{f}}(\gls*{xhk},\bm{0})\right) - J_L^*\left(\gls*{xhk}\right) \le - \alpha\left(\norm{\gls*{xhk}}\right)$ for some function $\alpha_3\in\mathscr{K}_{\infty}$, 
    which we show with the help of the suboptimal candidate solution for the next time step $k+1$.
    The cost of the candidate solution, denoted by $\tilde{J}_L(\gls*{xhk1})$, is
    \begin{equation*}
        \sum\limits_{l = 1}^{L-1} J_{\text{s}}\left(\hat{\bm{z}}_{l|k}^*, \bm{u}_{l|k}^*\right) + J_{\text{s}}\left(\hat{\bm{z}}_{L|k}^*, \bm{K}\hat{\bm{z}}_{L|k}^*\right)+ J_{\text{f}}\left(\tilde{\bm{z}}_{L|k+1}\right).        
    \end{equation*}
    Since $J_L^* \left(\hat{\bm{x}}_{k+1}\right) \le \tilde{J}_L(\gls*{xhk1})$, the descent of the optimal cost function $J_L^* \left(\hat{\bm{x}}_{k+1}\right) - J_L^* \left(\hat{\bm{x}}_{k}\right)$ from time step $k$ to time step $k+1$ is upper bounded by $J_L^* (\hat{\bm{x}}_{k+1}) - J_L^* (\hat{\bm{x}}_{k}) \le \tilde{J}_L(\gls*{xhk1}) - J_L^* (\hat{\bm{x}}_{k})
		\underset{\eqref{eq:truePred}}{=} -J_{\text{s}}(\hat{\bm{x}}_{k}, \gls*{ulaw}(\hat{\bm{x}}_{k})) + J_{\text{s}}(\hat{\bm{z}}_{L|k}^*, \bm{K}\hat{\bm{z}}_{L|k}^*)+
    		J_{\text{f}}(\tilde{\bm{z}}_{L|k+1}) - J_{\text{f}}(\hat{\bm{z}}_{L|k}^*)
        \underset{\eqref{eq:upperbound}}{\le} -J_{\text{s}}(\hat{\bm{x}}_{k}, \gls*{ulaw}(\hat{\bm{x}}_{k}))$.
    Since $\gls*{ulaw}\left(\hat{\bm{x}}_{k}\right)$ is continuous in $\hat{\bm{x}}_{k}$ by Proposition~\ref{prop:lipschitz_ulaw}, and $J_{\text{s}}\left(\hat{\bm{x}}_k, \gls*{uk}\right)$~\eqref{eq:stagecost} is continuous in $\hat{\bm{x}}_{k}$ and $\gls*{uk}$, the composition $J_{\text{s}}\left(\hat{\bm{x}}_k, \gls*{ulaw}\left(\hat{\bm{x}}_{k}\right)\right)$ is continuous in $\hat{\bm{x}}_{k}$. Hence, there exists $\alpha_3\in\mathscr{K}_{\infty}$ such that ${\alpha_3\left(\norm{\hat{\bm{x}}_k}\right) \le J_{\text{s}}\left(\hat{\bm{x}}_k, \gls*{ulaw}\left(\hat{\bm{x}}_{k}\right)\right)}$ by ~\cite[Lemma 4.3]{Khalil.2002}.
    As a consequence, $J_L^* \left(\hat{\bm{x}}_{k+1}\right) - J_L^* \left(\hat{\bm{x}}_{k}\right) \le  - \alpha_3\left(\norm{\gls*{xhk}}\right)$.  
    With condition~\eqref{eq:iss_lyapunov1} and \eqref{eq:iss_lyapunov2} satisfied for $\bm{w}=0$, $J_L^*$ is an ISS-Lyapunov function for system~\eqref{eq:system_aut_undisturbed}.   
    
    By construction, $\mathcal{C}_L^{\infty}$~\eqref{eq:robustCIreachableset} is compact, contains the origin and is robust positive invariant for system~\eqref{eq:system_aut_noisy} with respect to all possible disturbances $\bm{w}_k\in\mathcal{W}$. Since the quadratic optimal cost $J_L^*\left(\cdot\right)$ is Lipschitz continuous on the compact region of attraction $\mathcal{C}_L^{\infty}$~\cite[Proposition 17]{goulart2006optimization}, system~\eqref{eq:system_aut_noisy} is ISS by Lemma~\ref{lem:iss_lyap_undisturbed}.
\end{proof}
Lemma~\ref{lemma:iss_noisy} shows that the artificial system~\eqref{eq:system_aut_noisy} is input-to-state stable under the proposed control law~\eqref{eq:controllaw}. ISS for the actual closed loop system~\eqref{eq:system_aut} follows from the measurement equation ${\gls*{xhk} = \gls*{xk} + \gls*{epsk}}$ and the definition of input-to-state stability, as in~\cite[Theorem 3.3]{Roset.2008}, based on properties of comparison functions.
\begin{theorem}[ISS of the closed loop system] \label{prop:iss_real}
System~\eqref{eq:system_aut} is ISS under the proposed control law~\eqref{eq:controllaw}, with region of attraction ${\glsd*{roa} = \mathcal{C}_L^{\infty} \ominus \glsd*{epsk}}$.
\end{theorem}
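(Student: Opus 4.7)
The plan is to leverage Lemma~\ref{lemma:iss_noisy}, which already establishes ISS of the measured-state dynamics~\eqref{eq:system_aut_noisy}, and transport this result to the true-state dynamics~\eqref{eq:system_aut} through the measurement relation $\hat{\bm{x}}_k=\bm{x}_k+\bm{\mu}_k$ together with standard comparison-function calculus, in the spirit of~\cite[Thm.~3.3]{Roset.2008}. Because the linear relation between $\bm{x}_k$ and $\hat{\bm{x}}_k$ is an isometry up to the additive bounded noise $\bm{\mu}_k$, the transfer is essentially an exercise in $\mathscr{KL}/\mathscr{K}$ bookkeeping.

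First, I would verify that the stated region of attraction $\mathcal{X}_0=\mathcal{C}_L^{\infty}\ominus\mathcal{M}$ is compatible with Lemma~\ref{lemma:iss_noisy}: for any $\bm{x}_0\in\mathcal{X}_0$ and any realization $\bm{\mu}_0\in\mathcal{M}$, the first measurement satisfies $\hat{\bm{x}}_0=\bm{x}_0+\bm{\mu}_0\in\mathcal{C}_L^{\infty}$, so Lemma~\ref{lemma:iss_noisy} supplies $\hat{\beta}\in\mathscr{KL}$ and $\hat{\gamma}\in\mathscr{K}$ with
\begin{equation*}
\norm{\hat{\bm{x}}_k}\le\hat{\beta}(\norm{\hat{\bm{x}}_0},k)+\hat{\gamma}(\norm{\bm{w}_{[0,k-1]}}_\infty).
\end{equation*}
Robust positive invariance of $\mathcal{C}_L^{\infty}$ used in Theorem~\ref{prop:recfeas} then guarantees $\hat{\bm{x}}_k\in\mathcal{C}_L^{\infty}$, so the bound applies for all $k\in\mathbb{N}_0$.

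Second, I would combine this bound with the trivial estimates $\norm{\bm{x}_k}\le\norm{\hat{\bm{x}}_k}+\norm{\bm{\mu}_k}$ and $\norm{\hat{\bm{x}}_0}\le\norm{\bm{x}_0}+\norm{\bm{\mu}_0}$, and exploit the weak-triangle property $\hat{\beta}(a+b,k)\le\hat{\beta}(2a,k)+\hat{\beta}(2b,0)$ that holds for any $\mathscr{KL}$ function, obtaining
\begin{equation*}
\norm{\bm{x}_k}\le\hat{\beta}(2\norm{\bm{x}_0},k)+\hat{\beta}(2\norm{\bm{\mu}_0},0)+\hat{\gamma}(\norm{\bm{w}_{[0,k-1]}}_\infty)+\norm{\bm{\mu}_k}.
\end{equation*}
Because $\bm{w}_k=\bm{E}\bm{d}_k-\bm{A}\bm{\mu}_k+\bm{\mu}_{k+1}$ is linear in $(\bm{d}_k,\bm{\mu}_k,\bm{\mu}_{k+1})$, we have $\norm{\bm{w}_{[0,k-1]}}_\infty\le c\max(\norm{\bm{d}_{[0,k-1]}}_\infty,\norm{\bm{\mu}_{[0,k]}}_\infty)$ for some constant $c>0$. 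Setting $\beta(r,k):=\hat{\beta}(2r,k)\in\mathscr{KL}$ and collecting the three remaining noise-dependent summands into a single class-$\mathscr{K}$ function $\gamma$ of the extended input $\max(\norm{\bm{d}_{[0,k-1]}}_\infty,\norm{\bm{\mu}_{[0,k]}}_\infty)$ yields exactly the ISS bound demanded by Definition~\ref{def:iss}.

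The main obstacle is purely notational: one must place the persistent term $\norm{\bm{\mu}_0}$ in the $\gamma$-channel (disturbance dependence), not in the $\beta$-channel (initial-state dependence), since $\bm{\mu}_0$ does not decay with $k$. The weak-triangle inequality together with the monotonicity $\hat{\beta}(\cdot,k)\le\hat{\beta}(\cdot,0)$ accomplishes this cleanly, and the whole argument reduces to invoking~\cite[Thm.~3.3]{Roset.2008} with the linear measurement map as the coordinate change.
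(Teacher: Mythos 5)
Your proposal is correct and follows essentially the same route as the paper's own (supplementary) proof: both start from the ISS bound for the measured-state dynamics supplied by Lemma~\ref{lemma:iss_noisy}, transfer it to the true state via $\bm{x}_k=\hat{\bm{x}}_k-\bm{\mu}_k$, and then use the weak-triangle property of $\mathscr{KL}$ functions plus linearity of $\bm{w}_k$ in $(\bm{d}_k,\bm{\mu}_k,\bm{\mu}_{k+1})$ to repackage the bound into the form of Definition~\ref{def:iss}, exactly in the spirit of~\cite[Theorem 3.3]{Roset.2008}. The only cosmetic differences are your explicit choice $\beta(r,k)=\hat{\beta}(2r,k)$, $\gamma_1(r)=\hat{\beta}(2r,0)$ where the paper merely asserts existence, and your use of $\max$ rather than a sum of norms in the $\gamma$-channel, which is equivalent up to class-$\mathscr{K}$ rescaling.
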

\begin{proof}
    The proof follows from~\cite[Theorem 3.3]{Roset.2008}, since both necessary Assumptions \cite[Assumpt. 3.1, 3.2]{Roset.2008} hold with
    $\bm{A}\bm{x}+\bm{B}\bm{u}$ Lipschitz continuous in $\bm{x}$, $\mathcal{W}$ compact and therefore contained in a closed ball of some radius $\lambda$, the artifical system~\eqref{eq:system_aut_noisy} ISS on $\glsd*{roa}$ w.r.t. the extended additive disturbance $\bm{w}\in\mathcal{W}$, and $\glsd*{roa}$ robust positive invariant w.r.t. the extended additive disturbance by construction.
    For completeness, a self contained proof is included in the supplementary material.
\end{proof}

In summary, we have shown that the proposed receding horizon OCP is recursively feasible and the resulting data-driven predictive controller~\eqref{eq:controllaw} renders the closed loop system ISS with respect to both additive disturbances and measurement noise.

\section{ON THE CASE OF INEXACT DATA}\label{sec:Discussion}
The availability of exact trajectory data is a standing assumption in this work (Assumption~\ref{assum:trajData}).
In essence, we take a \emph{certainty-equivalence} approach, i.e., we  assume that the data-driven representation captures the \emph{true} system, comparable to the case where system matrices $(\bm{A},\bm{B},\gls*{Bd})$ are known.
The proposed control scheme and the proof of its properties were nontrivial, since we investigated a realistic setting \emph{during} the control phase, with the system state influenced by unknown disturbances and only accessible via noisy measurements. 
Even though offline data can be averaged, filtered or similarly de-noised, in practice, data are still likely inexact and Assumption~\ref{assum:trajData} is likely violated.
As a consequence, the hypothesis of Lemma~\ref{lem:extfundamental} no longer holds. 

In that case, there are four consequences for the presented control scheme: 1) predictions within the OCP are inexact even for the nominal system; 2) the computed constraint sets are not as specified and may lose their declared properties; 3) the computed LQR gain $\bm{K}$ is not the actual LQR gain associated with specified $\bm{Q}$ and $\bm{R}$, and may even be non-stabilizing; 4) the computed solution of the algebraic Ricatti equation $\bm{P}$ is (likely) no longer associated with $\bm{K}$ via the Lyapunov equation. 

Problem 1) may be alleviated by a bound on the prediction error, which could be accounted for in a further constraint tightening. Specifying such a bound is a pressing open challenge in the data-driven control literature.
Problem 2) may be alleviated at the cost of conservatism, by guaranteeing either under- or overapproximation of the original sets, whichever is appropriate for the resulting properties in question.
Problem 3) is not as crucial, since only optimality is lost as long as $\bm{K}$ is still stabilizing. In~\cite{Dorfler.2023bridging}, D\"orfler et al. present a regularized data-driven LQR which leads to stabilizing controllers even when the signal-to-noise ratio is large.
Problem 4) has implications related to stability: the solution of the algebraic Ricatti equation $\bm{P}$ is used to guarantee that the terminal cost function is a Lyapunov function for the nominal system under control $\bm{u}=\bm{K}\bm{x}$. This in turn lets us bound the descent of the stage cost. A similiar bound may still be obtained without $\bm{P}$ being accurate.

In practice, deviations from the case of exact data can be reduced by implementing small regularizing adjustments, the benefits of which have been commonly observed in the literature on data-driven control with inexact data~\cite{coulson2019regularized,Dorfler.2023bridging,Doerfler.RegularizedLQR,Doerfler.RegularizedLQR}.
In particular, it is advantageous to penalize large norms of the decision variable $\bm{\alpha}$, because any noise inside the Hankel matrix~\eqref{eq:ocp_hankel} is amplified directly by $\bm{\alpha}$. In line with~\cite{Dorfler.2023bridging}, we propose to add the term $\lambda_{\alpha} \norm{\bm{\Pi}\bm{\alpha}_k}$ to the cost function~\eqref{eq:ocp_cost}, where ${\lambda_{\alpha} > 0}$ is a regularization parameter and $\bm{\Pi}$ is a matrix defined as
\begin{equation}
    \bm{\Pi} := \bm{I} - \bm{H}_{udx}^{\dagger}\bm{H}_{udx}, ~~ \bm{H}_{udx} := \mat{\bm{H}_u - \tilde{\bm{K}} \bm{H}_x\\ \bm{H}_d \\ \left[\bm{H}_{x}\right]_{[1,\,n]}}.
\end{equation}
With the above definition, $\norm{\bm{\Pi}\bm{\alpha}}$ is a measure for the distance of $\bm{\alpha}$ to the image of $\bm{H}_{udx}^{\dagger}$. For $\lambda_{\alpha}$ sufficiently large, the equality $\bm{\Pi}\bm{\alpha}_k = \bm{0}$ holds, which means that $\bm{\alpha}_k \in \text{image}\,\bm{H}_{udx}^{\dagger}$~\cite{Dorfler.2023bridging}.

The presented sets of admissible decision variables $\bm{\alpha}$, such as $\mathcal{A}_{\mu}$~\eqref{eq:MeasNoiseTube} of Lemma~\ref{lem:noise_tight} and $\mathcal{A}_e$ in \eqref{eq:alphaset_e} may be (partially) de-noised by enforcing
$\bm{\Pi} \bm{\alpha} = \bm{0}$ 
in the definition of each set.

Similarly, instead of directly computing state errors from disturbance samples with the pseudo-inverse in~\eqref{eq:stateErrorPred}, we propose to first obtain $\bm{\alpha}_e$ by solving
\begin{align}\label{eq:error_opt}
	\bm{\alpha}_e &= ~\argmin_{\tilde{\bm{\alpha}}_e} ~l_{\alpha}\left(\tilde{\bm{\alpha}}_e\right)\\	
	&\text{s.t. } ~  \mat{\bm{0}\\ \underline{\bm{d}}_{[0|k,\,L|k]} \\
	\bm{e}_0} = \mat{\bm{H}_u - \tilde{\bm{K}}\bm{H}_x \\ \bm{H}_d \\ \left[\bm{H}_{x}\right]_{[1,\,n]}} \tilde{\bm{\alpha}}_e,
\end{align}
for some convex cost function $l_{\alpha}(\cdot):~ \mathbb{R}^{N-L} \rightarrow \mathbb{R}_{\ge 0}$ that penalizes large $\bm{\alpha}_{e}$.

A regularized version of~\cite[Theorem 4]{de2019formulas} was proposed in \cite{Doerfler.RegularizedLQR}. Here, the optimization problem is extended by the term $\lambda_{\alpha} \norm{\bm{\Pi}\bm{W}}$ in the cost function (or alternatively, by adding the constraint $\bm{\Pi}\bm{W} = \bm{0}$).
\section{SIMULATION EXAMPLE}\label{sec:Sim}
In this section, we present a simulation example and demonstrate the effectiveness of the proposed control algorithm.
Consider the double-mass-spring-damper system, shown in Fig. \ref{fig:doublemass}. 
\begin{figure}
    \centering
    \includegraphics[width=0.45\textwidth]{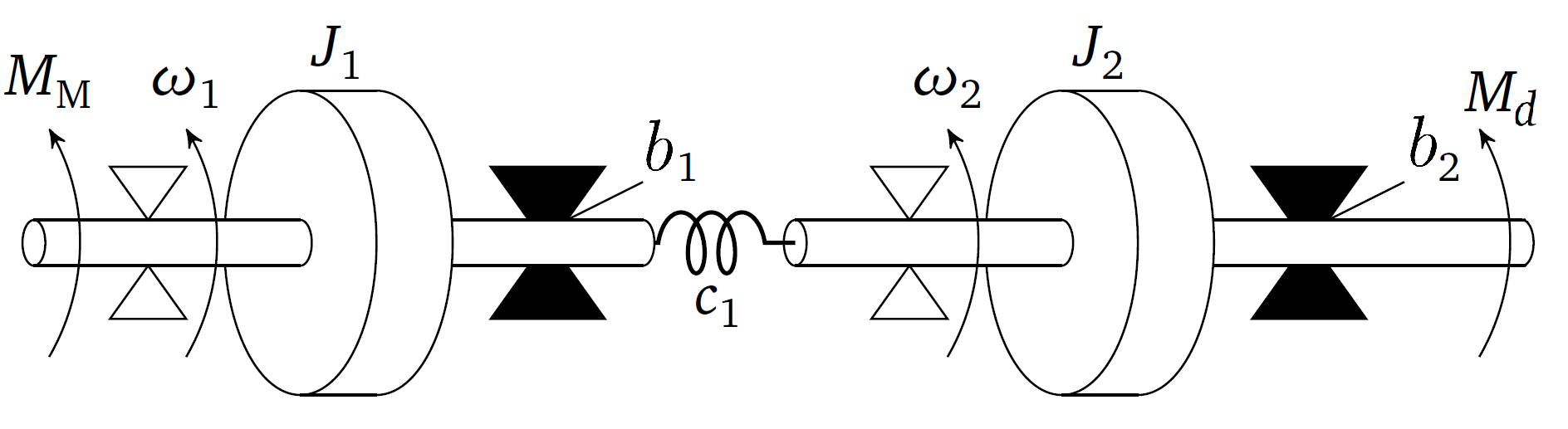}
    \caption{Double-mass-spring-damper model used for the simulation. The states are angles and angular velocities of the masses, the input is the torque on mass 1, and the disturbance is the torque on mass 2.}
    \label{fig:doublemass}
\end{figure}
The discrete-time dynamics of the system read
\begin{multline}\label{eq:evalsys_k}
    \bm{x}_{k+1} = \mat{0.952 & 0.048 & 0.094 & 0.002 \\
    	0.048 & 0.952 & 0.002 & 0.094 \\
    	-0.920 & 0.920 & 0.859 & 0.046 \\
    	0.920 & -0.920 & 0.046 & 0.858} \bm{x}_k \\ + 
    	\mat{0.048 \\ 0.001 \\ 0.936 \\ 0.016} u_k + 
    	\mat{0.001 \\ 0.048 \\ 0.016 \\ 0.94} d_k,
\end{multline}
following discretization of the continuous-time dynamics (see supplementary material)
with sampling time $\Delta t=\SI{0.1}{\second}$. The state vector $\bm{x} = \mat{\theta_1 & \theta_2 & \omega_1 & \omega_2}$ consists of the respective angles and angular velocities of the two masses. The input $u = M_{\textnormal{M}}$ is the torque on mass 1, the disturbance $d = M_\textnormal{d}$ is the torque on mass 2.
The goal is to stabilize the origin while minimizing cumulative stage costs $\sum_k\, \bm{x}_k^\top\bm{Q}\bm{x}_k+u_k R u_k$ with ${\bm{Q}=\text{diag}(10,10,1,1)}$, $R = 0.1$, and respecting input constraints $|u|\le1\si{.\newton\meter}$ and component-wise state constraints $-\bm{x}_{\text{max}} \le \bm{x} \le \bm{x}_{\text{max}}$, with $\bm{x}_{\text{max}} = [2\pi \,~ 2\pi \,~ 0.5\pi\si{.\second^{-1}} \,~ 0.5\pi\si{.\second^{-1}}]^{\top}$.
The disturbance $\gls*{dk}$ acting on mass 2 is sampled randomly from a zero-mean normal distribution with variance $\ol{d} = 0.1\si{.\newton\meter}$, truncated at the bounds of the interval $[-\ol{d},\,\ol{d}]$.
Online state measurements are corrupted by measurement noise $\bm{\mu}_k$, which is sampled from a zero-mean normal distribution with variance $\ol{\mu}= 0.015$, truncated such that $\norminf{\bm{\mu}} \le \ol{\mu}$.

\subsection{Exact offline data}
For the proposed control algorithm, the model \eqref{eq:evalsys_k} is assumed to be unknown.
To retrieve trajectory data as in Assumption~\ref{assum:trajData}, we simulate the system in open-loop for $50$ time-steps, with admissible inputs chosen at random.
With the trajectory data, we generate a persistently excited nominal state trajectory via~\eqref{eq:alphaz}, which lets us compute the LQR feedback gain $\bm{K}$ 
based on~\cite[Theorem 4]{de2019formulas} and the specified $\bm{Q}$, $R$. 
By Proposition~\ref{prop:compute_P}, we retrieve the associated matrix $\bm{P}$
for the definition of the terminal cost function.
For the probabilistic constraint tightening, we use 2924 disturbance measurements (Assumption~\ref{assum:DisturbanceBounds}a)
to solve the probabilistic optimization problems~\eqref{eq:stateprobcons_eta}, \eqref{eq:inputprobcons_mu}, with confidence $\beta = 0.99$ for a risk parameter $p \in \left[0.88,\,0.92\right]$.
The resulting constraint sets are further tightened to account for the bounded measurement noise (see Section~\ref{sec:tightcons}), for which we assume the bound $\ol{\mu}$ to be known (Assumption~\ref{assum:DisturbanceBounds}b).
We employ a Monte-Carlo simulation of 100 runs, each with different realizations of the online measurement noise and additive disturbance, a length of $50$ time steps, and fixed initial state $\bm{x}_0 = \mat{0.5\pi & 0.5\pi & 0\si{.\second^{-1}} & 0\si{.\second^{-1}}}^{\top}$.

All simulations are carried out in MATLAB, with polyhedral constraints specified based on MPT3~\cite{MPT3}, linear matrix inequalities solved with CVX~\cite{cvx}
, and the OCP~\eqref{eq:ocp} solved by MATLAB's \texttt{quadprog}.
With a prediction horizon set to $L = 10$, the mean computation time for the solution of the OCP was $1.66\si{.\milli\second}$ on an Intel I5-13600K.
The resulting trajectories are shown in Fig.~\ref{fig:trajNoisefreeMC}.
\begin{figure}
    \includegraphics[width=\columnwidth]{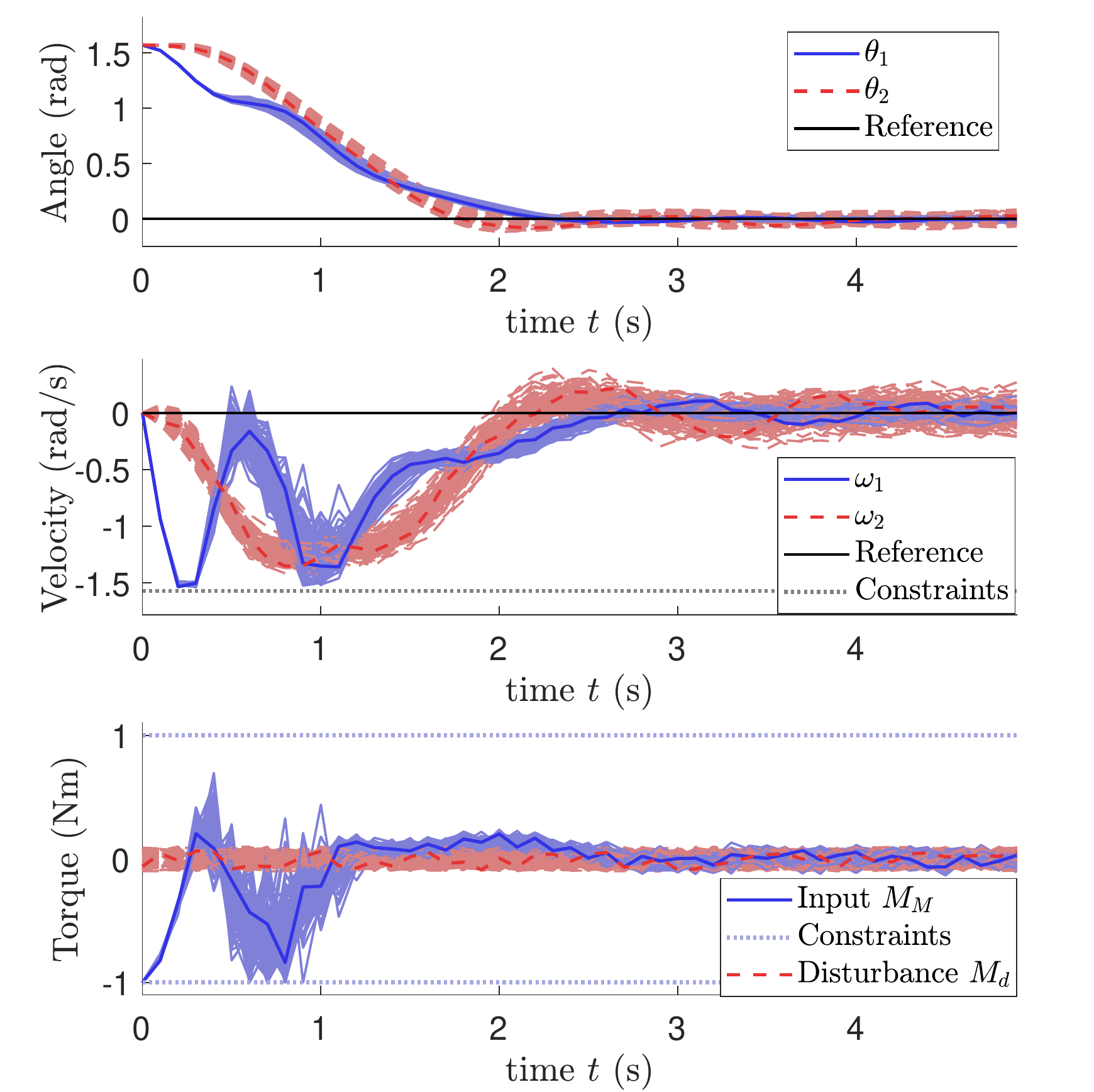} 
    \vspace{-15pt}
	\caption{Trajectories of 100 simulations for different realizations of online disturbance and measurement noise. The trajectory with median cost is highlighted.}
	\label{fig:trajNoisefreeMC}
\end{figure}
In all scenarios, the OCP remained feasible for all time steps and the proposed controller stabilized the system around the origin while respecting input and state constraints.
\subsection{Inexact offline data}
Although inexact data is not explicitly considered in the control design beyond the regularization techniques presented in Section~\ref{sec:Discussion}, we demonstrate the applicability of our proposed control algorithm in a more realistic setting.
To that end, we perturb all measured states and disturbances (including those used for the probabilistic constraint tightening) by measurement noise $\bm{\mu}$ as defined above. In other words, we now consider the same measurement noise to be present both online and offline, and also perturb disturbance samples, emulating a prior disturbance estimation procedure. 
In order to investigate the effect of noise inside the offline data, we sample 5 different realizations of offline noise and use it to perturb the same state and disturbance measurements. For each of the 5 sets of data, we use the same procedure as described above to compute tightened constraint sets, controller gain $\bm{K}$, and cost-to-go $\bm{P}$, but employ simple robustifications as discussed in Section~\ref{sec:Discussion}.
In the cost function of all 5 resulting OCPs, we add the term $5\norm{\bm{\Pi}\bm{\alpha}}_2^2$.

Trajectories for the 5 different resulting predictive controllers are shown in Fig.~\ref{fig:trajNoisy}. While the control performance varies significantly, all 5 predictive controllers stabilized the system around the origin while respecting state constraints.
\begin{figure}
    \includegraphics[width=\columnwidth]{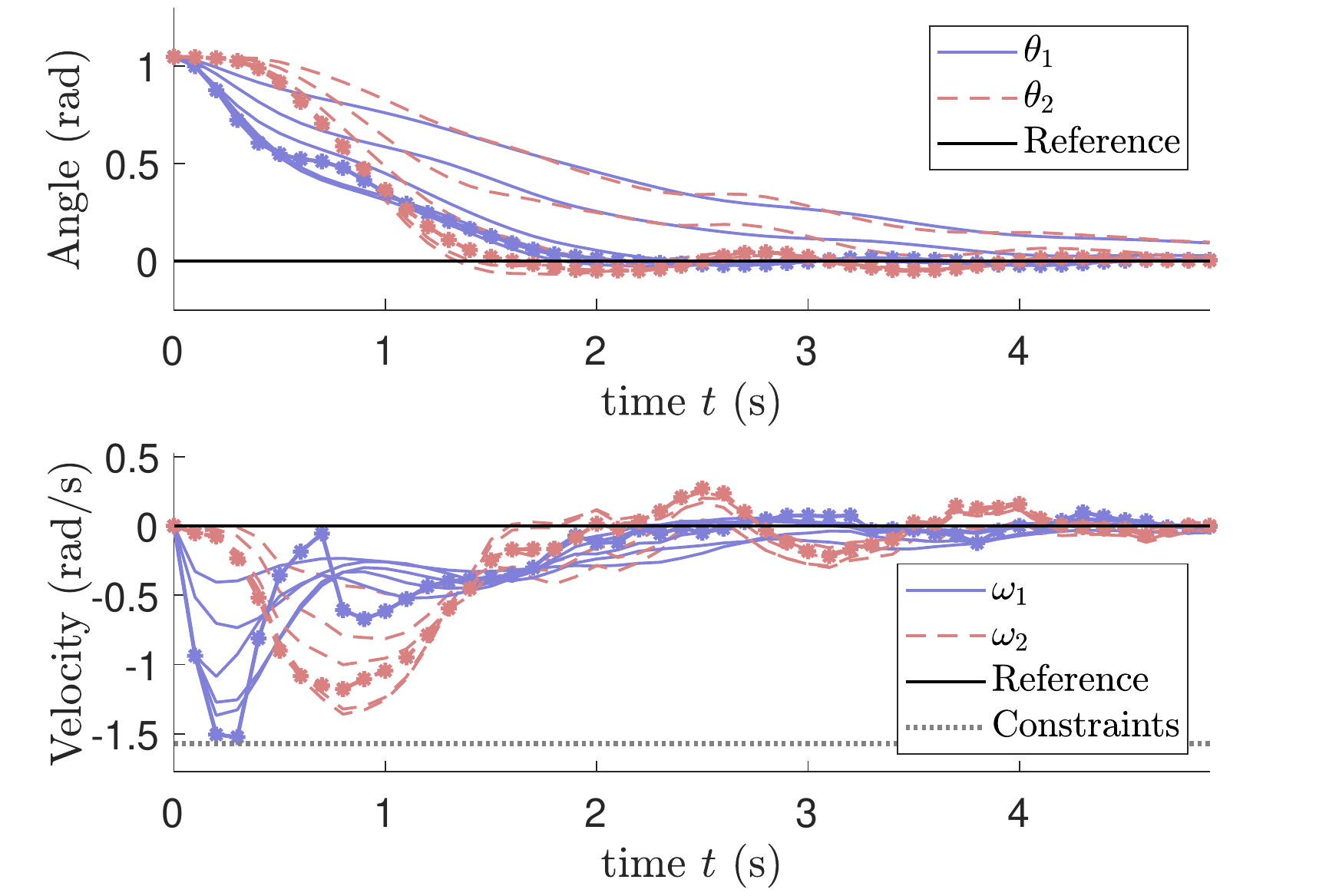}
    \vspace{-17pt}
	\caption{Median Trajectories based on offline state and disturbance data perturbed by measurement noise, each corresponding to different realizations of offline noise. Realizations of online disturbance and measurement noise are fixed. For comparison, the trajectory with exact offline data is shown and marked with $*$.}
	\label{fig:trajNoisy}
	\vspace{-8pt}
\end{figure}
For all 5 resulting data-driven predictive controllers, we again simulate 100 different realizations of online measurement noise and disturbance each. The resulting mean and standard deviation of the total costs can be seen in Table~\ref{table}. Table~\ref{table} also includes a comparison with the predictive controller based on unperturbed, noise-free, offline data.
\begin{table}
    \label{table}
    \tablefont
    \caption{Performance Comparison of Exact and Noisy (Inexact) State and Disturbance Data}
    \begin{tabular*}{20.25pc}{c|c|c|c|c}
        $[\bm{x}_0]_{[1,2]}$ & Data  & OCP infeasible & Mean Cost & STD Cost\\
        \hline
        $[\pi/2\,\pi/2]^\top$ & Exact & $0$ & $410.9$ & $9.8$\\
        $[\pi/2 \, \pi/2]^\top$ & Inexact & $14.7\%$ & $481.0239$ & $87.0$ \\
        $[\pi/3 \, \pi/3]^\top$ & Exact & $0$ & $153.6$ & $4.0$\\
        $[\pi/3 \, \pi/3]^\top$ & Inexact & $0.004\%$ & $195.5$ & $55.1$
    \end{tabular*}
    \vspace{-12pt}
\end{table}
Due to inaccurately computed constraint sets, the OCP was infeasible for $14.7\%$ of all 25000 time steps. In cases where the OCP was infeasible, we used pure state feedback $u_k=\bm{K}\gls*{xhk}$ (restricted to the specified input constraints) as a backup. As a consequence, the comparison of costs is not completely fair, especially since the application of pure state feedback lead to frequent state constraint violations (more than $14\%$ of time steps, compared to $0$ violations for the noise-free controller).
For a fair comparison, we chose a different initial state that is closer to the origin and reran all simulations. The results can be seen in the last two rows of the table. 
Here, the OCP was infeasible exactly once (out of 500 scenarios a 50 time steps, i.e., 25000 OCPs). 

With noise, mean costs increased by $27.3\%$, whereas median costs increased by only $7.5\%$ from $153.07$ to $164.47$. 
We observed that in our setting, the effect of noisy data is often small, and rarely very large. In hindsight, this further motivates a stochastic, as opposed to robust, future road to data-driven control.

\section{Conclusion}\label{sec:conclusion}
This work presented a novel data-driven stochastic predictive control scheme for the control of constrained LTI systems subject to stochastic disturbances and measurement noise. Our approach was motivated by the need to develop a lightweight and efficient predictive control strategy that leverages the deterministic fundamental lemma while also taking into account the probabilistic information of disturbances for performance-oriented control.
The key idea of the paper was to provide a data-driven
formulation of robust and stochastic tubes that leads to a recursively
feasible receding horizon OCP and a predictive controller that
renders the system input-to-state stable with respect to both
disturbance and measurement noise.
On the one hand, we provided a lightweight deterministic receding horizon OCP that allows for the data-driven control of stochastic systems without the need of a model.
On the other hand, we provided a data-driven tube-based formulation that allows for translation of state-of-the-art stochastic and robust tube-based MPC results into the data-driven domain.
\section*{Supplementaries / Appendix}
\subsection{On the sampling-based solution of Chance Constrained optimization}
If the chance constraint~\eqref{eq:ChanceOptimConstraint} is approximated by a finite number of samples, its satisfaction cannot be guaranteed with certainty, but parameters $p_\mathrm{min},p_\mathrm{max}$ specify a range $p\in [p_\mathrm{min},p_\mathrm{max}]$ for the risk parameter $p$ in~\eqref{eq:ChanceOptimConstraint} with confidence $1-\beta$ if
\begin{subequations}\label{eq:samplingNr}
    \begin{align}
        (1-p_\mathrm{max})\gls*{numSamp} - 1 + \sqrt{3(1-p_\mathrm{max})\gls*{numSamp}\ln{\frac{2}{\beta}}} &\leq \gls*{numDiscSamp},\\
        (1-p_\mathrm{min})\gls*{numSamp} - \sqrt{2(1-p_\mathrm{min})\gls*{numSamp}\ln{\frac{1}{\beta}}} &\geq \gls*{numDiscSamp},
    \end{align}
\end{subequations}
where $(1-p_\mathrm{min})\gls*{numSamp}\ge \gls*{numDiscSamp}$~\cite{Campi.2011}.
Given a fixed number of available samples $\gls*{numSamp}$, \eqref{eq:samplingNr} allows for a choice of the risk parameter range $p_\mathrm{min},p_\mathrm{max}$ and confidence $1-\beta$.

The approximated chance constrained optimization problem is then solved for each step $l\in\mathbb{N}_{[1,L]}$ of the prediction horizon as follows:
First, all state error samples $\bm{e}^{(i)}_{l}$ are computed from the available disturbance sequences $\bm{d}^{(i)}_{[0,\,l-1]}$, $i=0,\ldots \gls*{numSamp}$. Second, the error samples are multiplied from the left by $\bm{G}_x$ and the ordered set $\mathcal{G}_l=\{\bm{G}_x\bm{e}\,\mid\, \bm{e}=\bm{e}_{l}~\text{via \eqref{eq:error_dynamics} for some}~\bm{d}^{(i)}_{[0,\,l-1]} \}$ is constructed.
Last, the tightening parameter $\bm{\eta}_l$~\eqref{eq:DetEta} is computed component-wise based on
\begin{equation}
    [\bm{\eta}_l]_i = [\bm{g_x}]_i - q_l,
\end{equation}
where $q_l$ is the $(1-\frac{\gls*{numDiscSamp}}{\gls*{numSamp}})$-quantile of the ordered set $\mathcal{G}_l$.
\subsection{Computing LQR state feedback from data}
\begin{theorem}[LQR From Data{{~\cite[Theorem 4]{de2019formulas}}}] \label{th:stabilizing_K}
Consider system~(2) with $\gls*{dk}=\bm{0}$ and data persistently exciting of order $L+n_{\mathrm{x}}+1$ as in Assumption~1. Define the input and state data matrices ${\bm{U}:= \mat{\gls*{udata}_0 & \cdots & \gls*{udata}_{N-1}}}$, ${\bm{X} := \mat{\gls*{xdata}_0 & \cdots & \gls*{xdata}_{N-1}}}$, ${\bm{X}^{+} :=\mat{\gls*{xdata}_1 & \cdots & \gls*{xdata}_{N}}}$.
The optimal LQR state-feedback gain $\bm{K}$ for system~(2a) can be computed as $\bm{K} = \bm{U} \bm{W} \left(\bm{X} \bm{W}\right)^{-1}$, where $\bm{W} \in \mathbb{R}^{N \times n_{\mathrm{x}}}$ is the optimizer of
\begin{subequations}
	\begin{align*}
		& \underset{\bm{W}, \bm{V} }{\text{minimize}}~\text{trace}\left(\bm{Q} \bm{X} \bm{W}\right) + \text{trace}\left(\bm{V}\right)\\	
	    \text{s.t. }& \mat{\bm{V} & \bm{R}^{1/2} \bm{U} \bm{W} \\ \bm{W}^{\top} \bm{U}^{\top} \bm{R}^{1/2} & \bm{X} \bm{W}} \succeq 0,\\
		& \mat{\bm{X} \bm{W} - \bm{I}_{n_{\mathrm{x}}} & \bm{X}_{+} \bm{W} \\ \bm{W}^{\top} \bm{X}^\top_{+} & \bm{X} \bm{W}} \succeq 0,
	\end{align*}
\end{subequations}
where $\bm{Q}$, $\bm{R}$ are weighting matrices for the LQR problem.
\end{theorem}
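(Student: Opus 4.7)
The plan is to transfer the ISS property of the artificial measured-state system~\eqref{eq:system_aut_noisy}, already established in Lemma~\ref{lemma:iss_noisy}, to the true state dynamics~\eqref{eq:system_aut} by exploiting the measurement equation $\hat{\bm{x}}_k = \bm{x}_k + \bm{\mu}_k$. The overall strategy follows the comparison-function argument of~\cite[Thm.~3.3]{Roset.2008}, adapted to the sets of the present paper.

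First, I would verify the ingredients. By Lemma~\ref{lemma:iss_noisy} there exist $\beta\in\mathscr{KL}$ and $\gamma\in\mathscr{K}$ such that for every $\hat{\bm{x}}_0\in\mathcal{C}_L^{\infty}$ and every admissible disturbance sequence $\bm{w}_{[0,k]}\subset\mathcal{W}$,
\begin{equation*}
\norm{\hat{\bm{\phi}}(k,\hat{\bm{x}}_0,\bm{w}_{[0,k]})} \le \beta(\norm{\hat{\bm{x}}_0},k) + \gamma(\norm{\bm{w}_{[0,k]}}_{\infty}).
\end{equation*}
Next, I would check that the claimed region of attraction $\mathcal{X}_0 = \mathcal{C}_L^{\infty} \ominus \mathcal{M}$ is well defined, contains the origin in its interior (since $\mathcal{C}_L^{\infty}$ does and $\mathcal{M}$ is a bounded neighborhood of $\bm{0}$), and is robust positive invariant for~\eqref{eq:system_aut}. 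The latter holds because $\bm{x}_0\in\mathcal{X}_0$ implies $\hat{\bm{x}}_0 = \bm{x}_0 + \bm{\mu}_0 \in \mathcal{C}_L^{\infty}$ for every $\bm{\mu}_0\in\mathcal{M}$, so by robust positive invariance of $\mathcal{C}_L^{\infty}$ under~\eqref{eq:system_aut_noisy} the successor $\hat{\bm{x}}_1$ remains in $\mathcal{C}_L^{\infty}$, hence $\bm{x}_1 = \hat{\bm{x}}_1 - \bm{\mu}_1 \in \mathcal{C}_L^{\infty}\ominus\mathcal{M} = \mathcal{X}_0$, and the argument iterates.

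Then I would derive the bound on $\norm{\bm{x}_k}$. From $\bm{x}_k = \hat{\bm{x}}_k - \bm{\mu}_k$ and $\hat{\bm{x}}_0 = \bm{x}_0 + \bm{\mu}_0$, together with the standard inequality $\beta(r+s,k)\le \beta(2r,k)+\beta(2s,k)$ for $\beta\in\mathscr{KL}$, I obtain
\begin{equation*}
\norm{\bm{x}_k} \le \beta(2\norm{\bm{x}_0},k) + \beta(2\norm{\bm{\mu}_0},k) + \gamma(\norm{\bm{w}_{[0,k]}}_{\infty}) + \norm{\bm{\mu}_k}.
\end{equation*}
The first term has the form required by Definition~\ref{def:iss}. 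The remaining three terms are bounded by the magnitude of noise and disturbance encountered up to step $k$. Since $\bm{\mu}_k,\bm{\mu}_0\in\mathcal{M}$ and by the construction $\mathcal{W}=\bm{E}\mathcal{D}\oplus(-\bm{A}\mathcal{M})\oplus\mathcal{M}$ in~\eqref{eq:ext_dist_set}, a Lipschitz/linearity argument yields a class-$\mathscr{K}$ function $\gamma_{\mu}$ with $\norm{\bm{\mu}_{[0,k]}}_{\infty}\le \gamma_{\mu}(\norm{\bm{w}_{[0,k]}}_{\infty})$; absorbing this together with $\beta(2\norm{\bm{\mu}_0},0)$ into a new $\tilde{\gamma}\in\mathscr{K}$ and redefining $\tilde{\beta}(r,k):=\beta(2r,k)$ delivers
\begin{equation*}
\norm{\bm{x}_k}\le \tilde{\beta}(\norm{\bm{x}_0},k) + \tilde{\gamma}(\norm{\bm{w}_{[0,k]}}_{\infty}),
\end{equation*}
which is exactly~\eqref{eq:iss_cond} with region of attraction $\mathcal{X}_0$.

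The main obstacle is the bookkeeping with comparison functions: in particular, bounding the $\bm{\mu}_k$-term uniformly in $k$ by a class-$\mathscr{K}$ function of $\norm{\bm{w}_{[0,k]}}_{\infty}$ without inflating the ISS gain beyond what $\mathcal{W}$ permits, and ensuring that $\beta(2\norm{\bm{\mu}_0},k)$ can be merged into the disturbance gain (which uses boundedness of $\mathcal{M}$ and the fact that $\beta(\cdot,0)\in\mathscr{K}$). All these steps are routine once the framework is set up, and the paper offloads the detailed manipulations to the self-contained proof in the supplementary material; I would mirror that structure.
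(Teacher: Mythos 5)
There is a genuine gap here, and it is a fundamental one: your proposal does not address the statement at all. The statement to be proved is the data-driven LQR result (Theorem~\ref{th:stabilizing_K}, quoted from~\cite[Theorem 4]{de2019formulas}): that the optimal LQR state-feedback gain can be recovered as $\bm{K} = \bm{U}\bm{W}\left(\bm{X}\bm{W}\right)^{-1}$ from the optimizer $\bm{W}$ of the given semidefinite program built from raw input-state data. What you have written instead is a proof of input-to-state stability of the actual closed loop~\eqref{eq:system_aut}, i.e., of Theorem~\ref{prop:iss_real}: you transfer the ISS property of the artificial measured-state system~\eqref{eq:system_aut_noisy} (Lemma~\ref{lemma:iss_noisy}) to the true state via $\hat{\bm{x}}_k = \bm{x}_k + \bm{\mu}_k$ and comparison-function bookkeeping in the style of~\cite[Theorem 3.3]{Roset.2008}. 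That argument closely mirrors the paper's own self-contained proof of Theorem~\ref{prop:iss_real} in the supplementary material (including the region of attraction $\mathcal{C}_L^{\infty}\ominus\glsd*{epsk}$ and the splitting $\beta(r+s,k)\le\beta(2r,k)+\beta(2s,k)$), so as a proof of \emph{that} theorem it is essentially sound — but none of it bears on the SDP characterization of the LQR gain.

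A proof of the actual statement would proceed along entirely different lines, as in~\cite{de2019formulas}: first, persistency of excitation guarantees (via the rank condition on $\mat{\bm{U}^\top & \bm{X}^\top}^\top$) that every feedback interconnection can be parametrized from data, i.e., any gain $\bm{K}$ admits a matrix $\bm{W}$ with $\bm{U}\bm{W} = \bm{K}$ and $\bm{X}\bm{W} = \bm{I}_{n_{\mathrm{x}}}$, so that the closed-loop matrix satisfies $\bm{A}+\bm{B}\bm{K} = \bm{X}_{+}\bm{W}$ without knowledge of $(\bm{A},\bm{B})$. Second, one rewrites the LQR cost as $\text{trace}(\bm{Q}\bm{\Sigma}) + \text{trace}\left(\bm{K}\bm{\Sigma}\bm{K}^{\top}\bm{R}\right)$ with $\bm{\Sigma}\succeq\bm{I}$ the solution of the closed-loop Lyapunov inequality, substitutes the data-based parametrization (with the normalization $\bm{\Sigma}=\bm{X}\bm{W}$), and converts the resulting nonconvex conditions into the two stated LMIs by Schur complements, with $\bm{V}$ an epigraph variable bounding $\bm{R}^{1/2}\bm{U}\bm{W}(\bm{X}\bm{W})^{-1}\bm{W}^{\top}\bm{U}^{\top}\bm{R}^{1/2}$. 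Note also that the paper itself offers no proof of this theorem — it is invoked by citation — so the benchmark here is the argument in~\cite{de2019formulas}, and your submission, however competent as a proof of Theorem~\ref{prop:iss_real}, leaves the stated claim entirely unproved.
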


\subsection{Proof of Proposition 4} 
First, let us recall that $\bm{P} \succ 0$ is the solution of the Discrete Algebraic Ricatti Equation (DARE)
\begin{equation} \label{eq:DARE}
    \bm{A}^{\top} \bm{P} \bm{A} - \bm{P} - \bm{A}^{\top} \bm{P} \bm{B} \left( \bm{R} + \bm{B}^{\top} \bm{P} \bm{B} \right) \bm{B}^{\top} \bm{P} \bm{A} = -\bm{Q},
\end{equation}
and the corresponding LQR state feedback gain $\bm{K}$ is
\begin{equation} \label{eq:LQRgain}
    \bm{K} = - \left( \bm{R} + \bm{B}^{\top} \bm{P} \bm{B} \right)^{-1} \bm{B}^{\top} \bm{P} \bm{A}.
\end{equation}
By combining \eqref{eq:DARE} and \eqref{eq:LQRgain}, we retrieve the Lyapunov equation
\begin{equation} \label{eq:DARElyap}
    \left( \bm{A} + \bm{B} \bm{K} \right)^{\top} \bm{P}  \left( \bm{A} + \bm{B} \bm{K} \right) - \bm{P} = - \bm{Q}_{\bm{P}},
\end{equation}
with $\bm{Q}_{\bm{P}} = \bm{Q} + \bm{K}^{\top} \bm{R} \bm{K} \succ 0$.
Recall the definition of the terminal cost $J_{\text{f}}(\bm{x})=\bm{x}^{\top}\bm{P}\bm{x}$~(34) and the stage cost $J_{\text{s}}\left(\gls*{xk}, \bm{u}\right) = \gls*{xk}^{\top} \bm{Q} \gls*{xk} + \gls*{uk}^{\top} \bm{R} \gls*{uk}$~(33).
With ${\gls*{xk1} = \left(\bm{A} + \bm{B} \bm{K}\right)\gls*{xk}}$, the descent of the terminal cost $\Delta J_{\text{f}}(\gls*{xk})= J_{\text{f}}\left(\gls*{xk1}\right) - J_{\text{f}}\left(\gls*{xk}\right)$ can be written as
\begin{equation*}
    \Delta J_{\text{f}}(\gls*{xk}) = \gls*{xk}^{\top} \left( \left( \bm{A} + \bm{B} \bm{K} \right)^{\top} \bm{P}  \left( \bm{A} + \bm{B} \bm{K} \right) - \bm{P} \right) \gls*{xk}.
\end{equation*}
With $\gls*{uk}=\bm{K}\gls*{xk}$, $J_{\text{s}}\left(\gls*{xk}, \bm{K}\gls*{xk}\right) = \gls*{xk}^{\top} \bm{Q}_{\bm{P}} \gls*{xk}$ and the equality
$J_{\text{f}}\left(\bm{x}_{k+1}\right) - J_{\text{f}}\left(\gls*{xk}\right) = -J_{\text{s}}\left(\gls*{xk}, \bm{K}\gls*{xk}\right)$
from Proposition~4 is satisfied due to the Lyapunov equation~\eqref{eq:DARElyap}.

\subsection{Proof of Proposition 5} 
    Given an LQR gain $\bm{K}$, the solution $\bm{P}$ of the Lyapunov equation~\eqref{eq:DARElyap} can be retrieved as the minimizer of the convex program
    \begin{align*}
    	& \underset{\bm{P}}{\text{minimize}}~\text{trace}\left(\bm{P}\right) \\
    	\text{s.t. } & \left( \bm{A} + \bm{B} \bm{K} \right)^{\top} \bm{P}  \left( \bm{A} + \bm{B} \bm{K} \right) - \bm{P} + \bm{Q}_{\bm{P}} \prec 0.
    \end{align*}
    Based on \cite[Theorem~2]{de2019formulas}, the closed-loop system dynamics $\bm{A} + \bm{B} \bm{K}$ can be substituted with the equivalent data-driven representation $\bm{X}_{+}\tilde{\bm{W}}$ for some $\tilde{\bm{W}}$ that satisfies $\bm{U}\tilde{\bm{W}} =\bm{K}$ and $\bm{X}\tilde{\bm{W}} =\bm{I}_{n_{\mathrm{x}}}$.

\subsection{Self-contained proof of Theorem 2} 
Since ${\bm{x}_0 \in \mathcal{C}_L^{\infty} \ominus \glsd*{epsk}}$ and ${\gls*{xhk} = \gls*{xk} + \gls*{epsk}}$, we have ${\hat{\bm{x}}_0 \in \mathcal{C}_L^{\infty}}$. 
Denote with $\hat{\bm{\phi}}\left(k,\hat{\bm{x}}_0, \bm{w}_{[0,k]}\right)$ the trajectory of~(54) for initial state $\hat{\bm{x}}_0$ and disturbance sequence $\bm{w}_{[0,k]}$, so that $\gls*{xhk}=\hat{\bm{\phi}}\left(k,\hat{\bm{x}}_0, \bm{w}_{[0,k]}\right)$.
Following Lemma~7, by definition of input-to-state stability, we can bound $\hat{\bm{\phi}}$ with functions $\hat{\beta}\in\mathscr{KL}$, $\hat{\gamma}\in\mathscr{K}$, such that
\begin{equation*} \label{eq:sol_noisy}
\norm{\hat{\bm{\phi}}\left(k,\hat{\bm{x}}_0, \bm{w}_{[0,k]}\right)} \le \hat{\beta}\left(\norm{\hat{\bm{x}}_0},k\right) + \hat{\gamma}\left(\norm{\bm{w}_{[0,k-1]}}_\infty\right)
\end{equation*}
for all initial states ${\hat{\bm{x}}_0 \in \mathcal{C}_L^{\infty}}$ and disturbance sequences $\bm{w}_{[0,k]}$, $\bm{w}_i\in\glsd*{wk}$.
Now denote the trajectory of the actual closed loop system~\eqref{eq:system_aut}
subject to measurement noise sequence $\gls*{noise}_{[0,k]}$ as $\gls*{xk}={\bm{\phi}\left(k,\bm{x}_0, \bm{d}_{[0,k]}, \gls*{noise}_{[0,k]}\right)}$.
From the above, a bound for the actual trajectory follows as
\begin{align*}
    &\norm{\bm{\phi}\left(k,\bm{x}_0, \bm{d}_{[0,k]}, \gls*{noise}_{[0,k]}\right)} \le \norm{\hat{\bm{\phi}}\left(k,\hat{\bm{x}}_0, \bm{w}_{[0,k]}\right)} + \norm{\gls*{noise}_0} \\	&\le \hat{\beta}\left(\norm{\hat{\bm{x}}_0},k\right) + \hat{\gamma}\left(\norm{\bm{w}_{[0,k]}}_\infty\right) + \norm{\gls*{noise}_0}.
\end{align*}
From here, condition~(8) for ISS follows by properties of the comparison functions~\cite{Kellett.2014}. 
Since $\hat{\beta}\in\mathscr{KL}$, there exist functions ${\beta \in \mathscr{KL}}$, ${\gamma_1 \in \mathscr{K}}$ such that
\begin{align*}
    \hat{\beta}\left(\norm{\hat{\bm{x}}_0},k\right) &\le \hat{\beta}\left(\norm{\bm{x}_0} + \norm{\gls*{noise}_0},k\right)\\
    &\le \beta\left(\norm{\bm{x}_0},k\right) + \gamma_1\left(\norm{\gls*{noise}_0}\right).
\end{align*}
Since $\hat{\gamma}\in\mathscr{K}$, there exist functions $\gamma_2, \gamma_3, \gamma_4 \in  \mathscr{K}$ such that
\begin{multline*}
\hat{\gamma}\left(\norm{\bm{w}_{[0,k]}}_\infty\right) =
\hat{\gamma}\left(\norm{\gls*{Bd} \bm{d}_{[0,k]} - \bm{A}\gls*{noise}_{[0,k]} + \gls*{noise}_{[0,k+1]}}_\infty\right) \\
\le \hat{\gamma}\left(\norm{\gls*{Bd} \bm{d}_{[0,k]}}_\infty +
\norm{\bm{A}\gls*{noise}_{[0,k]}}_\infty+\norm{\gls*{noise}_{[0,k+1]}}_\infty\right)\\
\le \gamma_2\left(\norm{\gls*{Bd} \bm{d}_{[0,k]}}_\infty\right)+\gamma_3\left(\norm{\bm{A}\gls*{noise}_{[0,k]}}_\infty\right)
 + \gamma_4\left(\norm{\gls*{noise}_{[0,k+1]}}_\infty\right).
\end{multline*}
Since there is some function $\gamma \in \mathscr{K}$ for which
\begin{multline*}	            
    \norm{\gls*{noise}_0}+\gamma_1\left(\norm{\gls*{noise}_0}\right) + \gamma_2\left(\norm{\gls*{Bd} \bm{d}_{[0,k]}}_\infty\right)
    +\gamma_3\left(\norm{\bm{A}\gls*{noise}_{[0,k]}}_\infty\right) \\
     + \gamma_4\left(\norm{\gls*{noise}_{[0,k+1]}}_\infty\right) 
    \le \gamma\left(\norm{\bm{d}_{[0,k]}}_\infty + \norm{\gls*{noise}_{[0,k]}}_\infty\right),
\end{multline*}
the actual trajectory is bounded as
\begin{multline*}
\norm{\bm{\phi}\left(k,\bm{x}_0, \bm{d}_{[0,k]}, \gls*{noise}_{[0,k]}\right)} \le \beta\left(\norm{\bm{x}_0},k\right) +\\
+ \gamma\left(\norm{\bm{d}_{[0,k]}}_\infty+\norm{\gls*{noise}_{[0,k]}}_\infty\right)
\end{multline*}
and system~\eqref{eq:system_aut} is ISS by Definition~4.
\subsection{Double-mass-spring system}
The continuous-time dynamics of the double-mass-spring system are given as
\begin{equation}\label{sys:cont_doublemass}
    \dot{\bm{x}} = \mat{0 & 0 & 1 & 0 \\
    	0 & 0 & 0 & 1 \\
    	-\frac{c_1}{J_1} & \frac{c_1}{J_1} & -\frac{b_1}{J_1} & 0 \\
    	\frac{c_1}{J_2} & -\frac{c_1}{J_2} & 0 & -\frac{b_2}{J_2}} \bm{x} + 
    	\mat{0 \\ 0 \\ \frac{1}{J_1} \\ 0} u+ 
    	\mat{0\\ 0 \\ 0 \\ \frac{1}{J_2}} d.
\end{equation}
with moments of inertia $J_1=J_2=\SI{0.1}{\kilo\gram^2}$, damping ratio $b_1=b_2=\SI{0.1}{\newton\meter / (rad/\second)}$ and spring constant $c_1=\SI{1}{\newton/rad}$.

\bibliography{mybib}

\begin{IEEEbiography}[{\includegraphics[width=2.5cm,height=3.2cm,clip,keepaspectratio]{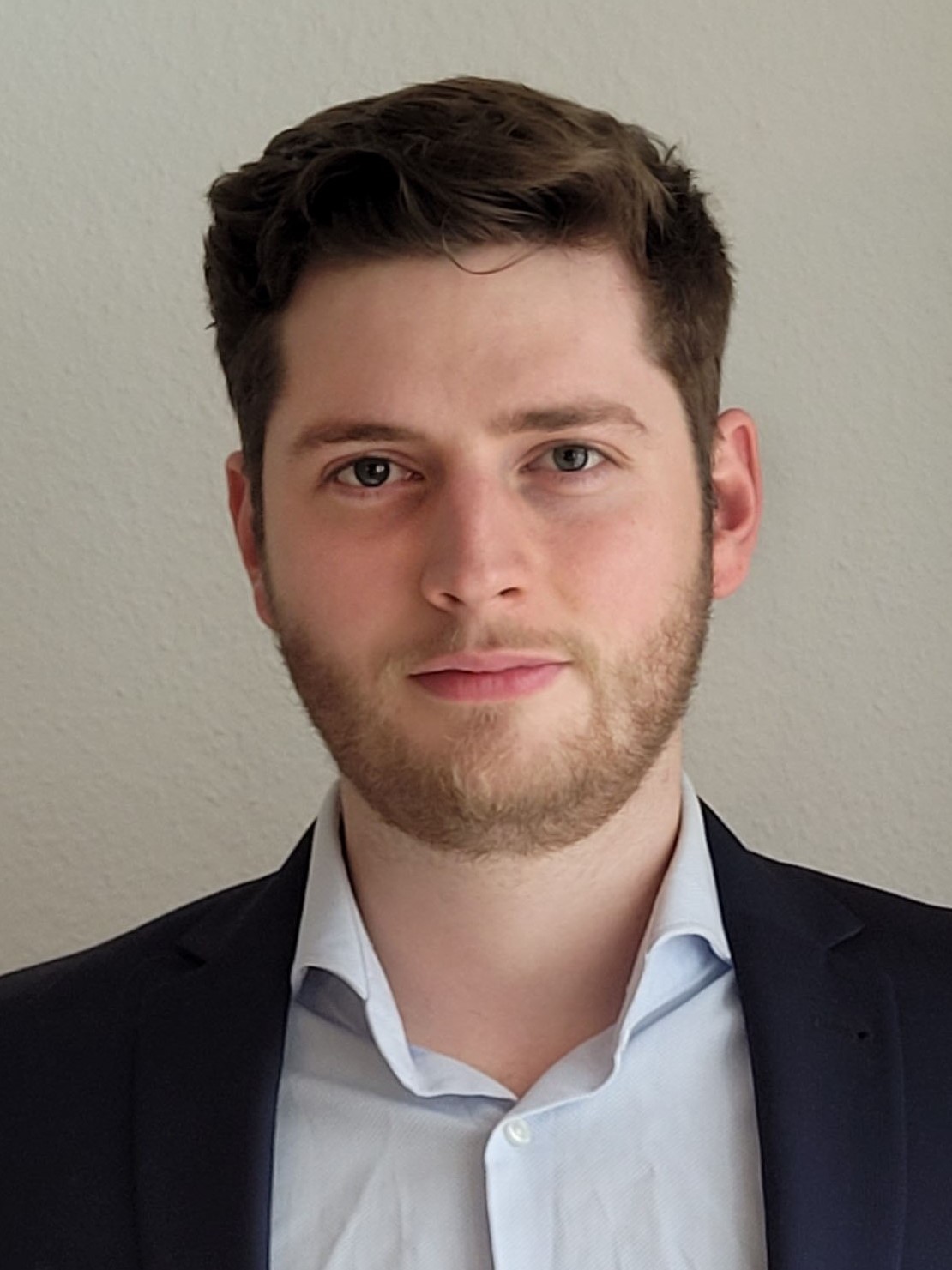}}]{Sebastian Kerz}{\space}(Graduate Student Member, IEEE) received his M.Sc. Degree in electrical engineering and information technology from Technical University Darmstadt, Germany, in 2019. During his studies, he was a visiting student at the University of Toronto in 2017. He is currently a Research Associate and Ph.D. student with the Chair of Automatic Control Engineering at the Technical University of Munich. His research interests include data-driven control, optimal control, and the intersection between control theory and machine learning.
\end{IEEEbiography}

\begin{IEEEbiography}[{\includegraphics[width=1in,height=1.25in,clip,keepaspectratio]{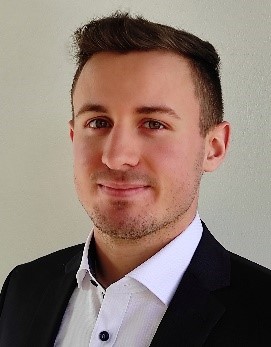}}]{Johannes Teutsch}{\space} received his B.Sc. and M.S. degree in electrical engineering and information technology from the Technical University of Munich (TUM), Germany in 2019 and 2021, respectively. He is currently a Research Associate and Ph.D. student with the Chair of Automatic Control Engineering at the Technical University of Munich. His research revolves around (model) predictive control, with a focus on data-driven predictive control in uncertain environments.
\end{IEEEbiography}

\begin{IEEEbiography}[{\includegraphics[width=1in,height=1.25in,clip,keepaspectratio]{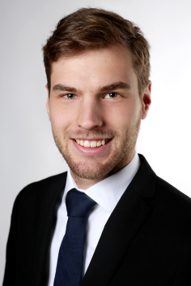}}]{Tim Br\"udigam}{\space} (Member, IEEE) received his B.Sc. and M.Sc. degree in electrical engineering from the Technical University of Munich (TUM), Germany in 2014 and 2017, respectively. In 2022, he obtained his Ph.D. degree from TUM.
During his studies, he was a research scholar at the California Institute of Technology in 2016. He joined the Chair of Automatic Control Engineering at TUM as a research associate in 2017 and was a visiting researcher at the University of California, Berkeley in 2021. His main research interest lies in advancing Model Predictive Control (MPC), especially stochastic MPC, with possible application in automated driving.
\end{IEEEbiography}

\begin{IEEEbiography}[{\includegraphics[width=1in,height=1.25in,clip,keepaspectratio]{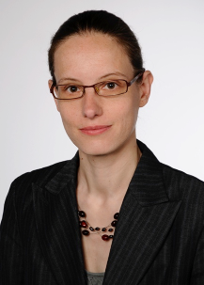}}]{Marion (nee Sobotka) Leibold}{\space} received the
Diploma degree in applied mathematics (2002) from
Technical University of Munich, Germany. Further
she received the Ph.D. (2007) and Habilitation
(2020) in control theory from Technical University
of Munich, Germany. She is currently a Senior Researcher with the Faculty of Electrical Engineering
and Information Technology, Institute of Automatic
Control Engineering, Technical University of Munich. Her research interests include optimal control
and nonlinear control theory, and the applications to
robotics and automated driving.
\end{IEEEbiography}

\begin{IEEEbiography}[{\includegraphics[width=1in,height=1.25in,clip,keepaspectratio]{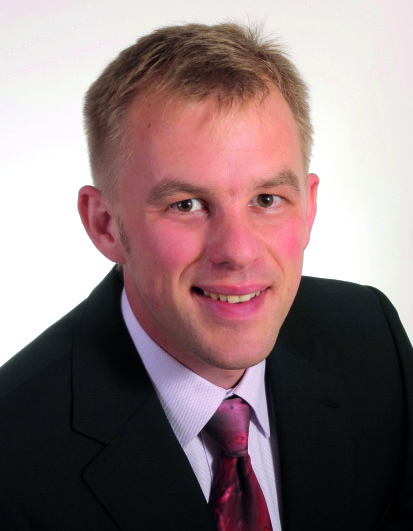}}]{Dirk Wollherr}{\space}(Senior Member, IEEE) received the Dipl.-Ing. (2000), Dr.-
Ing. (2005) and Habilitation (2013) degree in electrical engineering from Technical University of
Munich, Germany. He is a Senior Researcher in
robotics, control and cognitive systems. His research interests include automatic control, robotics
autonomous mobile robots, human-robot interaction
and socially aware collaboration and joint action. He
is on the management boards of several conferences
and Associate Editor of the IEEE Transactions on
Mechatronics.
\end{IEEEbiography}
\end{document}